\pdfoutput=1
\documentclass[preprint,12pt,3p]{elsarticle}
 
\usepackage{microtype}

\usepackage[utf8]{inputenc}
\usepackage[english]{babel}
\usepackage{amssymb}

\usepackage{graphicx,amssymb,amsmath,amsthm,xcolor}
\usepackage[utf8]{inputenc}

\usepackage{sidecap}  
\usepackage{comment}
\usepackage{caption}

\usepackage{algorithmic}
\usepackage{subfig}
\usepackage{enumitem}
\usepackage[export]{adjustbox}
\usepackage{booktabs}
\usepackage[algoruled]{algorithm2e}
\SetKwInOut{Input}{input}
\SetKwInOut{Output}{output}
\SetKw{KwT}{True} 
\SetKw{KwF}{False}
\SetKw{KwOr}{or}
\SetKw{KwAnd}{and}
\DontPrintSemicolon
\setlength{\algomargin}{2em}
\journal{Theoretical Computer Science}



\newtheorem{obs}{Observation}
\newtheorem{lemma}{Lemma}
\newtheorem{theorem}{Theorem}

\newtheorem{corollary}{Corollary}[theorem]

\graphicspath{{figures/}}

\newcommand{\etal} {{\it et al.}\xspace} 

\newcommand{\mama}{{\tt MaxMaxArea}\xspace}
\newcommand{\mami}{{\tt MaxMinArea}\xspace}
\newcommand{\mima}{{\tt MinMaxArea}\xspace}
\newcommand{\mimi}{{\tt MinMinArea}\xspace}

\newcommand{\niceremark}[3]{\textcolor{name}{\textsc{\textbf{#1 #2: }}}\textcolor{red}{\textsf{#3}}\xspace}

\definecolor{name}{rgb}{0.5,0.0,0.0}
\newcommand{\maarten}[2][says]{\niceremark{Maarten}{#1}{#2}}


\renewcommand{\paragraph}[1]{\smallskip\noindent{\bfseries #1.}}




\begin{document}
\begin{frontmatter}

\title{Largest and Smallest Area Triangles on  Imprecise Points\footnote{A preliminary version of this paper was presented at EuroCG2017.}}

\author[1]{Vahideh Keikha}
\ead{va.keikha@aut.ac.ir}
\author[2]{Maarten L\"offler}
\ead{m.loffler@uu.nl}
\author[1]{Ali  Mohades}
\ead{mohades@aut.ac.ir}
\address[1]{Department of Mathematics and Computer Science, Amirkabir University of  Technology, Tehran, Iran}
\address[2]{Department of Information and Computing Sciences, Utrecht University, Utrecht, The Netherlands}



\begin{abstract}

Assume we are given a set of parallel line segments in the plane, and we wish to place a point on each line segment such that the  resulting point set  maximizes or minimizes the area of the largest or smallest triangle in the set. 
We analyze the complexity of the four resulting computational problems, and we show that three of them admit polynomial-time algorithms, while the fourth is NP-hard.
Specifically, we show that
maximizing the largest triangle can be done in $O(n^2)$ time (or in $O(n \log n)$ time for unit segments);
minimizing the largest triangle can be done in $O(n^2 \log n)$ time;
maximizing the smallest triangle is NP-hard; but
minimizing the smallest triangle can be done in $O(n^2)$ time. We also discuss to what extent our results can be generalized to polygons with $k>3$ sides.
\end{abstract}
\begin{keyword}
	 Computational Geometry;  Imprecise points; Maximum area triangle; Minimum area triangle, $k$-gon.
\end{keyword}
\end{frontmatter}


\section{Introduction}
In this paper we study two classical problems in  computational geometry in an imprecise context. 
Given a set $P$ of $n$ points in the plane,  
let the {\em largest-area triangle} $T_{max}(P)$ and the {\em smallest-area triangle} $T_{min}(P)$ be defined by three points of $P$ that form the triangle with the largest or smallest area, respectively. 
When $P$ is  uncertain, the areas of $T_{max}$ and $T_{min}$ are also uncertain. Our aim is to compute tight bounds on these areas given bounds on the locations of the points in $P$. As a natural extension, we also study all the above questions for  $k$-gons instead of triangles.

\paragraph{Motivation}
Data uncertainty is paramount in present-day geometric computation.
Many different ways to model locational uncertainty have been introduced over the past decades, and can be mainly categorized by whether the uncertainty is {\em discrete} or {\em continuous}~\cite {jlp-gcip-11}, and whether we assume the uncertainty is governed by an underlying probability distribution or not~\cite {surimost,agarwalconvex}.
In this paper, we assume the uncertainty in each point is captured by a continuous set of possible locations; we call such a set an {\em imprecise} point.
This model can be traced back to early attempts to create robust geometric algorithms in the 80s~\cite {Salesin}, and has attracted considerable attention since~\cite {AHN2013253,jup,kl-gmiphd-09,lofflerphdthesis,Myers:2010:UGD:1839778.1839801,farnaz}.
Nagai and Tokura~\cite {nt-teb-00} first introduced the idea of analyzing the computational complexity of computing tight error bounds on an output value based on a set of imprecise points, and L\"offler and van Kreveld formalized this notion to calculate bounds on the area of the convex hull~\cite{39}.

The special case of using vertical line segments as uncertainty regions has received special attention in the literature.
First, it is a natural first step towards general 2- or 3-dimensional uncertainty regions. Many geometric problems where first studied on segments and later generalized to squares or disks, and this is also true for imprecision~\cite {mountproximity,58,39}.
However, 1-dimensional uncertainty already occurs naturally in several application areas.
For instance laser scanners output points on a known line but the distance between the scanner and the scanned object has an error;
this is especially significant when the distance is large, such as in LIDAR (Light Detection and Ranging) data~\cite{tasdizen2003feature} leading to distinct geometric challenges~\cite {gls-si15dt-10}.

The same geometric problems also show up in, and are sometimes studied from the point of view of, different applications.
In imaging, the problem of whether a set of vertical (or horizontal) scan lines are stabbed by a geometric object has been studied extensively several decades ago.
When the object is to test whether the set of segments may be {\em stabbed} by a convex polygon~\cite {gsch},
this is equivalent to asking whether a set of imprecise points, modeled as vertical segments, could possibly be in convex position~\cite {56,kgonsolved}.
Computing the largest inscribed/inscribing polygons plays an important role in heuristic motion planning~\cite{Fleischer:1992}; robots increasingly operate in uncertain environments.



\paragraph {Related work}
There is a large body of research on the existence and computation of {\em empty} $k$-gons in a set of points~\cite{41,43}; the best known time for finding such a $k$-gon for arbitrary $k$ is $O(T(n))$, where $T(n)$ is the number of empty triangles in the set of points---we know this value can vary from $\Omega(n^2)$   to $O(n^3)$ \cite{42}. \par

There has been some work  which focuses on constant values of $k$.  
The special case $k=2$ is the classical problem of finding the diameter of a given set of points.  Shamos  presented an algorithm for the diameter problem which can find the diameter in linear time, if the convex hull of the points is given~\cite{shd}. 

Dobkin and Snyder \cite{45} claimed  a linear-time algorithm for finding the largest-area triangle inscribed in a convex polygon. Their claim has recently been shown to be incorrect by Keikha \etal~\cite{kluv}; see also~\cite {jin,kallus}.
There exists, however, another linear-time algorithm for this problem~\cite{chandran}, originally intended to solve the parallel version of the problem. 


Boyce \textit{et al}. \cite{48} presented a dynamic programming algorithm for the problem of finding the largest possible area and perimeter convex $k$-gon on a given set $P$ of  $n$ points in 
$O(kn \log n+ n \log^2 n)$ time and linear space, that  
 Aggarwal \textit{et al}. \cite{msearch} improved to $O(kn+ n \log n)$ by using a matrix search method.
Both algorithms still rely on the correctness of the Dobkin and Snyder algorithm for triangles~\cite {45}, and hence, also fail by the analysis in~\cite{kluv}.
There is also an $\Omega(n \log n)$ time lower bound for finding the maximum possible area and perimeter inscribed $k$-gon \cite{52}. 

The problem of finding the {\em smallest} possible area and perimeter $k$-gon has received considerable attention as well. 
Dobkin \textit{et al}. presented an $O(k^2n \log n+k^5 n)$ time algorithm \cite{newdob} for finding minimum perimeter $k$-gons. Their algorithm was improved upon by Aggarwal \textit{et al}. to $O(n \log n +k^4 n)$ time \cite{49}.  
Eppstein \textit{et al}. \cite{50} studied three problems: finding the smallest possible $k$-gon, finding the smallest empty $k$-gon, and finding the smallest possible convex polygon on exactly $k$ points, where the smallest means the smallest possible area or perimeter. They presented a dynamic programming approach for these problems in   $O(kn^3)$ time and $O(kn^2)$ space, that can also solve the  maximization version of the problem as well as  some other related problems. Afterwards, Eppstein  \cite{51} presented an algorithm that runs in $O(n^2 \log n)$ time and $O(n \log n)$ space for  constant values of $k$. 

Finally, we mention a large body of related research, such as  stabbing problems or convex transversal problems, and proximity problems. In the two first aforementioned areas, the general problem is that we are given a set of geometric input objects and we want to find another object which intersects with all or most of the given input objects, such that  some measure on that object is optimized \cite{56,54,58,gsch,kgonsolved,39}.
 Also the input of such problems can be considered as a set of colored points, e.g., Daescu \textit{et al.} \cite{72} studied the following related problem: we are given a set of $n$ points with $k <n $ colors, and we want to find the convex polygon with  the smallest possible perimeter such that the polygon  covers at least one of the given colors. They presented an $\sqrt{2}$- approximation algorithm  with $O(n^2)$ time for this problem,  and proved that this problem is NP-hard if $k$ is a part of the input. 
 
  Similarly, there are many studies in  proximity problems,  where the general question is the following: given a set of $n$
 points in the plane $P = \{p_1, . . . , p_n\}$, for each point $p_i$ find a pair $p_j , p_k$, where $ i \neq j\neq k$, such that a defined measure $\alpha$ on the triplet  $p_i,p_j,p_k$ is maximized or  minimized. In ~\cite{mukhopadhyay2006all} the authors studied the problem of computing the maximum value of $\alpha$, where  $\alpha$ defined by the  distance of each point $p_i \in P$ from a segment $p_jp_k$, where the distance from a point $p_i$ to a segment 
 is the minimum distance from $p_i$ to this segment. Their algorithm runs in $O(nh+n \log n)$, where $h$ is the number of vertices on the convex hull of $P$. Their running time improved to $O(n \log n)$ ~\cite{drysdale2008nlogn}.  
 Recently the problem of computing all the largest/smallest area/perimeter triangles with a vertex at $p_i \in P$ for $i=1,...,n$ was studied in~\cite{mukhopadhyay2013all} (see also the references there), where  the presented running times for the largest/smallest area triangle problems were quadratic in the worst case (we also achieve similar running times with uncertain input). 

It is natural to  ask how uncertainty  of data affects the solutions of those problems. 
Motivated by this, we are  interested in computing some lower bound and upper bound on the area of the smallest and largest $k$-gons with  vertices on a given set of imprecise points modeled as parallel line segments.

  In the imprecise context, L\"offler and van Kreveld studied the diameter problem ($k=2$) on a given set of imprecise points modeled as squares or as disks, where the problem is choosing a point in each square or disk such that the diameter of the resulting point set is
as large or as small as possible. 
They presented an $O(n \log n)$ time algorithm for finding the maximum/minimum possible diameter on a given set of squares, and presented an $O(n \log n)$ time algorithm for the largest possible diameter, and an approximation scheme with $O(n^{ \frac{3\pi}{ \sqrt{ \varepsilon} }} )$ time  for the smallest possible diameter on a given set of disks.

These authors also   computed some lower and upper bounds on the smallest/largest area/perimeter convex hull, smallest/largest area bounding box, smallest/largest smallest enclosing circle, smallest/largest width and closest pair, where the input was imprecise, and modeled by  convex regions which include line segments, squares or disks~\cite{39,diam}. 
The running times of the presented algorithms on the convex hull problem vary from $O(n \log n)$ to $O(n^{13})$. Their results on computing some bounds on the maximum area  convex hull were later improved upon by Ju \textit{et al.}~\cite{jup}.





\begin{figure}
	\centering
\includegraphics{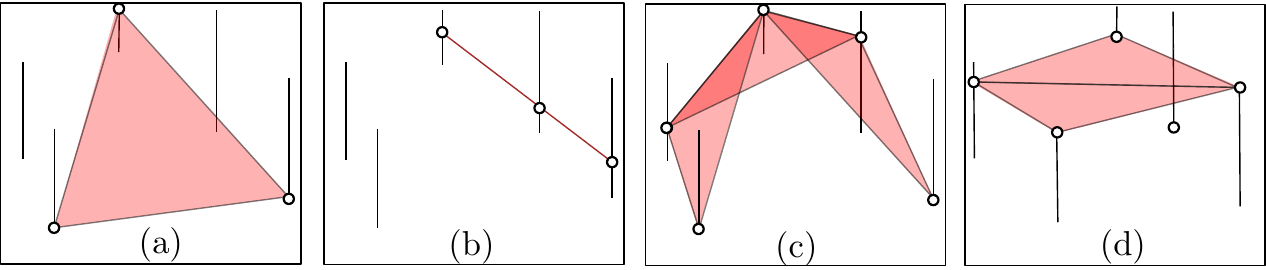}

\caption{Problem definition and optimal solutions. (a) \mama: the largest possible area  triangle. (b) \mimi: the smallest possible area  triangle, here degenerate.  (c) \mami: the largest smallest-area triangle, here determined by three triangles simultaneously. (d) \mima: the smallest largest-area triangle, here determined by two triangles simultaneously.}
\label{mnmxmxmn}

\end{figure}

\paragraph{Contribution}  \label{sec:problemdef}
In this paper, we consider the problems of computing the largest-area triangle and smallest-area triangle under data imprecision.
 We are given a set  $L=\{l_1, l_2,\ldots, l_n\}$ of imprecise points modeled as disjoint parallel line segments, that is, every segment $l_i$ contains exactly one point $p_i \in l_i$. This gives a point set $P=\{p_1, p_2,\ldots, p_n\}$, and we want to find the largest-area triangle or smallest-area triangle  in $P$, $T_{max}$ and $T_{min}$. But because $L$ is  a set of imprecise  points, we do not know where $P$ is, and the areas of these triangles could have different possible values for each instance $P$.
 We are interested in computing a tight lower and upper bound on these values. Hence, the problem becomes to place a point on each line segment such that the  resulting point set maximizes or minimizes the size of the largest or smallest possible area triangle.  Therefore four different problems need to  be considered (refer to Figure~\ref{mnmxmxmn}).

\begin{itemize}
\item \mama \quad What is the largest possible area of $T_{max}$?
\item \mimi \quad What is the smallest possible area of $T_{min}$? 
\item \mami \quad What is the largest possible area of $T_{min}$? 
\item \mima \quad What is the smallest possible area of $T_{max}$?

\end{itemize}


 \paragraph{Results}
   We obtain the following results for triangles.
  
  \begin {itemize} [noitemsep]
\item For a given set of equal length parallel line segments, \mama can be solved in  $O(n \log n)$ time\footnote {In a preliminary version of this paper, we claimed a faster method, which relied on the correctness of the Dobkin and Snyder  algorithm for finding the largest-area inscribed triangle~\cite{45}. That paper has since been shown incorrect~\cite{kluv}, and our present results reflect this. } (Section~\ref{sec:eqlen}).
\item  For arbitrary length parallel line segments, \mama can be solved in  $O(n^2)$ time (Section~\ref{sec:alen}).
\item \mimi can be solved  in  $O(n^2)$ time (Section~\ref{sec:minmin}).
\item \mami is NP-hard (Section~\ref{sec:maxmin}). 
\item  For  arbitrary length parallel line segments with fixed points as the leftmost and rightmost segments, \mima can be solved  in  $O(n \log n)$ time (Section~\ref{sec:fixed}).
\item  For  arbitrary length  parallel line segments \mima can be solved  in  $O(n^2 \log n)$ time (Section~\ref{sec:GC}).
\end{itemize}

   We also discuss to what extent our results can be generalized to polygons with $k>3$ sides.\footnote{As said before, for the case $k=2$, the problem becomes computing the smallest/largest diameter of $L$ which is studied in~\cite{diam}. } As we discuss in Section~\ref{sec:generalk}, not all problems are well-posed anymore, but we can ask for the possible sizes of the largest convex shape with {\em at most $k$ vertices}. We obtain the following results.

  \begin {itemize} [noitemsep]

\item  \mama for {\em at most} $k$ points can be solved in $O(kn^3)$ time (Section~\ref{sec:kmama}).
\item  \mima for {\em at most} $k$ points can be solved  in  $O(k n^8 \log n)$ time (Section~\ref{sec:kmima}).
\end{itemize}
\begin{figure}
	\includegraphics{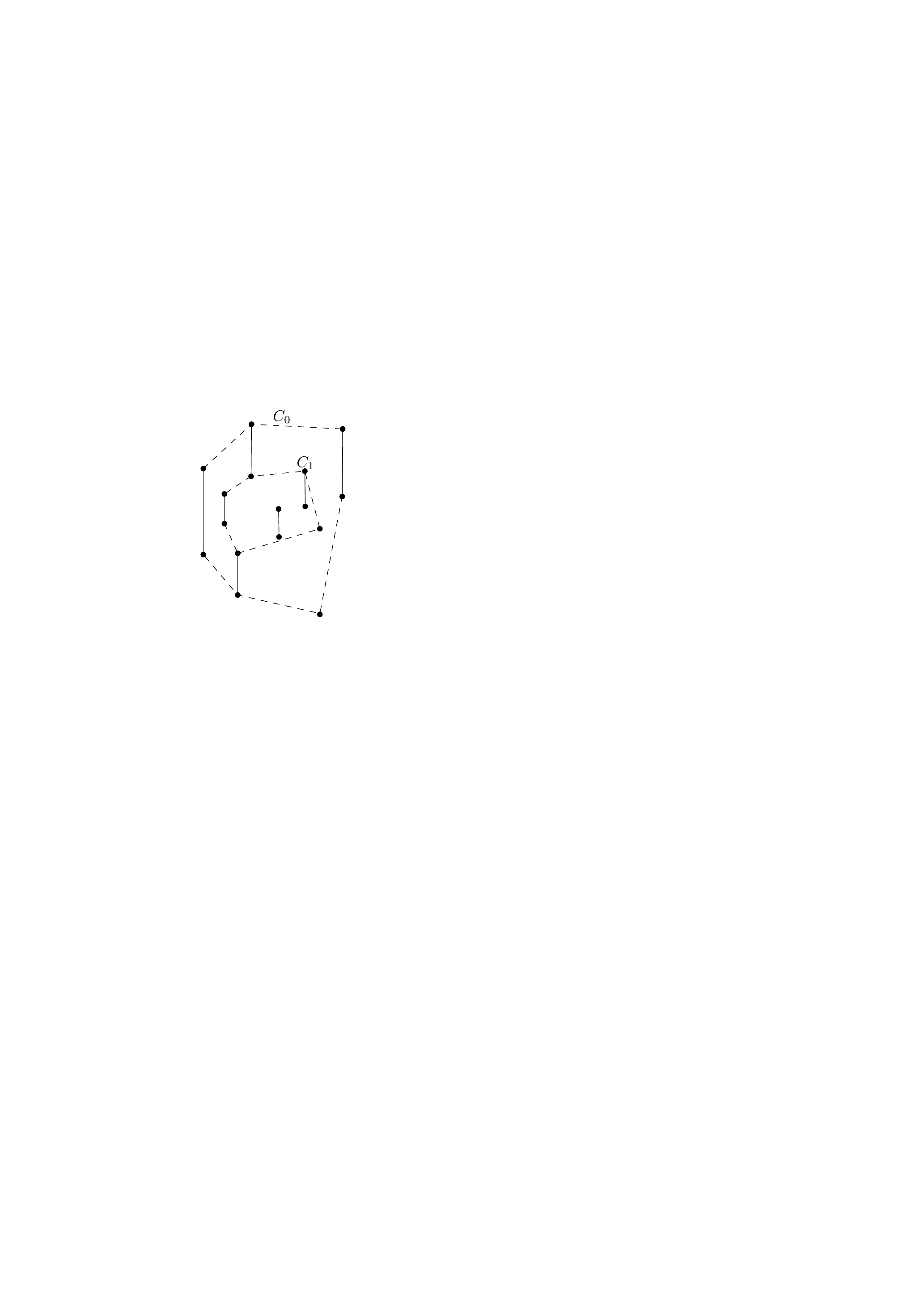}
	\centering
	\caption{$C_0$ and $C_1$ on  a given set of line segments.}
	\label{csdef}
\end{figure}

\paragraph{Definitions}
Without loss of generality, assume the parallel segments in $L$ are vertical. 
 Let $Z=\{l_1^{-}, l_1^{+}, l_2^{-}, l_2^{+},\ldots, l_n^{-}, l_n^{+}\}$ be the set of all endpoints of $L$, where $l_i^{+}$ denotes the upper endpoint of $l_i$, and  $l_i^{-}$ denotes the lower endpoint of  $l_i$. Let $CH(Z)$ and $\partial CH(Z)$ denote the convex hull and the  boundary of the convex hull of $Z$, respectively.
We define $C_0=CH(Z)$ as the convex hull of $Z$, and $C_1=CH(Z \setminus  \partial CH(C_0))$, that is $C_0$ and $C_1$ are the first two layers of the onion decomposition of $Z$, 
as shown in Figure~\ref{csdef}. 
  A \emph{true} object is an object such that all its vertices lie on distinct line segments in $L$. In particular we use the terms {\em true triangle}, {\em true convex hull}, {\em true chord} and {\em true edge} throughout the paper.
 Note that an optimal solution to any of our problems is always a true object.

 Boyce \emph{et al.}~\cite{45} defined a \emph{rooted} triangle (or more generally, a rooted polygon) as a triangle with one of its vertices fixed at a given point (in a context where the rest of the vertices are to be chosen from a fixed set of candidates).
 Here, we define a \emph{root} as a given point on a specific line segment in $L$. In this case we throw out the remainder of the root's region and try to find the other two vertices in the remaining $n-1$ regions. 
 For a given point $a$ on some line segment, we denote this segment by $l_a$. Also  $l_l$ and $l_r$ denote the leftmost and  rightmost line segments, respectively.

 \begin{figure}
	\centering
	\includegraphics{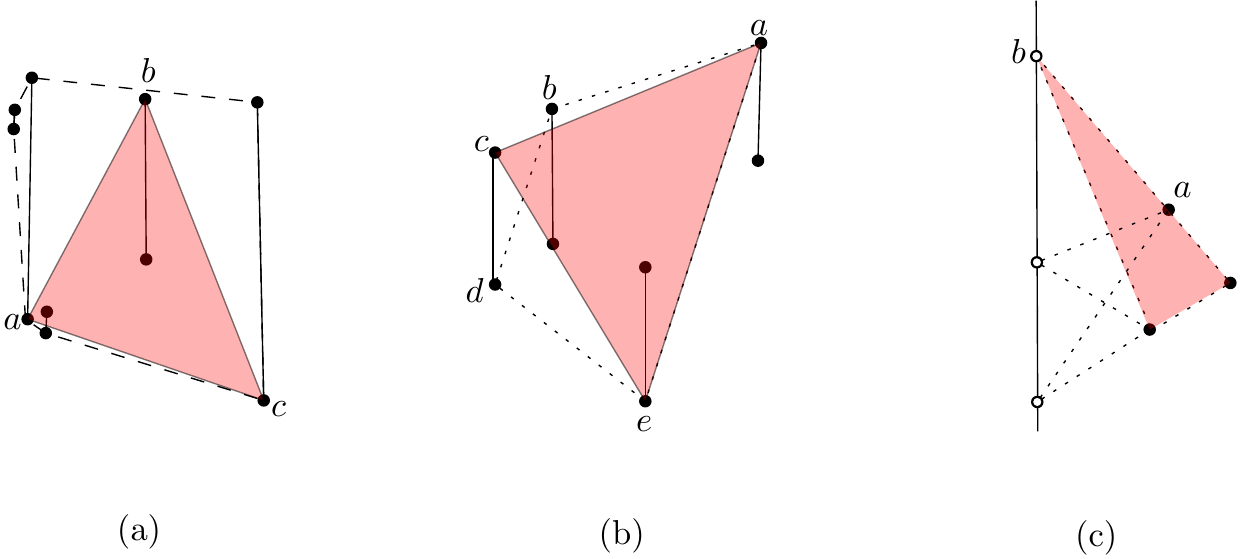}
	\caption{
		Three examples. In (a,b) the largest-area triangle is shown in red. The dashed polygon in (a) is the convex hull of the line segments; the dotted polygon in (b) is the largest-area convex hull possible by selecting a single point on each segment. 
		(a) The largest-area true triangle selects its vertices from the endpoints of the line segments, but not necessarily from those on the convex hull. (b) The largest-area true convex hull ($abde$) does not contain the maximum-area true triangle ($ace$). (c)  A set of one imprecise point and three fixed points, and the resulting largest-area strictly convex 4-gon. The inner angle at $a$ approaches $\pi$. Note that any convex 4-gon must include the fixed points, so the largest area occurs as the point on the segment approaches $b$.} 
	\label{fig0}
	
\end{figure}

\section{\mama problem}
 In this section, we will consider the following problem: given a set $L=\{l_1,...,l_n\}$ of parallel line segments, choose a set $P=\{p_1,...,p_n\}$ of points, where  $p_i \in l_i$,  such that the size of the largest-area triangle with corners at $P$ is as large as possible among all possible choices of $P$ (see Figure~\ref{mnmxmxmn}(a)). 
 Observe that this is, in fact, equivalent to finding three points on three different elements of $L$ that maximize the area of the resulting triangle.\footnote{Note that a similar statement will be true for \mimi, but not for \mami and \mima.} 
 First we  review several related previous results, then we discuss some difficulties that occur when  dealing with imprecise points.  
 
 Boyce \emph{et al.}~\cite{45} consider the problem of computing the largest-area  $k$-gon whose vertices are restricted to a given set of $n$ points in the plane, and prove that the optimal solution only uses points on the convex hull of the given point set (if there exist at least $k$ points on the convex hull).
 L\"offler and van Kreveld  \cite{39} proved that the maximum-area convex  polygon on  a given set of imprecise points (modeled as line segments) always selects its vertices from the convex hull of the input set.
 As a result, one might  lead to conjecture that the maximum-area triangle selects its vertices from the endpoints of regions on the convex hull. This is not the case, as can be seen in Figure~\ref{fig0}(a) (notice that the number of vertices of the convex hull is not fixed).
 Also, unlike in the precise context, the largest-area triangle is not inscribed in the largest possible convex hull of the given set of imprecise points, as illustrated in Figure~\ref{fig0}(b). This problem is further complicated for larger values of $k$, as illustrated in Figure~\ref{fig0}(c); even for $k$=4, we cannot find the area of the  maximum strictly convex $k$-gon, as the angle at $a$ approaches $\pi$ and we can enlarge the area of the convex 4-gon arbitrarily. We elaborate in Section~\ref{sec:generalk}.


\begin{obs} Let $L$ be a given set of imprecise points modeled as arbitrary length parallel line segments, and let $Z$ be its set of endpoints. If the largest-area triangle on $Z$ is not a true triangle in $L$, then the segment $l \in L$ that contributes two vertices to the largest-area triangle on $Z$, does not necessarily contribute a vertex to the largest-area true triangle on $L$. 
\end{obs}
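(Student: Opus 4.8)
The statement is a non-implication (``does not necessarily''), so the plan is to prove it by exhibiting a single explicit instance $L$ in which the largest-area triangle on $Z$ uses both endpoints of one segment $l$, while the largest-area true triangle uses three segments all different from $l$. I would build the instance around four ``cheap'' segments that already carry a large true triangle, plus one distinguished segment $l$. Concretely, take $l$ to be a moderately long vertical segment on the $y$-axis with $l^{+}=(0,7)$ and $l^{-}=(0,-7)$, and take four very short (essentially degenerate) segments at $g=(-10,0)$, $h=(10,5)$, $k=(11,-5)$ and $p=(15,0)$. Here $ghk$ is intended to be the optimal true triangle, while $p$ is a moderately distant apex whose sole purpose is to make the full chord $l^{+}l^{-}$ valuable.

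With the instance fixed, two facts are to be verified by direct area computation. First, the largest-area triangle on $Z$ is the triangle $l^{+}l^{-}p$: its area is $\tfrac12\,|l^{+}l^{-}|\cdot x_p=\tfrac12\cdot 14\cdot 15=105$, and one checks this beats $\mathrm{area}(ghk)=102.5$ and every other triple of endpoints, so the $Z$-optimum is this non-true triangle and $l$ is exactly the segment contributing two vertices. Second, the largest-area true triangle is $ghk$, which uses no endpoint of $l$. The reason the second fact can coexist with the first is that $l^{+}$ and $l^{-}$ are only \emph{flat} hull vertices relative to $ghk$: each pokes only slightly past an edge ($gh$, respectively $gk$) of $ghk$, so using $l^{+}$ or $l^{-}$ as an apex over any base yields strictly less area than $ghk$. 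I would make this quantitative by enumerating the true triangles that use a single endpoint of $l$ and confirming each is below $102.5$ (the largest is $l^{+}gk=98.5$).

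The step I expect to be the crux is calibrating the horizontal position of $p$, since the two requirements pull against each other. On one hand $p$ must be far enough that the full base $l^{+}l^{-}$ beats $ghk$, i.e. $\tfrac12|l^{+}l^{-}|\,x_p>\mathrm{area}(ghk)$, forcing $x_p>14.6$. On the other hand $p$ must be close enough that no triangle using a \emph{single} endpoint of $l$ together with $p$ and the far point $g$ — namely $l^{+}gp$ or $l^{-}gp$, each of area $3.5\,x_p+35$ — overtakes $ghk$, forcing $x_p<19.3$. The plan is therefore to write the competing areas as explicit affine functions of $x_p$ and display the nonempty admissible interval $x_p\in(14.6,19.3)$, which contains the chosen value $x_p=15$. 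This calibration is precisely why a naive construction with $p$ pushed arbitrarily far fails: as $x_p\to\infty$ the true triangle $l^{+}gp$ eventually dominates $ghk$, and the optimal true triangle would once again use $l$.
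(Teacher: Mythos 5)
Your proposal is correct and takes essentially the same approach as the paper: both establish this ``does not necessarily'' statement by exhibiting an explicit counterexample in which the largest-area triangle on $Z$ uses both endpoints of one segment while the largest-area true triangle avoids that segment entirely (the paper uses one segment plus three fixed points placed via a wedge construction; you use one segment plus four fixed points with explicit coordinates). Your arithmetic checks out --- the $Z$-optimum $l^{+}l^{-}p$ has area $105$, the true optimum $ghk$ has area $102.5$, the best true triangle touching $l$ is $l^{+}gk$ with area $98.5$, and your admissible interval for $x_p$ is sound --- so the counterexample is valid.
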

\begin{proof}
The proof is done through providing a counter-example.
Consider a set of one imprecise point $bd$ and three  points $a$, $c$ and $e$, as illustrated in Figure~\ref{fig3}(a). The largest-area triangle is $abd$, but the largest-area true triangle is $ace$, where $e$ (resp. $c$) is located within the  (gray) apex of the wedge which is constructed by emanating two rays at $d$ (resp. $b $) parallel to $ac$ and $ab$ (resp. $ad$ and $ae$).
\end{proof}

\begin{obs} \label{obs:atendpoints}
	 Let $L$ be a given set of imprecise points modeled as parallel line segments. There is an optimal solution to the \mama problem, such that all the vertices are chosen at endpoints of the line segments.
\end{obs}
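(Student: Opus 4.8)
The plan is to take an arbitrary optimal solution to \mama and show that each of the three vertices of its largest-area triangle can be slid to an endpoint of its segment without decreasing the objective. By the equivalence noted above, the \mama value depends only on a choice of three points on three distinct segments, so it suffices to consider the largest-area triangle $T=\triangle abc$ of an optimal point set $P$, whose vertices lie on three distinct segments $l_a,l_b,l_c$; the positions of the other $n-3$ points are immaterial.

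The core step is the following observation about a single vertex. Fix $b$ and $c$, and parametrize the moving vertex as $a=(x_a,t)$, where $x_a$ is constant because $l_a$ is vertical and $t$ ranges over the $y$-interval $[t^-,t^+]$ defining $l_a$. Twice the area of $\triangle abc$ equals $\lvert (x_b-x_a)(y_c-t)-(x_c-x_a)(y_b-t)\rvert$, which is $\lvert \alpha t+\beta\rvert$ for constants $\alpha,\beta$ depending only on $b,c,x_a$. This is a convex function of $t$ (the absolute value of an affine function), so its maximum over the interval $[t^-,t^+]$ is attained at an endpoint. Hence we may replace $a$ by whichever endpoint of $l_a$ maximizes the area, and the area of $T$ does not decrease.

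I would then apply this step to each vertex in turn---first $a$, then $b$, then $c$---each time fixing the other two (now possibly already at endpoints) and sliding the active vertex to the better endpoint of its segment. After three steps all three vertices lie at endpoints, and the area of $T$ has only weakly increased. The resulting point set is still feasible (one point per segment), and since its largest triangle is at least as large as $T$ was, its objective value is at least the optimum; as the optimum cannot be exceeded, equality holds and we have produced an optimal solution with all vertices at endpoints.

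The main thing to be careful about is that we are maximizing a maximum, so one must check that moving points to enlarge $T$ cannot secretly lower the objective. This is clean here: the objective is the area of the largest triangle over all triples, which is bounded below by the area of the particular triple $T$ we are enlarging, so each slide can only raise (or preserve) it. A second minor point is that, during a slide, $a$ may cross the line through $b$ and $c$, so the signed area changes sign; this is exactly why I phrase the area as $\lvert \alpha t+\beta\rvert$ and invoke convexity rather than monotonicity, which guarantees an endpoint maximizer regardless of sign.
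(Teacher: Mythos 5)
Your proof is correct and takes essentially the same approach as the paper's: fix two vertices and show that the area, as a function of the third vertex's position along its vertical segment, attains its maximum at an endpoint, then move each vertex in turn. The paper justifies the single-vertex step geometrically, by sweeping a line through the moving vertex parallel to the opposite side away from it (treating the parallel case as a tie), while you phrase the same fact algebraically as convexity of $\lvert \alpha t+\beta \rvert$; this is only a difference in wording, not in substance.
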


\begin{proof}
	 Suppose there exist three points $p$, $q$ and $r$ that form a true triangle, which has maximal area, and suppose that $p$ is not at an endpoint of $l_p$. If $p$ is not at an endpoint of its segment, we can consider a line $\ell$ through $p$ and parallel to ${qr}$. If we sweep $\ell$ away from ${qr}$, it will intersect $l_p$ until it leaves $l_p$ in a point $p'$. Clearly the triangle $p'qr$ has larger area than  $pqr$, and  $p'$ can be substituted for $p$ to give us a larger area true triangle,  and thus $pqr$ cannot be the largest-area  true triangle (if $\ell$ and $l_p$ are parallel, we can choose either endpoints and the area is the same).
	\end{proof}

\begin{obs} If at most two distinct line segments appear on $C_0$, there is an optimal solution to the \mama problem, such that  the two segments which appear on  $C_0$ always appear on the largest-area true triangle.
\end{obs}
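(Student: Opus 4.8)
The plan is to reduce to endpoints via Observation~\ref{obs:atendpoints} and then exploit the rigid structure the hypothesis forces on $C_0$. Since the leftmost and rightmost segments $l_l,l_r$ always contribute their extreme-$x$ endpoints to $CH(Z)$, the two segments appearing on $C_0$ must be exactly $l_l$ and $l_r$; hence $C_0$ is a convex quadrilateral with corners $A=l_l^{+}$, $B=l_l^{-}$, $C=l_r^{-}$, $D=l_r^{+}$, and every other endpoint of $Z$ lies in its interior. By Observation~\ref{obs:atendpoints} I may take an area-maximal true triangle $T=\triangle uvw$ with all vertices in $Z$, and hence inside $C_0$.

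The engine of the argument is the extremal characterization of $T$: for each edge of $T$, the opposite vertex is a point of $Z$ farthest from the line through that edge, taken among all endpoints lying on segments other than the two segments used by the edge. (If this failed, the sweeping argument of Observation~\ref{obs:atendpoints} would relocate that vertex to a strictly farther endpoint and enlarge the area.) I combine this with the elementary fact that the farthest point of $Z$ from any line is a vertex of $CH(Z)=C_0$, i.e.\ one of $A,B,C,D$, all of which lie on $l_l$ or $l_r$.

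First I would show that at least one of $l_l,l_r$ is used. If neither were used, then for every vertex of $T$ the segments of its opposite edge differ from both $l_l$ and $l_r$, so the globally farthest corner is an admissible replacement; the extremal characterization then forces that vertex to coincide with a corner of $C_0$, which is impossible since the corners lie on the two unused segments. To upgrade this to \emph{both} segments, I would argue by an exchange that preserves the already-used extreme segment. Assume $l_r$ is used (at $w$) but $l_l$ is not, and let $u$ be the leftmost vertex of $T$. Then $u$ cannot lie on $l_r$ (the strictly rightmost segment) nor on $l_l$, so $u$ sits on an interior segment and is not a corner. I relocate $u$ to the corner of $l_l$ lying on $u$'s side of the opposite edge and farthest from it; because $l_l$ is unused this keeps the triangle true and still uses $l_r$ through $w$, so, \emph{provided the area does not drop}, I obtain an optimal true triangle using both $l_l$ and $l_r$, and the symmetric statement handles a missing $l_r$.

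The main obstacle is precisely the proviso in the last step: it is not automatic that a corner of $l_l$ is at least as far from the opposite-edge line as $u$ on the correct side. When that edge is nearly parallel to the segments, the farthest corner on $u$'s side can be a corner of $l_r$ (possibly the already-used one, hence inadmissible) rather than a corner of $l_l$, and when $C_0$ is lopsided (markedly taller on one side) a naive horizontal relocation of $u$ can leave $C_0$. Resolving this is the crux: I expect either to choose the relocated vertex according to the orientation of its opposite edge (always relocating the vertex whose opposite edge admits a strictly farther valid corner of the missing segment on the correct side), or to establish the sharper monotonicity claim that an area-maximal true triangle must attain the full horizontal extent of $C_0$, from which membership of both $l_l$ and $l_r$ follows. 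Verifying the non-decrease of area uniformly across all edge orientations, including the degenerate parallel case, is where the real work lies.
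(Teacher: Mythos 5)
Your reduction to endpoints via Observation~\ref{obs:atendpoints}, the identification of the two hull segments as $l_l$ and $l_r$ (so that $C_0$ is the quadrilateral $l_l^+l_l^-l_r^-l_r^+$ and every endpoint not on $C_0$ belongs to an interior segment), and your first step are all sound; indeed that first step is in essence the paper's own first case: if no vertex of the optimal true triangle lies on $C_0$, a vertex can be exchanged for the farthest hull corner, which is admissible because the opposite edge uses neither $l_l$ nor $l_r$. (The only slip there is that the extremal characterization cannot ``force coincidence'' with a corner---ties are possible---but since the statement only asserts that \emph{some} optimal solution has the property, substituting a tied corner costs nothing.) The paper runs this case with parallel sweeps through the leftmost and rightmost vertices of the triangle, but the mechanism is the same as yours.

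The genuine gap is the step you yourself flag: upgrading from ``one extreme segment is used'' to ``both are used.'' Relocating the leftmost vertex $u$ onto a corner of $l_l$ is a \emph{horizontal} move, and horizontal moves carry no area guarantee: if the line through the opposite edge $vw$ passes near (or through) the corner of $l_l$ on $u$'s side, both corners of $l_l$ can be strictly closer to that line than $u$, or can lie on the wrong side of it altogether, so the exchange strictly shrinks the triangle. Your two fallback suggestions (re-choosing which vertex to relocate, or a monotonicity claim about horizontal extent) are left unproved, so the proposal does not establish the statement. The missing idea---and how the paper closes exactly this case---is to keep using the one move that is area-monotone by construction: sweep a line through a vertex \emph{parallel to its opposite edge}, away from that edge, until it exits $C_0$ at a hull corner. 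The paper applies this to a vertex $r$ that is \emph{not} on $C_0$, whose opposite edge $pq$ contains the vertex $p$ already sitting on one extreme segment; the sweep cannot decrease the area, it terminates at a corner of $C_0$, and the paper argues this corner lies on the \emph{other} extreme segment, so (the remaining vertex $q$ being on an interior segment) the substitution preserves trueness and contradicts optimality. In short, the exchange must be made in the direction dictated by the opposite edge, not the direction dictated by the segments, and it must be applied to a non-hull vertex rather than to the leftmost one. (To be fair, the admissibility worry you raise---that the sweep could land on the already-used segment---is real and is asserted away rather than argued in the paper's proof; but with the paper's sweep the area comparison is automatic, and that residual case can be excluded using the optimality of the original triangle, whereas your horizontal exchange fails already on the area comparison itself.)
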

\begin{proof}
From the previous observation we know that the  largest-area true triangle selects its vertices from the endpoints of the line segments.
Now we will prove that if at most two distinct segments appear  on $C_0$, these segments  always contribute to  the largest-area true triangle. 
Suppose this is not true. Then there are two cases.
First, suppose that there exist three points $p$, $q$ and $r$ at the endpoints of three different segments, such that $pqr$ has maximal area, and (w.l.o.g)  $p$ and $r$, respectively, have  the lowest and highest  $x$-coordinates among the vertices of $pqr$, and none of the vertices of $pqr$ are selected from the vertices of $C_0$, as illustrated in Figure~\ref{fig3}(b). We consider a line $\ell$  through $r$ and parallel to ${pq}$. If we sweep $\ell$ away from  ${pq}$, it will intersect $C_0$ until $\ell$ leaves $C_0$ at a point $r'$,  such that $r'$ should belong to one of the two segments that currently appeared on $C_0$. We also consider a line $\ell'$  through $p$ and parallel to ${qr}$. If we sweep $\ell'$ away from  ${qr}$, it will intersect $C_0$ until $\ell'$ leaves $C_0$ at a point $p'$, and  thus $p'$ and $r'$ can be substituted for $p$ and $r$ to give us a larger area true triangle, a contradiction.

Second suppose there exist three points $p$, $q$ and $r$ that form a true triangle, has maximal area, and  selects only  one of its vertices, e.g., $p$, from $C_0$.  Then $p$ either is the leftmost or the rightmost vertex of the triangle $pqr$. 
We will show that one of the other  vertices of  $pqr$ should also be on $C_0$. 
We consider a line $\ell$  through $r$ and  parallel to ${pq}$. If we sweep $\ell$ away from ${pq}$, it will intersect $C_0$ until $\ell$ leaves $C_0$ at a point $r'$, that  belongs to the other segment that currently  appears on $C_0$, and  thus  $r'$ can be substituted for  $r$ to give us a larger area true triangle. Contradiction.
\end{proof}

 From now on, we assume more than two distinct segments appear on $C_0$. Because otherwise we can easily solve the problem in $O(n)$ time; two distinct endpoints of the line segments appearing on $C_0$ determine the base of the largest-area true triangle.

 \begin{figure}
 	\centering
 	\includegraphics{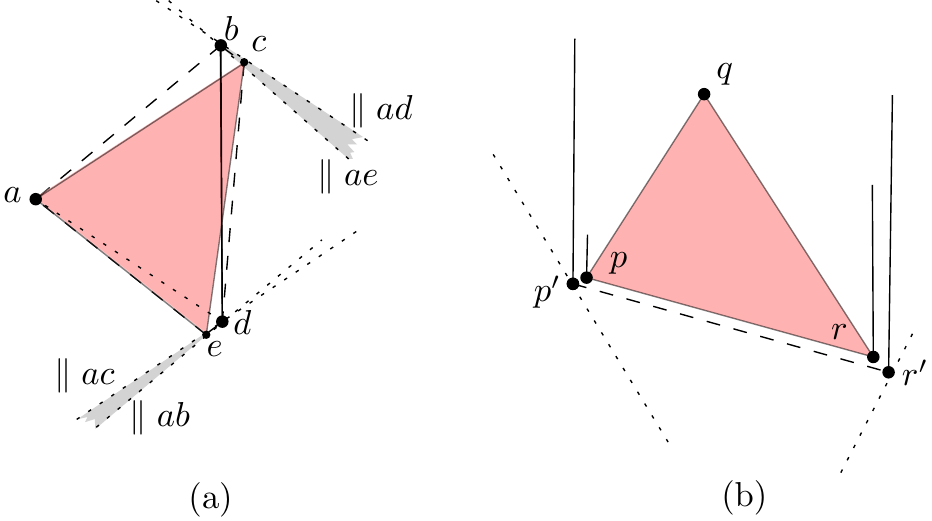}
 	\caption{
     (a) If the largest-area triangle is not a true triangle, then the segment that contributes two vertices to the largest-area triangle, does not necessarily contribute a vertex to the largest-area true triangle.	(b) If at most two distinct line segments appear on $C_0$, the two regions which appear on  $C_0$, always appear on the largest-area true triangle. 
}
 		\label{fig3}
 	\end{figure}

  \subsection{Equal-length parallel line segments} \label{sec:eqlen}

 We show that  for a set of equal-length parallel line  segments, the largest-area  triangle selects its vertices from the vertices on $C_0$ and is almost always a true triangle. 
  The only possible configuration of the line segments that makes the largest-area triangle a non-true triangle is collinearity of  all the upper (or lower) endpoints, in which case the largest-area true triangle will always select one vertex at the leftmost segment and one vertex at  the rightmost one.
 Clearly, we can test whether this is the case in $O(n)$  time.
In the following we show that if the upper endpoints are not collinear, we can directly apply any existing algorithm for computing the largest-area triangle on a point set.
 


 
\begin{lemma}
 Let $L$ be a set of equal-length parallel line  segments (not all the upper endpoints collinear). The largest-area true triangle selects its vertices from the vertices on $C_0$.
\end{lemma}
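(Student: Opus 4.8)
The plan is to prove the stronger statement that, under the hypothesis, the largest-area triangle on the \emph{whole} endpoint set $Z$ is already a true triangle. The lemma then follows at once: any true triangle has area at most that of the unconstrained largest triangle on $Z$, and by the property that a largest-area triangle on a point set with at least three hull vertices uses only vertices of its convex hull (Boyce \emph{et al.}~\cite{45}, recalled at the start of this section, applicable since $Z$ spans two dimensions so $C_0$ has at least three vertices), a largest triangle on $Z$ can be taken with all three corners at vertices of $C_0$; if that triangle is also true, it is a largest-area true triangle sitting on the vertices of $C_0$. By Observation~\ref{obs:atendpoints} I may assume all candidate vertices are segment endpoints. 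Writing $h$ for the common length, I would first note that the upper endpoints are the lower endpoints translated by $(0,h)$, so the upper endpoints are collinear if and only if the lower ones are; hence the hypothesis is equivalent to saying the set $B=\{l_i^-\}$ of lower endpoints is not collinear.

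The first step is a structural claim: for equal-length vertical segments every segment other than $l_l$ and $l_r$ contributes at most one vertex to $C_0$. I would prove this by observing that the upper chain of $C_0$ is exactly the upper hull of the upper endpoints and the lower chain the lower hull of the lower endpoints; since $\{l_i^+\}=B+(0,h)$ is a translate of $B$, the segment indices appearing on the upper chain coincide with those on the upper hull of $B$, and likewise for the lower chain and the lower hull of $B$. An index lies on both hulls of $B$ only if it is $x$-extreme, i.e. equals $l_l$ or $l_r$. Consequently a triangle whose vertices are vertices of $C_0$ can fail to be true only by using the two endpoints $l_l^-,l_l^+$ of the left edge, or $l_r^-,l_r^+$ of the right edge.

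The crux is to rule out such ``full-edge'' triangles. Let $T^\ast$ be a largest triangle on $Z$ with all corners at vertices of $C_0$, and suppose for contradiction it is not true; by the previous step it uses, say, the left edge $l_l^-l_l^+$ (the right case is symmetric), a vertical base of length $h$, so its area is at most $A_0:=\tfrac12 h\,(x_r-x_l)$, where $x_l<x_r$ are the $x$-coordinates of $l_l,l_r$. Because the segments have equal length, the four corners $l_l^-,l_l^+,l_r^+,l_r^-$ form a parallelogram of area $h\,(x_r-x_l)=2A_0$ with parallel sides $l_l^-l_r^-$ and $l_l^+l_r^+$. Since $B$ is not collinear, some lower endpoint $l_c^-$ with $c\notin\{l,r\}$ lies strictly off the line $\ell$ through $l_l^-$ and $l_r^-$. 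If $l_c^-$ lies strictly below $\ell$, the true triangle $l_l^+l_r^+l_c^-$ has its apex strictly beyond the opposite parallel side $\ell$, hence area $>A_0$; if it lies strictly above $\ell$, then by the $(0,h)$-translation its upper endpoint $l_c^+$ lies strictly above the parallel line through $l_l^+,l_r^+$, and the true triangle $l_l^-l_r^-l_c^+$ has area $>A_0$. Either way I obtain a true triangle of area strictly larger than $A_0\ge\mathrm{area}(T^\ast)$, contradicting the maximality of $T^\ast$. Hence $T^\ast$ is true, which proves the lemma.

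I expect the main obstacle to be exactly this last step: a naive outward sweep of a non-hull vertex (in the spirit of Observation~\ref{obs:atendpoints}) can be blocked when the extreme hull vertex it wants to reach is the opposite endpoint of a segment the triangle already uses, so a direct swap need not preserve trueness. The parallelogram/translation argument is what circumvents this, and it is precisely here that equal length is indispensable: it is what makes the two extreme segments span a parallelogram, and what lets the ``above $\ell$'' and ``below $\ell$'' subcases be exchanged by the vertical shift $(0,h)$. The only routine points left to verify are that $C_0$ genuinely has at least three vertices (guaranteed since $B$ is not collinear) and the handling of possible ties or shared $x$-coordinates in the structural claim.
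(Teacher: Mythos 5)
Your proof is correct, but it takes a genuinely different route from the paper's. The paper proves this lemma directly by contradiction: it assumes an optimal true triangle $abc$ has a vertex $a$ off $C_0$, sweeps a line through $a$ parallel to $bc$ until it exits $C_0$ at a vertex $q$, and handles the bad case $q\in\{l_b,l_c\}$ by invoking equal length to produce a third segment $l_p\notin\{l_b,l_c\}$ with a vertex on $C_0$, plus a separate supporting-line argument when $a$ is $x$-extreme. You instead prove the stronger statement that the unconstrained largest triangle on $Z$ is itself true --- which is precisely the paper's \emph{next} result, Lemma~\ref{trutri} --- and then derive the present lemma from it via the hull-vertex fact of Boyce et al.\ and your structural claim that only $l_l$ and $l_r$ can contribute two vertices to $C_0$. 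Your parallelogram argument is essentially a sharpened version of the paper's proof of Lemma~\ref{trutri}: there, the non-true triangle $l_l^-l_r^-l_l^+$ is beaten by $l_l^+l_p^-l_r^+$ or $l_l^-l_p^+l_r^-$ for a middle segment $l_p$ via a height comparison; your version makes explicit where the hypotheses enter (non-collinearity supplies the off-line endpoint $l_c^-$, and equal length supplies the $(0,h)$-translation that exchanges the ``above'' and ``below'' cases). What your route buys: a single argument yields both of the paper's lemmas, it cleanly isolates the role of equal length, and it avoids the paper's delicate substitution step (replacing $a,b,c$ by $p,b',c'$ simultaneously), whose area increase the paper does not fully justify. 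What it gives up: you establish that \emph{some} largest-area true triangle has its vertices on $C_0$ (existence), whereas the paper's contradiction argument is aimed at \emph{every} maximizer; the existential form is exactly what the algorithm of Theorem~1 needs (and matches the phrasing of Observation~\ref{obs:atendpoints}), so this is a difference in reading of the statement rather than a defect, but you should state explicitly that this is the form you prove.
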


\begin{proof}
   In observation~\ref{obs:atendpoints} we  have proved that the vertices of largest-area true triangle are selected from the endpoints of their segments. 
 
Now we prove that the vertices of largest-area true triangle are selected from the vertices on  $C_0$. Suppose this  is false. Let $abc$ be the largest possible area true triangle, such that at least one of its vertices, e.g., $a$, is not located on $C_0$, as illustrated in Figure~\ref{prooftrue}(a). 
Then $a$ would be at the endpoint of  $l_a$, but it is not on the convex hull. 
Also  $l_a$ cannot  be the leftmost or the rightmost line segment, because otherwise at least one of the endpoints must appear on $C_0$. First suppose  $a$ does not have the lowest or highest  $x$-coordinate among the vertices of $abc$. But then either $l_a$ is located completely inside $C_0$, or  it has an endpoint $a''$  on $C_0$ (see Figure~\ref{prooftrue}(a,b)). 

We consider a line $\ell$ through $a$ and parallel to $bc$. If we sweep $\ell$ away from $bc$,  $\ell$ will intersect $C_0$ until it leaves $C_0$ in a vertex $q$.
If $q \notin \{l_b,l_c\}$, $q$ can be substituted for $a$ to give us a larger area true triangle,  a contradiction.

  Now suppose $q \in \{l_b,l_c\}$, as illustrated in Figure~\ref{prooftrue}(a,b). Let $b'$ and $c'$ denote the other endpoints of $l_b$ and $l_c$, respectively. But then there always exists another segment  $l_p$ which shares a vertex $p$ on $C_0$. Note that since the segments have equal length, $l_p \notin \{l_b, l_c\}$, (but $l_p$ can coincide with $l_a$).
  Thus   $p$ can be substituted for $a$, $b'$ can be substituted for $b$ and $c'$ can be substituted for $c$ to give us a larger area true triangle,  a contradiction. 


 Now suppose (w.l.o.g) $a$ has the lowest $x$-coordinate, as illustrated in Figure~\ref{prooftrue}(c). Consider the supporting line of $l_a$, $\ell$.  This crosses the upper and lower hull boundary at $a''$ and $a'$, respectively. Then $abc$ is interior to one half plane on $\ell$. Because of a convexity argument, and because  $a$ is strictly interior to the $C_0$,  there will be a point $p$ that lies in the opposite half plane  from the one which contains $b$ and $c$, such that either $a$ lies in the strip defined by  $bc$ and a line through $p$ parallel to $bc$, in which case $pbc>abc$, or $a$ does not lie in that strip and $a''bc>abc$. Both give a contradiction.
 
 
\end{proof}
Note that relations between triangles refer to their areas.

\begin{lemma}
\label{trutri}
 Let $L$ be a set of equal-length parallel line  segments. Unless the upper (and lower) endpoints of $L$ are collinear, the largest-area triangle is always a true triangle.
\end{lemma}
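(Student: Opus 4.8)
The plan is to prove Lemma~\ref{trutri} by combining the previous lemma with a case analysis on whether the largest-area triangle on the endpoint set $Z$ is a true triangle. By the previous lemma, we already know that when the upper endpoints are not collinear the largest-area true triangle picks all three vertices from $C_0$. So the only way the overall largest-area triangle (on $Z$) could fail to be a true triangle is if two of its vertices come from the same segment; that is, if some segment $l$ contributes both its endpoints $l^-$ and $l^+$ as two of the three optimal vertices.

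First I would suppose, for contradiction, that the largest-area triangle on $Z$ uses both endpoints of a single segment $l$, say with the third vertex at a point $w \in Z$, and that the upper (and lower) endpoints are not collinear. The key geometric observation is that since all segments have equal length, say length $h$, the vertical extent of any segment is exactly $h$. The triangle $l^- l^+ w$ has base $l^- l^+$ of length $h$ (vertical) and height equal to the horizontal distance from $w$ to the vertical line through $l$. I would then argue that this configuration cannot be optimal by exhibiting a strictly larger \emph{true} triangle. Concretely, because not all upper endpoints are collinear, there is freedom to replace one of the two coinciding-segment vertices with an endpoint of a different segment while increasing (or at least not decreasing) the area, and then breaking the tie using non-collinearity.

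The main step is the replacement argument. The area of $l^- l^+ w$ equals $\tfrac12 \cdot h \cdot d$, where $d$ is the horizontal distance from $w$ to $l$. To beat this, I would look at the leftmost and rightmost segments $l_l$ and $l_r$: the base spanning from the appropriate endpoint of $l_l$ to that of $l_r$ gives the maximum possible horizontal spread, and since the segments have length $h$, any third vertex can contribute vertical height up to $h$. The candidate true triangle uses three \emph{distinct} segments, and I would show its area is at least $\tfrac12 \cdot h \cdot d$ with equality only in the degenerate collinear case. The crucial use of equal length is that the vertical ``reach'' $h$ available from the segment $l$ is exactly the same vertical reach available from any other segment, so nothing is lost by spreading the two vertices onto two different segments — what matters is maximizing the product of horizontal base and vertical height, and the segment $l$ confers no special advantage over using two distinct segments of the same height.

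The hard part will be making the tie-breaking rigorous: when the two endpoints of $l$ give area exactly equal to the best true triangle, I must show that equality forces collinearity of the upper (or lower) endpoints, which is the excluded case. I expect this to reduce to the statement that the only way a non-true optimum can match a true optimum in area is when the upper endpoints all lie on a common line (so that the vertical edge $l^- l^+$ cannot be strictly beaten by any true configuration), matching the degenerate configuration already identified in the paragraph preceding the previous lemma. I would handle this by carefully analyzing the sweep-line argument from Observation~\ref{obs:atendpoints}: sweeping a line parallel to the opposite edge through the segment $l$ shows that, unless the relevant endpoints are collinear, one of the two vertices on $l$ can be moved to a strictly larger-area position on a distinct segment, yielding a strictly larger true triangle and the desired contradiction.
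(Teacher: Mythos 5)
Your overall plan coincides with the paper's proof: assume the largest-area triangle uses both endpoints of one segment $l$ plus a third vertex $w$, so its area is $\tfrac12 h d$ with $d$ the horizontal distance from $w$ to $l$, and then beat it with a \emph{true} triangle built from the extreme segments and one middle segment, exploiting the equal lengths, with non-collinearity supplying strictness. (The paper additionally argues first that the two segments carrying the non-true optimum must be the leftmost and rightmost ones; your comparison against the full horizontal spread $D \ge d$ makes that step unnecessary.) Up to there the two arguments are essentially the same.

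The genuine gap is in how you propose to establish strictness. You defer the ``hard part'' to the sweep of Observation~\ref{obs:atendpoints}, i.e.\ to moving \emph{one} vertex of the non-true triangle parallel to its opposite edge. That mechanism provably cannot finish the proof: there are non-collinear configurations in which the non-true triangle cannot be improved by replacing any single vertex. Take $l_l$ from $(0,0)$ to $(0,h)$, $l_r$ from $(D,0)$ to $(D,h)$, and middle segments at $x_p\in[D/4,3D/4]$ spanning $(x_p,-\epsilon)$ to $(x_p,h-\epsilon)$ with $\epsilon$ small. The non-true triangle with vertices $(0,0),(0,h),(D,0)$ has area $\tfrac12 hD$; replacing its third vertex gives at most $\tfrac12 hD$ (the spread is already maximal), replacing $(0,h)$ gives at most $\tfrac12 D\max_z|y_z| = \tfrac12 Dh$, and replacing $(0,0)$ against the base from $(0,h)$ to $(D,0)$ gives $\tfrac12\,|hx_z+Dy_z-hD|\le \tfrac12 hD$ for every endpoint $z$. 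So every single-vertex move stalls at ties (and the tied triangles are again non-true), even though the upper endpoints are not collinear. The paper escapes by a simultaneous \emph{two}-vertex switch, which is exactly what your sketch lacks: since the segments have equal length, the segment joining $l_l^+$ to $l_r^+$ is a translate of the one joining $l_l^-$ to $l_r^-$, and the lower endpoint $l_p^-$ of a middle segment lies vertically at least $h$ below the line through $l_l^+l_r^+$, strictly more unless $l_p^+$ lies on that line. Hence $l_l^+l_p^-l_r^+$ (or symmetrically $l_l^-l_p^+l_r^-$) is a true triangle strictly larger than $\tfrac12 hD$ whenever $l_p^+$ is off that line, and such an $l_p$ exists precisely by the non-collinearity hypothesis; in the example it is $(0,h),(x_p,-\epsilon),(D,h)$ with area $\tfrac12 D(h+\epsilon)$. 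Relatedly, your phrase ``any third vertex can contribute vertical height up to $h$'' points the inequality the wrong way: what is needed, and what the translation argument gives, is height \emph{at least} the height of the degenerate triangle. Without this two-vertex construction (or an argument that iterates through tie moves), your proof cannot reach the contradiction.
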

\begin{proof} 
	Suppose the lemma is false. Let $ l_l^{-} l_r^{-} l_l^{+}$ be the largest-area triangle on  a given set $L$ of imprecise points modeled as equal-length parallel line segments, and let ${l_l}$ and ${l_r}$ be the line segments that have the largest-area  triangle constructed on them, as illustrated in Figure~\ref{prooftrue}(d).   There cannot be any other line segments to the left  of ${l_l}$ and to the right  of ${l_r}$, because otherwise we can construct a larger area triangle by using  one of those line segments.
 Thus, all the other line segments must be located between ${l_l}$ and ${l_r}$. 
Suppose ${l_p}$ be one of them. It is easy to observe that $ l_l^{+} l_p^{-} l_r^{+}$ (or $ l_l^{-} l_p^{+} l_r^{-}$) will have a larger area than $ l_l^{-} l_r^{-} l_l^{+}$, because  with a fixed base length $ l_l^{-} l_r^{-}$, $ l_l^{+} l_p^{-} l_r^{+}$ has a  height longer than the length of the given input line segments, but   $ l_l^{-} l_r^{-} l_l^{+}$  has a  height smaller than the length of the input line segments. Contradiction.
\end{proof}

\begin{figure}
\includegraphics{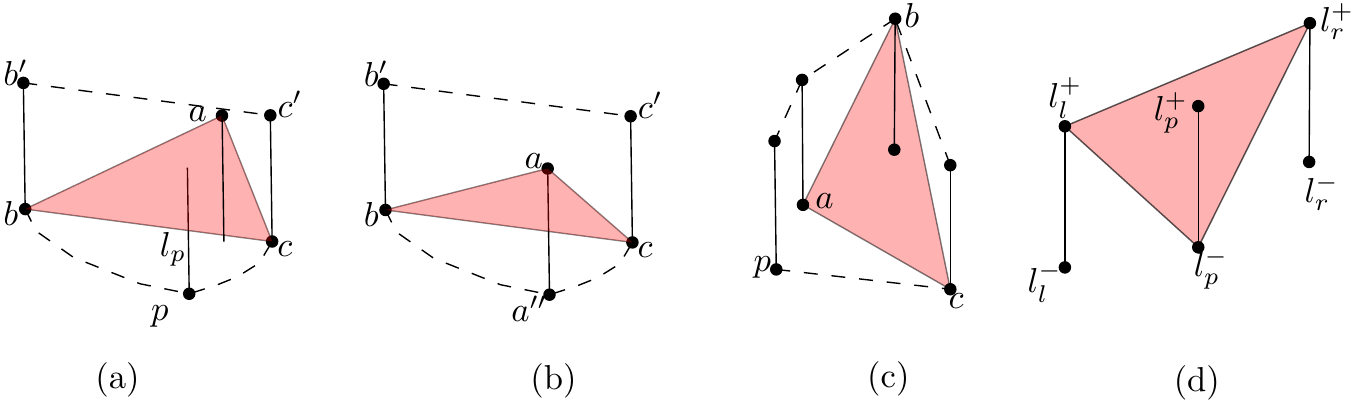}
\centering
\caption {
	(a,b,c) The largest-area true triangle on a given set of equal-length parallel line segments selects its vertices from the vertices on $C_0$.  (d) In a given set of equal-length  parallel line segments, the largest possible area triangle almost is always a true triangle.  }
\label{prooftrue}
\end{figure}

Recall that a rooted polygon is a  polygon  with one of its vertices fixed at a given point in a specific line segment.

\begin{theorem} Let $L$ be a set of $n$ imprecise points modeled as a set of disjoint parallel line segments with equal length. The solution of the problem \mama  can be found in $O(n \log n)$ time. 
\end{theorem}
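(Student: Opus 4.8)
The plan is to reduce \mama to a single largest-area-triangle computation on the convex polygon $C_0$, so that the running time is dominated by one convex-hull construction. First I would dispose of the degenerate configuration: test in $O(n)$ time whether all upper (equivalently, all lower) endpoints are collinear. If they are, the discussion preceding Lemma~\ref{trutri} already pins down the optimum—the largest-area true triangle takes one vertex on the leftmost segment and one on the rightmost segment as its base, and the third vertex is then chosen to maximize the height, all of which can be read off in linear time. So from here on I assume the upper endpoints are not collinear.

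In this generic case the work is carried entirely by the two preceding lemmas. As observed at the start of the section, \mama is equivalent to choosing three points on three \emph{distinct} segments that maximize area, i.e. to finding the largest-area \emph{true} triangle. By Lemma~\ref{trutri}, in the non-collinear equal-length setting the largest-area triangle over the full endpoint set $Z$ is \emph{automatically} a true triangle, so the \mama optimum coincides with the ordinary largest-area triangle spanned by the $2n$ points of $Z$. By the previous lemma (and consistently with the classical fact that the maximum-area triangle spanned by a finite point set uses only hull vertices), all three of its vertices lie on $C_0=CH(Z)$. Hence computing the optimum reduces to finding the largest-area triangle inscribed in the convex polygon $C_0$, restricted to the vertices of $C_0$.

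The algorithm is then immediate: (i) collect the $2n$ endpoints $Z$ in $O(n)$ time; (ii) compute $C_0=CH(Z)$; and (iii) run a largest-area-inscribed-triangle routine on $C_0$. Step (ii) costs $O(n\log n)$ by any standard convex-hull algorithm, and step (iii) costs $O(n)$ using the correct linear-time algorithm of Mount and Chandran~\cite{chandran}—note that we deliberately avoid the Dobkin--Snyder routine, which was shown incorrect in~\cite{kluv}. Since every other step is linear, the total running time is $O(n\log n)$, with the hull construction as the sole bottleneck.

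The only genuine obstacle is logical rather than algorithmic: one must be certain that the triangle returned by the inscribed-triangle routine on $C_0$ is a \emph{true} triangle, i.e. that its three vertices come from three distinct segments rather than from two endpoints of the same segment. This is exactly what Lemma~\ref{trutri} guarantees in the non-collinear case, and it is here that the equal-length hypothesis is essential—it is what forces the global area-maximum on $Z$ to be realized by a true triangle. Everything else is bookkeeping, so the correctness follows directly from the two lemmas and the time bound is $O(n\log n)$.
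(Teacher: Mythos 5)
Your proof is correct and follows essentially the same route as the paper's: dispose of the collinear case in linear time, compute $C_0$ in $O(n\log n)$ time, invoke Lemma~\ref{trutri} to guarantee the largest-area triangle is a true triangle, and apply the linear-time inscribed-triangle algorithm of Mount and Chandran~\cite{chandran}. Your write-up is in fact somewhat more careful than the paper's own three-line proof, since you explicitly flag the logical role of the equal-length hypothesis and of the first lemma (vertices on $C_0$) in justifying the reduction.
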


\begin{proof} We first compute $C_0$.  From Lemma~\ref{trutri} we know  the largest-area triangle for a given set of equal-length parallel line segments is always a true triangle. Then,  we can apply the existing linear time algorithm to compute the largest-area inscribed triangle~\cite{chandran}. 
\end{proof}
\newpage
 \subsection{Arbitrary length parallel line segments} \label{sec:alen}

In this section, we consider the \mama problem for a given set of  arbitrary length parallel line segments. For simplicity we assume no two vertical line segments have the same
$x$-coordinates. 

 
 \begin{figure}
 	\includegraphics{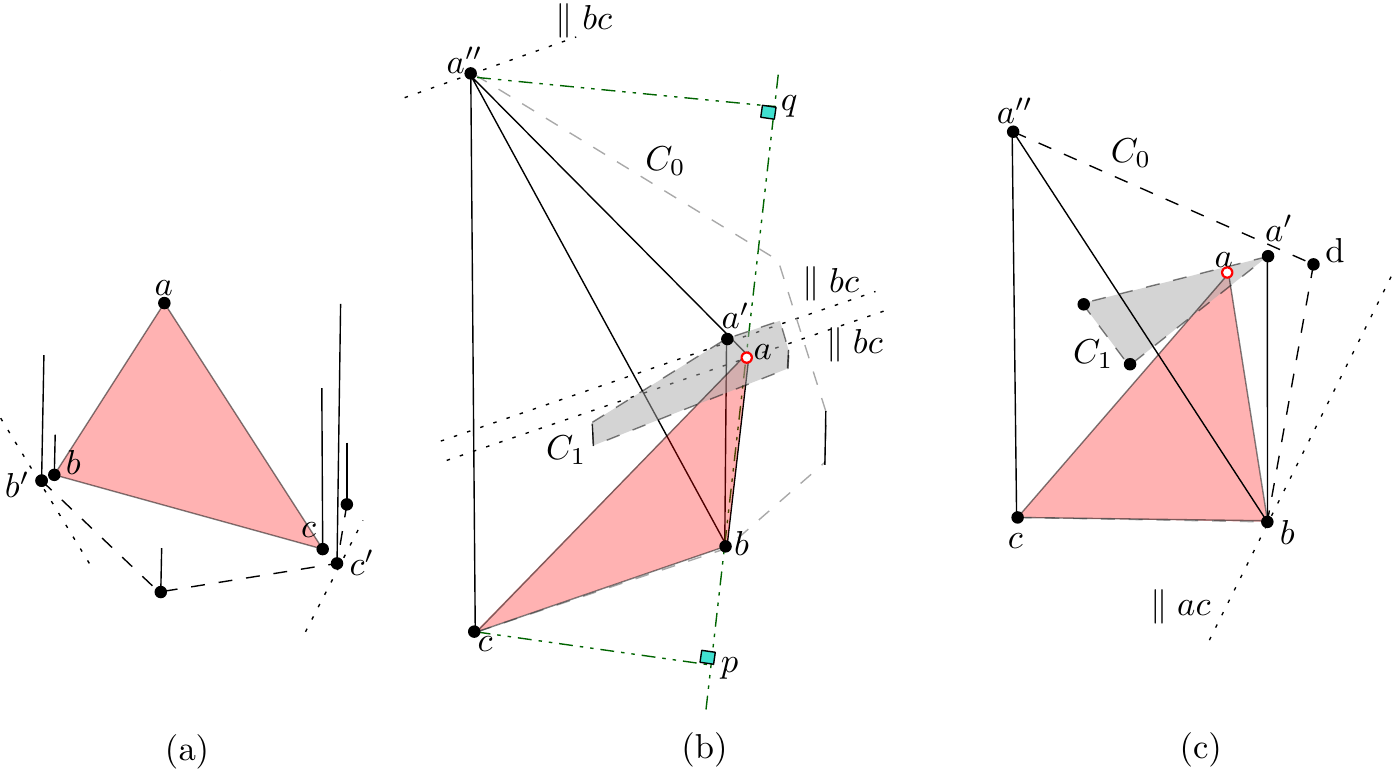}
 	\centering
 	\caption {(a)	The largest-area true triangle selects at least two vertices from the vertices of  $C_0$: if not, we can improve the area. (b,c) The largest-area true triangle cannot have a vertex inside $C_1$.}
 	\label{fig9}
 \end{figure}

\begin{lemma} \label{lem:chord}
At least two vertices of the largest-area true triangle are located on $C_0$, and at most one of its vertices is located on
 $C_1$, and all three are on $C_0 \cup C_1$. 
\end{lemma}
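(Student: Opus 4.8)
The plan is to establish the three assertions in the order (i) at least two vertices lie on $C_0$, (ii) consequently at most one vertex lies on $C_1$, and (iii) no vertex lies strictly inside $C_1$, so that all three lie on $C_0\cup C_1$. Throughout I may invoke Observation~\ref{obs:atendpoints}: an optimal triangle has all three vertices at segment endpoints, so every vertex is a point of $Z$. I will also use repeatedly that, for a fixed edge $e$ of the triangle with outward normal direction $d$ (the direction perpendicular to $e$ pointing away from the triangle interior), the remaining vertex of a maximum-area true triangle must be the point of $Z$ maximizing the height $h(\cdot)=\langle\cdot,d\rangle$ among all endpoints not lying on the two segments carrying the endpoints of $e$; otherwise we could replace it and enlarge the area. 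Because the segments are vertical and (by the standing assumption that no two segments share an $x$-coordinate) $e$ is never vertical, $d$ is never horizontal, so within each segment the height $h$ is maximized at a single endpoint; this lets me speak of the top endpoint of each segment relative to $d$.

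For (i), suppose toward a contradiction that at most one vertex lies on $C_0$, so at least two vertices are strictly interior to $C_0$. Let $a$ be such an interior vertex and $bc$ its opposite edge with outward normal $d$. The global height maximizer over all of $Z$ in direction $d$ is a vertex $q$ of $C_0$. If $l_q\notin\{l_b,l_c\}$, then $q$ is an admissible replacement for $a$ and $qbc$ has strictly larger area (as $a$ is strictly interior, hence strictly lower in direction $d$), contradicting maximality. Hence the obstruction is exactly the case $l_q\in\{l_b,l_c\}$, in which the hull's extreme point in direction $d$ is pinned to one of the two segments already used by the triangle. The core of the argument is to rule this out: since at most one vertex lies on $C_0$, at least one of $b,c$, say $b$, is itself interior, and applying the same sweep to $b$ produces a second pinning condition; combining the two, together with the vertical structure forcing each segment's apex-candidate to be its top endpoint, exhibits a hull vertex off $l_b$ and $l_c$ onto which the substitution can be carried out, giving the contradiction. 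This blocked substitution is the main obstacle of the whole lemma.

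Assertion (ii) is then immediate: by definition $C_1=CH(Z\setminus\partial CH(C_0))$, so a vertex of $C_1$ is a point of $Z$ not on $\partial CH(C_0)$; since (i) places at least two vertices on $C_0$, at most one vertex can be on $C_1$. For (iii), assume (i) and suppose the third vertex $a$ were strictly interior to $C_1$, with $b,c$ on $C_0$; let $d$ be the outward normal of $bc$. Sweeping the line through $a$ parallel to $bc$ in direction $d$, it meets $\partial C_1$ and then $\partial CH(C_0)$; let $r$ and $q$ be the extreme vertices of $C_1$ and of $C_0$ in direction $d$. If $l_q\notin\{l_b,l_c\}$ we replace $a$ by $q$, yielding strictly larger area (and $a$ would even lie on $C_0$), a contradiction; otherwise, since $a$ is strictly interior to $C_1$, the point $r$ is strictly higher than $a$ in direction $d$, so if $l_r\notin\{l_b,l_c\}$, replacing $a$ by $r$ again enlarges the area, a contradiction. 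Thus the only surviving case is that both the $C_0$- and the $C_1$-extreme points in direction $d$ lie on the two base segments $l_b,l_c$.

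I would dispose of this last case by observing that it forces $l_b$ and $l_c$ to be long segments reaching the far side of the hull (their top endpoints in direction $d$ being the two successive height maxima), and then showing directly that using such a far endpoint as the apex over a suitably chosen base yields a strictly larger true triangle, contradicting global optimality; the area bookkeeping in this final step is the delicate part. Combining (i)--(iii) gives the statement: at least two vertices on $C_0$, at most one on $C_1$, and all three on $C_0\cup C_1$.
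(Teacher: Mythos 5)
Your overall strategy---replace the apex by the extreme point of $Z$ in the outward normal direction of the opposite edge, and reduce everything to the ``blocked'' case where that extreme point lies on one of the two segments already used---is essentially the paper's strategy, and your part (ii) plus the unblocked branches of (i) and (iii) are fine. But the proposal is not a proof: at both places where the real work lies you state the difficulty and then defer it. In (i) you claim that combining two pinning conditions ``exhibits a hull vertex off $l_b$ and $l_c$,'' but nothing in your text exhibits such a vertex, and it is not clear how two pinning conditions for two \emph{different} normal directions should combine; you yourself call this ``the main obstacle of the whole lemma.'' In (iii) you likewise concede that ``the area bookkeeping in this final step is the delicate part'' without doing any of it. These blocked cases are exactly where the lemma's content lives, so what you have written reduces the lemma to its hard kernel rather than proving it.

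For comparison, the paper resolves the blocked case of your (iii) concretely (Figure~\ref{fig9}(b,c)): after the sweep line leaves $C_1$ at $a'$ and $C_0$ at $a''$, both assumed to lie on $l_b\cup l_c$, it distinguishes whether both or only one of $l_b,l_c$ lie on one side of $l_a$. In the first sub-case it compares heights over the fixed base $ab$ (the perpendiculars $a''q$ versus $cp$) to conclude $aa''b>abc$ directly; in the second it introduces the next vertex $d$ after $a''$ in the cyclic order of $C_0$ (which exists because $C_0$ and $C_1$ are disjoint), notes $a''db$ is a true triangle, and chains the inequalities $a'a''b=a'cb$, $a'cb>acb$, $a''bd>a''ba'$ to get $a''db>abc$, contradicting optimality. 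For your (i), the paper also argues differently: in the case of no vertex on $C_0$ it sweeps from the two $x$-extreme vertices in the directions of the opposite edges and substitutes both simultaneously, and in the one-vertex case it argues the swept exit points avoid $l_a$ and $l_b$. An explicit replacement triangle together with an area comparison of this kind is what your cases (i) and (iii) are missing; without it the ``pinned'' configurations have not been ruled out.
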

\begin{proof} First we will prove that at least two vertices of the largest area true triangle are located on $C_0$. Suppose this is not true,  so the largest-area true triangle  can have fewer vertices  on $C_0$. 

In the beginning, we  prove that it is not possible that none of the vertices of the largest-area true triangle are located on $C_0$. Let $ abc$ be the largest-area true triangle that does not select any of its vertices from $C_0$, and let $b$ and $c$ have  the lowest and highest $x$-coordinates, respectively, as illustrated in Figure~\ref{fig9}(a). We consider a line $\ell$  through $c$ (resp. $b$) and parallel to ${ab}$ (resp. $ac$). If we sweep $\ell$ away from ${ab}$ (resp. $ac$), it will intersect  $C_0$, until
  it leaves $C_0$ in a point $c'$ (resp. $b'$), and  $b'$ and $c'$ can be substituted for $b$ and $c$, respectively, to give us a larger area true triangle. Contradiction.
  
Again let $abc$ be the largest-area true triangle that  selects only  one of its vertices, e.g., $a$, on $C_0$.  We will show that at least one of the other vertices of  $abc$ must also be located on $C_0$.  
Suppose the $x$-coordinate of $a$ is between the $x$-coordinates of $b$ and $c$ (since otherwise with the same argument we had above, at least another vertex will also be located on $C_0$). We consider a line $\ell$  through $c$ and parallel to ${ab}$. If we sweep  $\ell$ away from ${ab}$, it will intersect $C_0$  until it leaves $C_0$ in  a point $c'$, that does not  belong to the regions of $l_a$ and $l_b$. We can also do the same procedure for the point $b$ and find another point $b'$, and thus, $b'$ and $c'$ can be substituted for $b$ and $c$, respectively, to give us a larger area true triangle.  Contradiction. Thus, at least two vertices of the largest-area true triangle should be located on $C_0$.  

Second we will prove that if the third vertex of the largest-area true triangle is not located on $C_0$, it must be located on $C_1$. Suppose this is false.

 Let $ abc$ be the largest-area true triangle, and (w.l.o.g) let $a$ be the vertex that is not located on $C_0$.  
 Suppose this is not true, that $a$ cannot be located interior to   $C_1$. Notice that $b$ and $c$ must be located on  $C_0$. We consider a line $\ell$ through $a$ and parallel to ${bc}$. If we sweep $\ell$ away from  ${bc}$,  it will intersect $C_1$ until it leaves $C_1$ in a point $a'$. If this point does not belong to $l_b$ or $l_c$, we are done.
 Suppose $a' \in {l_b,l_c}$.  We continue sweeping $\ell$ until it leaves $C_0$ in a vertex  $a''$. Suppose again that it belongs   to one of $l_b$ or $l_c$ (since otherwise we have found a larger area true triangle and we are done).  Note that we also did not find  any other vertex during the sweeping $\ell$ away from $bc$. 

 Suppose that $a''q$ and $cp$ are the perpendicular segments to the supporting line of $ab$ from $a''$ and $c$, respectively, as illustrated in Figure~\ref{fig9}(b). 
First suppose  both of the $l_b$ and $l_c$ are located to the  left (or right) of $l_a$. Since $c$ and $a''$ belong to the same segment, and from  the slope of the supporting line of  ${ba}$ we  understand that the intersection of the supporting lines of $a''c$ and ${ba}$ would be to the right  of the supporting line of $\overrightarrow{cb}$. It follows that the triangle $ aa''b>abc$, since with a fixed base ${ab}$, the height ${a''q}$ is longer than ${cp}$. Contradiction. 

Now suppose only  one of the line segments $l_b$ or $l_c$ is located to the  left (or right) of $l_a$. Without loss of generality, suppose $l_c$ is located to the left of  $l_a$, as illustrated in Figure~\ref{fig9}(c). 
Let $d$ be the next vertex of  $a''$  on the cyclic ordering of $C_0$ (note that $d$ always exists since $C_0$ and $C_1$ are disjoint). Obviously $ a''db$ is always a true triangle. 
 Since  $ a'a''b =  a'cb$, $ a'cb >  acb$ and  $ a''bd >  a''ba'$, we would have  $ a''db >  abc$. Contradiction.   
\end{proof}

 \begin{corollary} \label{cor:correctness}
  Let $C_0=\{p_1,\ldots,p_m\}$ be a  convex polygon,  constructed at the endpoints of a set of arbitrary-length parallel line segments. There exists  an $i$ ($1 \leq i \leq m$) such that  the largest-area   true triangle rooted at $p_i$ is the largest-area true triangle inscribed in $C_0$.
 \end{corollary}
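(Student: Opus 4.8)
The plan is to treat this as a direct min--max argument: the claimed index $i$ will simply be (the index of) a vertex of an optimal inscribed triangle, and the only thing to verify is that rooting at that vertex neither loses nor gains area. No new geometry is needed beyond the definitions of \emph{true triangle} and \emph{rooted triangle}; in that sense the statement really is a corollary of the setup rather than the consequence of a deep fact, and I would keep the proof correspondingly short.

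First I would fix notation. Let $T^\ast$ denote a largest-area true triangle inscribed in $C_0$, i.e. a triangle of maximum area among all those whose three vertices are chosen from $\{p_1,\dots,p_m\}$ and lie on three \emph{distinct} segments of $L$. Such a $T^\ast$ exists: there are only finitely many candidates, and because we assume more than two distinct segments appear on $C_0$, at least one true triangle inscribed in $C_0$ can be formed. Write its vertices as $p_a,p_b,p_c$, and I would then claim that the index $i=a$ works.

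The core of the argument is two inequalities. For the \textbf{lower bound}, I observe that $T^\ast$ is itself a true triangle with one vertex fixed at $p_a$ and its remaining two vertices $p_b,p_c$ lying on $C_0$ on segments distinct from $l_{p_a}$ (distinctness holds because $T^\ast$ is a true triangle). Hence $T^\ast$ is an admissible candidate in the rooted problem at $p_a$, so the largest-area true triangle rooted at $p_a$ has area at least $\mathrm{area}(T^\ast)$. For the \textbf{upper bound}, any true triangle rooted at $p_a$ with its other two vertices on $C_0$ is, by definition, a true triangle inscribed in $C_0$, so by maximality of $T^\ast$ its area is at most $\mathrm{area}(T^\ast)$. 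Combining the two, the largest-area true triangle rooted at $p_a$ has area exactly $\mathrm{area}(T^\ast)$ and is therefore a largest-area true triangle inscribed in $C_0$, establishing existence of the required $i$.

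I do not expect a genuine obstacle; the only points needing care are bookkeeping. I would make sure the \emph{true}/distinct-segment constraint is handled correctly when rooting: since rooting at $p_a$ discards the rest of $l_{p_a}$'s region and forces the other two vertices onto the remaining $n-1$ segments, $T^\ast$ stays feasible precisely because its three vertices already sit on three different segments. I would also read the conclusion up to area ties, so that ``is the largest-area true triangle'' means ``attains the largest area'', and it suffices to take $i$ to be a vertex index of any one optimal $T^\ast$. Finally, I would stress the algorithmic reading that makes the corollary useful downstream: it certifies that scanning all $m$ vertices of $C_0$ as roots and keeping the best rooted true triangle returns the largest-area true triangle inscribed in $C_0$, while the remaining possibility of a vertex on $C_1$ is handled separately via Lemma~\ref{lem:chord}.
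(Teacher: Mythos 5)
Your proof is correct and matches the paper's treatment: the paper states this corollary without a separate proof, precisely because it is the immediate feasibility/maximality argument you spell out (any optimal inscribed true triangle has its vertices among $p_1,\ldots,p_m$, so rooting at one of them neither loses nor gains area). Your reading of the rooted problem as restricted to triangles inscribed in $C_0$ --- with the case of a vertex on $C_1$ deferred to Lemma~\ref{lem:chord} --- is exactly the reading under which the paper uses this corollary to justify Algorithm~\ref{alg:impquad}.
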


We start solving the \mama problem by considering the case where all the vertices of the largest-area true triangle are the vertices on $C_0$. The case where one of the vertices of the optimal solution is located on $C_1$ will be considered later. 

\subsubsection{Algorithm: Largest-area true triangle inscribed in $C_0$}
We first compute $C_0=\{p_1,...,p_m\}$, in a representation where vertices are ordered along its boundary in counterclockwise (ccw) direction. The idea is to find  the largest-area true triangle with base on all the chords of $C_0$. We start our algorithm from an arbitrary vertex  $p_1$  as the  root. We  check which of the possible true chords  $p_1p_i$, where $i=2,...,m$, will form a larger-area true triangle.  Let $p_j$ be the furthest vertex (in vertical distance) from true chord  $p_1p_i$, then either $p_j$, or one of the two previous or next neighbors of  $p_j$ 
will construct the largest-area true triangle in the base $p_1p_i$.
If  $p_1p_ip_j$ is not true,   at most four triangles   $p_1p_ip_{j-1}$, $p_1p_ip_{j+1}$, $p_1p_ip_{j-2}$ and  $p_1p_ip_{j+2}$ are candidates of constructing the largest-area true triangle on the base $p_1p_i$. Note that a chord always divides a convex polygon into two convex polygons.  We look for the largest-area true triangle on the base $p_1p_i$ on both halfs of $C_0$ simultaneously.

In the next step of the algorithm, we start looking for the furthest  vertex of $p_1p_{i+1}$ (if it is a  true chord) linearly from $p_j$ and  in counterclockwise direction. We  stop looking on $p_1$ when we reach to the chord $p_1p_{m}$. 

We repeat the above procedure in  counterclockwise order, where in each step, we consider one vertex of $C_0$ as the root. We stop the algorithm when we return to $p_1$, and report the largest-area true triangle we have found.
This algorithm is outlined in Algorithm~\ref{alg:impquad}  and illustrated in Figure~\ref{fig:quadalg}. It is easy to observe that  Algorithm~\ref{alg:impquad} takes $O(n^2)$ time.

Now we give a lower bound for computing the largest-area triangle rooted at a given vertex $r$. 
 \begin{figure}
	\includegraphics{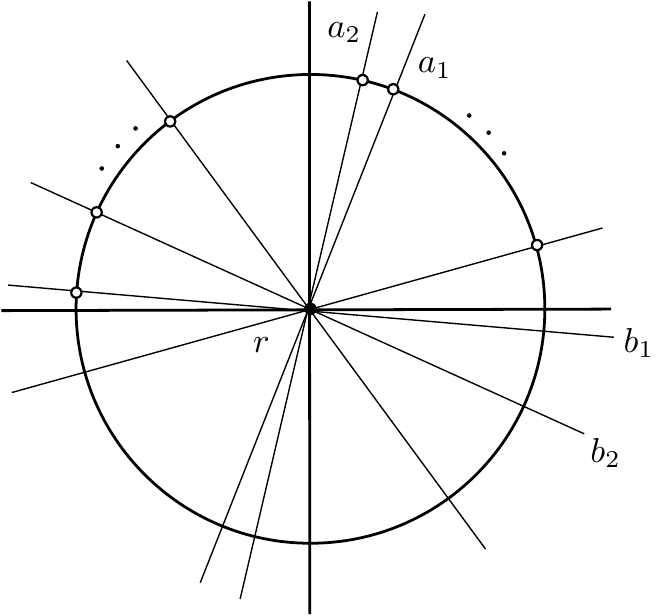}
	\centering
	\caption {Illustration of the reduction from set disjointness problem.}
	\label{fig:lb}
\end{figure}

\begin{theorem}
	There exists a lower bound $\Omega (n \log n)$ for computing the largest-area true triangle rooted at a given vertex $r$.
\end{theorem}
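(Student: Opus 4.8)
The plan is to reduce the \emph{set disjointness} problem to the computation of the largest-area true triangle rooted at $r$. Recall that in set disjointness we are given two sets $A=\{a_1,\dots,a_n\}$ and $B=\{b_1,\dots,b_n\}$ of real numbers and must decide whether $A\cap B=\emptyset$; this problem is known to require $\Omega(n\log n)$ operations in the algebraic computation tree model. It therefore suffices to encode an arbitrary instance $(A,B)$, in linear time, as a root $r$ together with a set of $2n$ candidate points, in such a way that the maximum area attained by a triangle rooted at $r$ crosses a fixed threshold exactly when $A$ and $B$ share an element.

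The construction places $r$ at the origin and all candidate points on the unit circle, so that the two other vertices of any rooted triangle lie at unit distance from $r$. For points at angles $\theta$ and $\theta'$ the area of the triangle with apex $r$ equals $\tfrac12\,|\sin(\theta-\theta')|$, so areas are governed entirely by angular differences. I fix an order-preserving algebraic injection $f$ mapping the reals into the interval $(0,\pi/4)$, and I encode each $a_i$ as the point at angle $f(a_i)$ and each $b_j$ as the point at angle $\tfrac{\pi}{2}+f(b_j)\in(\pi/2,3\pi/4)$. All $2n$ points then lie strictly above the $x$-axis, so $r$ is the lowest point and is a genuine vertex of the convex hull of the instance; moreover, realizing each point as a degenerate vertical segment makes this a valid instance of the imprecise-point model, in which every triangle among distinct points is a true triangle.

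The correctness of the reduction rests on a case analysis of the angular differences. Two $A$-points (or two $B$-points) differ in angle by less than $\pi/4$, so their triangle has area below $\tfrac12\sin(\pi/4)<\tfrac12$. An $A$-point at angle $f(a_i)$ and a $B$-point at angle $\tfrac{\pi}{2}+f(b_j)$ differ by $\tfrac{\pi}{2}+f(b_j)-f(a_i)\in(\pi/4,3\pi/4)$, whose sine attains the value $1$ exactly when $f(b_j)=f(a_i)$, i.e.\ when $a_i=b_j$. Hence the largest rooted triangle has area exactly $\tfrac12$ if $A\cap B\neq\emptyset$ and strictly less than $\tfrac12$ otherwise. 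Comparing the computed maximum area to $\tfrac12$ thus decides set disjointness, and the $\Omega(n\log n)$ lower bound transfers to the rooted-triangle problem.

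The main point requiring care is the model of computation: the $\Omega(n\log n)$ bound for set disjointness holds in the algebraic computation tree (or algebraic decision tree) model, so I must verify that the reduction uses only a linear number of bounded-degree algebraic operations. In particular the circle points should be produced through a rational parametrization of the circle rather than through transcendental functions, and the final threshold test against $\tfrac12$ is a single extra comparison; the trigonometric identities above are used only in the analysis, not in the reduction itself. A secondary point is to ensure general position---no two constructed points coincide and no two share an $x$-coordinate---which can be guaranteed by the injectivity of $f$ together with an arbitrarily small generic perturbation, leaving the threshold argument intact.
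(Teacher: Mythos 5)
Your proof is correct and takes essentially the same route as the paper: a linear-time reduction from set disjointness that places $2n$ points on a unit circle centered at the root $r$, so that a rooted triangle of area exactly $\tfrac12$ (a right angle at $r$) exists if and only if $A\cap B\neq\emptyset$. The paper realizes the angular encoding via the perpendicular lines $y=a_ix$ and $y=-x/b_j$ (implicitly assuming the elements are positive, so the points fall in the first and second quadrants), while your order-preserving injection into $(0,\pi/4)$ and your explicit attention to the algebraic computation model are minor robustness refinements of the same argument rather than a different approach.
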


\begin{proof}
Our reduction follows
the set disjointness problem which has an $\Omega(n\log n)$ lower bound in the algebraic
decision tree model~\cite{preparata1985computational}: 
Given two sets  $A=\{a_1, a_2, . . . , a_n\}$, $B=\{b_1, b_2, . . . , b_n\}$, determine whether or not $A\cap B =\emptyset$. 

We map each $a_i$ to the line $y = a_ix$. 
Also each $b_i$ is mapped to the line $y = -1/b_ix$.
Consider the set of $2n$ intersection points of these lines with the first and second
quadrants of the unit circle, centered at the origin $r$ of the coordinate system, as illustrated in Figure~\ref{fig:lb}. 
Clearly, the maximum size of the largest-area triangle rooted at $r$ can be $1/2$. But a  triangle of this size appears   
if and only if  there exists   two elements $a_i \in A$ and $b_j \in B$, where $a_i=b_j$.\footnote{We assume w.l.o.g that $A$ and $B$ do not include $0$.} 
Thus there exist two elements  $a_i \in A$ and $b_j \in B$, where $a_i=b_j$, if and only if either the size of the largest-area triangle rooted at $r$ equals $1/2$.
\end{proof}

Note that our algorithm runs in  $O(n^2)$ time, while   $O(n)$  points are considered as root $r$.\footnote{Reducing this running time is an interesting open problem, even when the input is a set of points (see, e.g.,~\cite{mukhopadhyay2013all}) .} Also notice that this lower bound is only for the case where we do not know the sorted order of the points on $C_0$, and the line segments are very short.

\begin{algorithm}[H] 
\caption{ Largest-area true triangle inscribed in $C_0$}
	\label {alg:impquad}
		{\bf Legend} Operation {\texttt{\textit{next}} means the next vertex in ccw order}\\
	 {\bf Procedure} {\sc Largest-Inscribed-Triangle($C_0$)} \\
	{\bf Input} {$C_0={p_1,\ldots,p_m}$: convex polygon of the segments, $p_1$: a vertex of $C_0$}\\
	{\bf Output} {$T_{max}$: Largest-area true triangle}\\
	$a$ = $p_1$\\
	$max$ = 0\\
		\While{true}
		{
$b$ = \texttt{next}($a$)\\
	\If{b $\in l_a$}
	{
	$b$ = \texttt{next}($b$)\\
	}
 		$c_l,c_r$ = farthest  vertices from $ab$ (in vertical distance and on both halfs of $C_0$)\\
		\While{$c_l \ne a$}
		{
		\If{$c_l$ (resp. $c_r$) $\in \{l_a,l_b\}$ }
		{
		update $c_l$ (resp. $c_r$)  with another vertex with farthest  distance from $ab$ (among two previous or next neighbors of $c_l$),  such that $abc_l$ (resp. $abc_r$)  is true.
    	       }		
	
			${max}$ = $max(abc, abc_l,abc_r)$
		    
		$b$ = \texttt{next}($b$)\\
		\If{b $\in l_a$}
		{
		$b$ = \texttt{next}($b$)\\
		}			
		\While{$c_l$ (resp. $c_r$ ) is not the farthest  vertex from $ab$}
			{
				$c_l$ = \texttt{next}($c_l$)  (resp. $c_r$ = \texttt{next}($c_r$)) \\
			}
			
		}	  
	$a$ = \texttt{next}($a$) \\
	\If{$a$=$p_1$}
	{
	\Return ${max}$\\
	} 
				
}
	
\end{algorithm}

\begin{figure}
	\includegraphics{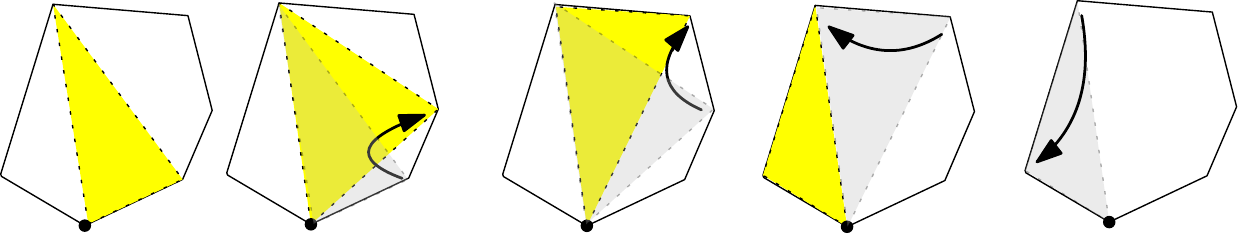}
	\centering
	\hspace {1cm}
	\caption {The first  step of Algorithm~\ref{alg:impquad}}
	\label{fig:quadalg}
\end{figure}


\begin{lemma} \label{ltrueinc0}
Algorithm~\ref{alg:impquad}  finds the largest-area true triangle inscribed in $C_0$.
\end{lemma}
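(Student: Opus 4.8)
The plan is to prove correctness by reducing the global problem to a family of per-root problems and then verifying that, for each fixed root, the inner loop finds the optimal true apex on every base. By Corollary~\ref{cor:correctness} there is a vertex $p_i$ of $C_0$ such that the largest-area true triangle rooted at $p_i$ equals the largest-area true triangle inscribed in $C_0$. Since Algorithm~\ref{alg:impquad} uses every vertex of $C_0$ in turn as a root $a$ and returns the maximum over all of them, it suffices to show that, for a fixed root $a$, the algorithm computes the largest-area true triangle having $a$ as a vertex.

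Fix the root $a$. Every true triangle with a vertex at $a$ can be written as $abc$ with $b,c$ on distinct segments and neither on $l_a$; its base $ab$ is one of the chords examined by the outer pointer (the case $b\in l_a$, which cannot give a true triangle, is correctly skipped by the \texttt{next} test). So it is enough to show that for each admissible base $ab$ the algorithm reports a true vertex whose triangle has area at least that of the best true apex for $ab$. The chord $ab$ splits $C_0$ into two convex chains, searched independently via $c_l$ and $c_r$; I treat one chain, the other being symmetric. Writing the area of $abx$ as $\tfrac12|ab|\cdot\mathrm{dist}(x,\ell_{ab})$ for the supporting line $\ell_{ab}$ of $ab$, and using that the chain is a convex polygonal curve, the distance from $x$ to the fixed line $\ell_{ab}$ is \emph{unimodal} along the chain: it increases up to a single peak vertex and then decreases (the standard rotating-caliper monotonicity). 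Hence the area of $abx$ over $x$ on the chain is unimodal with peak at the farthest vertex, which is exactly the $c_l$ (resp. $c_r$) that the algorithm initializes.

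The heart of the argument is that the best \emph{true} apex on the chain is found within two steps of this peak. By unimodality, if the peak is itself a valid apex (not on $l_a$ or $l_b$) it is optimal and is kept; otherwise the best true apex is the nearest valid vertex to the peak on each side, and the larger of the two wins. The key combinatorial fact is that each segment has only two endpoints, so the only chain vertices that fail to be valid apices are the partners $a'$ and $b'$ of $a$ and $b$ lying on that chain---\emph{at most two} forbidden vertices in total. Consequently, on each side of the peak the first valid vertex is at most two steps away: if the immediate neighbor is forbidden, it must be the second forbidden vertex, and the next one is then valid. This is precisely why inspecting $c_l$ together with $c_l\pm1$ and $c_l\pm2$, and taking the true vertex of maximum area among them, returns the optimal true apex on that chain; the same holds for $c_r$. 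Taking the maximum over both chains, over all bases $ab$, and over all roots $a$, then yields the largest-area true triangle inscribed in $C_0$, which is the claim of Lemma~\ref{ltrueinc0}.

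The main obstacle I anticipate is making the two supporting claims fully rigorous rather than merely intuitive: first, the (strict) unimodality of the distance-to-$\ell_{ab}$ function along each chain, which is what legitimizes ``the nearest valid vertex to the peak is optimal''; and second, the tightness of the two-step bound, namely that the forbidden vertices are exactly the two partners $a',b'$ and can push the nearest valid apex to distance two but never to distance three. A secondary point, needed for the $O(n^2)$ running time but also to justify that the inner \texttt{while} loop relocates the peak correctly, is the monotone advancement of $c_l,c_r$ as $b$ moves counterclockwise: the peak of a convex chain under a rotating base line moves monotonically, so the pointer never has to back up. Once these facts are in place, correctness follows from Corollary~\ref{cor:correctness} exactly as above.
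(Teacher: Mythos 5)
Your proof is correct and follows essentially the same route as the paper's: reduce, via Corollary~\ref{cor:correctness}, to finding for every root $a$ and every true chord $ab$ the best true apex, which the algorithm locates at the farthest vertex of each chain or one of its $\pm1,\pm2$ neighbors. In fact you prove more than the paper's own proof, which merely cites Lemma~\ref{lem:chord} and Corollary~\ref{cor:correctness} and asserts that all roots and chords are examined; your unimodality argument, together with the observation that each chain contains at most two forbidden vertices (the partner endpoints $a'$ and $b'$ of $l_a$ and $l_b$), is precisely the missing justification for why checking the peak and its two neighbors on each side suffices.
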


\begin{proof}
 
 From Lemma~\ref{lem:chord} and Corollary~\ref{cor:correctness} we know  the  largest-area true triangle inscribed on  $C_0$ will be constructed on a true chord of $C_0$. In Algorithm~\ref{alg:impquad} we  consider  all the vertices of $C_0$  as the root, and we compute the largest-area true triangle on each true chord of $C_0$. Thus the algorithm works correctly.
\end{proof}

\begin{corollary}
\label{3c}
 Let $L$ be a set of imprecise points modeled as a set of $n$ parallel line segments with arbitrary length. The largest-area true triangle which selects its vertices from the vertices on $C_0$ can be found in $O(n^2)$ time. 
\end{corollary}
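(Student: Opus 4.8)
The plan is to assemble this corollary from the structural and algorithmic facts already established, since it is essentially a wrap-up: the substantive content lives in Lemma~\ref{lem:chord}, Corollary~\ref{cor:correctness}, and the behaviour of Algorithm~\ref{alg:impquad}. First I would compute the outer hull layer $C_0 = \{p_1,\ldots,p_m\}$ (with $m \le 2n$) in $O(n\log n)$ time, stored with its vertices in ccw order along the boundary so that the operation \texttt{next} takes $O(1)$ time. The restriction in the statement---that all three vertices are chosen among the vertices of $C_0$---is exactly the regime covered by Corollary~\ref{cor:correctness}, so the task reduces to certifying that Algorithm~\ref{alg:impquad} both returns the correct triangle and does so within the claimed budget.

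For correctness I would simply invoke Lemma~\ref{ltrueinc0}. By Lemma~\ref{lem:chord} and Corollary~\ref{cor:correctness}, the optimum in this regime lies on a true chord of $C_0$ and can be rooted at some vertex $p_i$; since the algorithm roots at every vertex of $C_0$ and, for each root, examines every true chord together with its (at most four) true apex candidates, the best triangle on no true chord can be missed. Hence the reported triangle is the largest-area true triangle inscribed in $C_0$.

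The main effort is the $O(n^2)$ time bound. Fix a root $a = p_i$. As the second endpoint $b$ advances one step in ccw order, the chord $ab$ rotates monotonically, and by convexity the vertex of $C_0$ farthest (in the direction perpendicular to $ab$) from the chord also advances monotonically in the same rotational sense on each of the two sub-chains into which the chord splits the hull. Therefore, over a full sweep of $b$, the pointers $c_l$ and $c_r$ each traverse the boundary at most once, yielding $O(m)$ amortized work per root. The true-triangle requirement contributes only $O(1)$ extra work per chord: when $p_i p_b p_j$ is not true, the apex is replaced by one of $p_{j\pm1}, p_{j\pm2}$, which does not disturb the monotone advancement of the pointers. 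Summing over the $m = O(n)$ roots gives $O(m^2) = O(n^2)$, and adding the hull-construction cost leaves the total at $O(n^2)$.

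The step I expect to be delicate is verifying that interleaving the ``is this chord/triangle true?'' test with the monotone pointer advancement preserves both correctness and the amortized bound. One must confirm that bounding the apex search to a constant neighbourhood of the unconstrained farthest vertex never skips the genuine true optimum on a chord; this rests on the earlier observation that a single segment contributes at most two vertices to $C_0$, and that these are cyclically adjacent, so the true apex necessarily lies within two steps of the farthest vertex. Pinning down this neighbourhood bound, and checking it is compatible with the simultaneous sweep on both sub-chains, is the one point where care is needed; everything else is routine bookkeeping over the two nested monotone scans.
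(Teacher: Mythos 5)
Your proposal follows the paper's own route for this corollary: in the paper it is exactly the combination of Lemma~\ref{ltrueinc0} (correctness of Algorithm~\ref{alg:impquad}, resting on Lemma~\ref{lem:chord} and Corollary~\ref{cor:correctness}) with the running-time claim for that algorithm, and your amortized two-pointer analysis is precisely the argument the paper leaves implicit behind ``it is easy to observe that Algorithm~\ref{alg:impquad} takes $O(n^2)$ time.'' Structurally, then, you and the paper coincide.

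However, the one step you yourself flag as delicate is resolved with a false supporting claim. You justify the constant-size apex neighbourhood by asserting that the (at most) two vertices a single segment contributes to $C_0$ are cyclically adjacent. For vertical segments this is not true in general: take a very tall segment flanked by two short ones; its upper endpoint lies on the upper hull and its lower endpoint on the lower hull, and the two are separated in the cyclic order by the endpoints of the flanking segments. The two-step bound is nevertheless correct, but for a different reason: for a fixed true base $ab$, the only vertices that cannot serve as the apex are the \emph{second} endpoints of $l_a$ and of $l_b$, i.e.\ at most two forbidden vertices in total, wherever they lie. Since the (vertical) distance to the chord is unimodal along each of the two sub-chains, the best true apex on a sub-chain is the first non-forbidden vertex reached by walking from the unconstrained farthest vertex $p_j$ in either direction, and with at most two forbidden vertices that walk terminates within two steps --- which is exactly what the candidates $p_{j\pm1},p_{j\pm2}$ in Algorithm~\ref{alg:impquad} account for. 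With this counting argument substituted for the adjacency claim, your proof is sound and matches the paper's.
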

 

\subsubsection{Algorithm: Largest-area true triangle }
From Lemma~\ref{lem:chord} we know the combinatorial structure of the largest-area true triangle:  it can be the largest-area true triangle inscribed in $C_0$, or the largest-area true triangle that selects two neighboring vertices on $C_0$ and one vertex on $C_1$, or the one that selects two non-neighboring vertices on $C_0$ and one vertex on $C_1$. The  largest-area true triangle is the largest-area triangle among them.

In the first case, from  Corollary~\ref{3c}  the largest-area true triangle can be found in  $O(n^2)$ time.  

In the second case, we consider  all the true edges of $C_0$  as the base of a  triangle. The third vertex of each triangle can be found by   a binary search  on the boundary of $C_1$.
 Let $bc$ be a true edge of $C_0$. If  a triangle $abc$  on the base  $bc$   is not true, one of the   next or previous neighbors of    $a$  on  the cyclic ordering of $C_1$ may construct the largest-area true triangle on $bc$ (see Figure~\ref{vertex4}(a,b)). In fact if  the next or previous neighbor of $a$ also belongs to $l_b$ or $l_c$, then  there always exists a vertex $a'$ on $C_0$ that makes a larger  true triangle on $bc$,  as illustrated in Figure~\ref{vertex4}(b).
  Thus, in this case again, the largest-area true triangle can be computed in $O(n \log n)$ time.

In the third case, there are $O(n^2)$ chords to be considered as the base of the largest-area true triangle. Since a chord of $C_0$ may decompose $C_1$  into two convex polygons, we should do a binary search   on each half of  $C_1$, separately. 
 This method costs $O(n^2 \log n)$ time totally. But we  can still  do better.

Let $r$ be any vertex on $C_1$; we will search for the largest-area rooted triangle on $r$. An axis system centered on $r$  will partition $C_0$  into four convex chains, so that the  largest-area true triangle should be rooted at $r$ and two points on the  other quadrants, as illustrated in Figure~\ref{vertex4}(c).

From Corollary~\ref{3c} we know if one quadrant, or two consecutive quadrants include the  other  vertices of the largest-area true triangle, we can find the largest-area true triangle in $O(n^2)$ time, since it will be constructed on the vertices of the boundary of a convex polygon. 

Suppose  two other vertices belong to two diagonal quadrants, e.g., quadrant one and quadrant three, as illustrated in Figure~\ref{vertex4}(d).  Let the cyclic ordering of $C_0$ be counterclockwise, and let $v_1$ and $v_2$ be the first and second vertices of the third quadrant in the cyclic ordering of $C_0$. 
Then we can find $f_1(v_1)$ and $f_2(v_1)$ on quadrant one, so that $v_1$,  $r$ and each of $f_1(v_1)$ and $f_2(v_1)$ construct the largest-area true triangle on the base ${v_1r}$, as we can see in Figure~\ref{vertex4}(d).
 We named $f_1(v_1)$ and $f_2(v_1)$ the \emph{starting points}. The starting points determine the starting position of looking (in ccw direction) for $f_1(v_2)$ and $f_2(v_2)$, so that  $v_2$,  $r$ and each of $f_1(v_2)$ and $f_2(v_2)$ construct the largest-area true triangle on the base ${v_2r}$, etc.

Thus, for any fixed $r \in C_1$ and any vertex $v_i \in C_0$ on the third  quadrant, we start looking for $f_1(v_{i})$ and $f_2(v_{i})$  from $f_1(v_{i-1})$ and $f_2(v_{i-2})$ (in ccw direction) on the first quadrant, respectively.

\begin{figure}
	\includegraphics{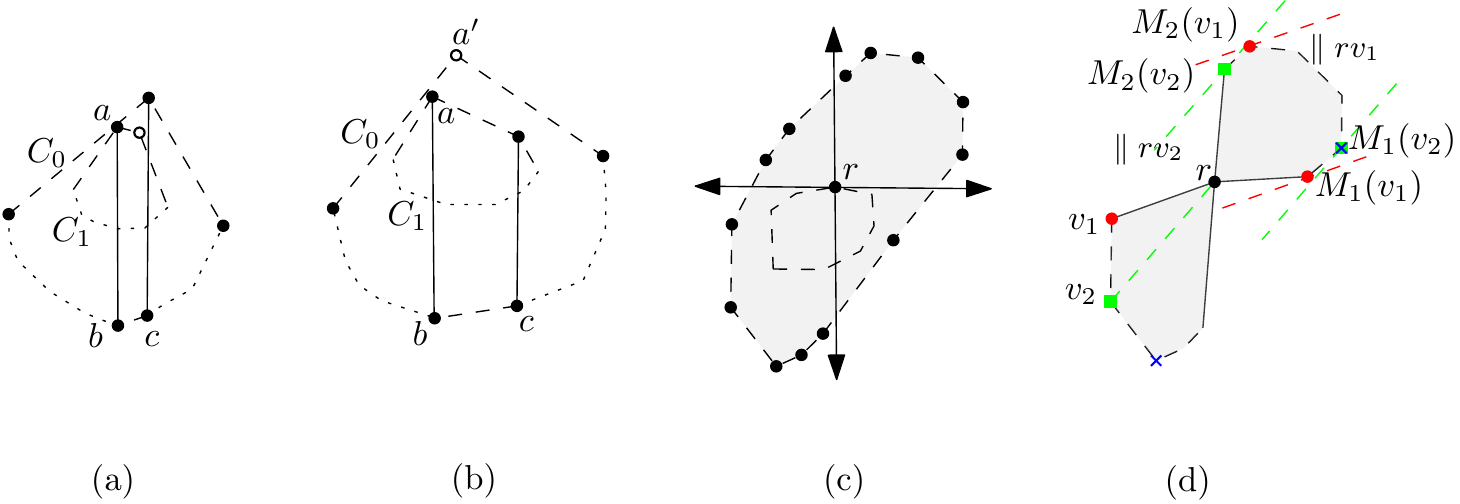}
	\centering
	\caption { (a,b) Possible cases of observing a   non-true triangle $abc$ during  a binary search on the boundary of  $C_1$. (b) If the next or previous neighbor of $a$  belongs to $l_b$ or $l_c$, there always exists a vertex $a' \in C_0$ which makes  a larger true triangle on the base $bc$. 
		(c) Selection of point $r$ on $C_1$ as the origin. (d)  In the diagonal quadrants, for a  vertex $v_2$ in the third quadrant, we only need to look  for the third vertex of  the largest-area true triangle (on the base $v_2r$) in the first quadrant   from $f_1(v_{1})$ and $f_2(v_{1})$.}
	\label{vertex4}
\end{figure}
Note that for any  vertex $r$ on $C_1$, the rooted triangle at $r$ can also be a non-true triangle.  Suppose we are looking for $f_1(v_{1})$ and $f_2(v_{1})$ of  vertex $v_1$ in quadrant three. If $r$ belongs to $l_{v_1}$, we discard $v_1$. 
 Let $r \in l_{f_1(v_1)}$ (or $r \in l_{f_2(v_1)}$). Then one of the next or previous neighbors of $f_1(v_1)$ should be the area-maximizing vertex on the base $v_1r$, and this point cannot belong to $l_{v_1}$ or $l_r$.

 Therefore, in the case where we are looking on diagonal quadrants, e.g., quadrant one and quadrant three, for any vertex $r \in C_1$, we first find the  starting points for the first vertex in quadrant three in $O( \log n)$ time, and then we  only  wrap around $C_0$ in at most one direction and never come back. Thus, for any $r \in C_1$,  the largest-area true triangle can be found in $O(n)$ time, and the largest-area true  triangle  can be found in $O(n^2)$ time totally. The whole  procedure of computing the largest-area true triangle is outlined in Algorithm~\ref{alg:mama}.
 \begin{obs}
 	Let $L$ be a set of $n$ imprecise points modeled as  parallel line segments with arbitrary length. The  largest-area rooted true triangle can be found in $O(n \log n)$ time.
 \end{obs}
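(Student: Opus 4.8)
The plan is to reduce the rooted problem to a single maximization of a determinant (equivalently, a parallelogram area) over pairs of hull vertices, and then exploit monotonicity. Translate the plane so that the given root $r$ is the origin, and treat each candidate endpoint $z \in Z \setminus l_r$ as a vector. For any two candidates $p$ and $q$, the area of the rooted triangle $rpq$ equals $\tfrac{1}{2}|\det(p,q)|$ with $r$ at the origin. First I would argue, by the same sweeping/convexity argument used in Observation~\ref{obs:atendpoints} and Lemma~\ref{lem:chord}, that both non-root vertices of an optimal rooted true triangle may be taken on $C_0=CH(Z)$: fixing one vertex and sweeping a line parallel to the opposite edge pushes an interior vertex outward onto $\partial CH(Z)$ without decreasing the area. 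Computing $C_0$ takes $O(n\log n)$ time, which already matches the lower bound just proved.

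Next I would find the optimal pair on $C_0$ by a rotating-calipers traversal. For a fixed first vertex $p\in C_0$, the area-maximizing second vertex is the hull vertex farthest (in perpendicular distance) from the supporting line of $rp$. As $p$ advances counterclockwise around $C_0$, the direction of the line $rp$ turns monotonically, and hence the farthest vertex $q=q(p)$ also advances monotonically around $C_0$. Therefore a single synchronized sweep of the two pointers $p$ and $q$ visits each hull vertex a constant number of times and reports $\max_p \tfrac{1}{2}|\det(p,q(p))|$ in $O(n)$ time.

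Then I would enforce the \emph{true} constraint. A pair $(p,q)$ produced by the sweep is valid only if $l_p$, $l_q$ and $l_r$ are three distinct segments. Whenever the computed $q(p)$ lies on $l_r$ or on $l_p$, I would replace it by one of the constantly many neighboring hull vertices, exactly as in the neighbor-substitution arguments accompanying Lemma~\ref{lem:chord} and Figure~\ref{vertex4}: if the area-maximizing vertex falls on a forbidden segment, an adjacent vertex of $C_0$ yields the largest \emph{true} triangle on that base. Since this adds only $O(1)$ work per base and the monotone sweep never backtracks, the constraint handling does not change the $O(n)$ bound. Combined with the hull computation, the total running time is $O(n\log n)+O(n)=O(n\log n)$.

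The main obstacle is the monotonicity claim together with the correctness of the constant-size neighbor substitution: one must verify that as $p$ circles the hull the farthest-from-$rp$ pointer is genuinely monotone (so the calipers never reverse), and that stepping to an adjacent hull vertex when $q(p)$ is forbidden still produces the optimal true triangle for that base rather than forcing a vertex onto $C_1$. Both follow from convexity, but they must be checked carefully in the degenerate cases where $r$ is interior to $C_0$ and where several endpoints are nearly collinear with $r$.
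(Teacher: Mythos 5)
Your rotating-calipers machinery is sound as far as it goes: since the root $r$ lies in $C_0$, the direction of $rp$ does turn monotonically as $p$ advances along the hull, so the farthest-vertex pointers advance monotonically and the sweep is $O(n)$. The fatal problem is the structural claim the sweep rests on, namely that both non-root vertices of an optimal rooted true triangle may be taken on $C_0=CH(Z)$. This is false, and the concern you flag in your last paragraph (``forcing a vertex onto $C_1$'') is not a checkable degeneracy but exactly what happens. Concretely: let the root be $r=(0,0)$ on a tiny segment $l_r$, let $l_1$ be the segment from $(1,-10)$ to $(1,10)$, and let $l_2,l_3$ be (near-degenerate) segments at $(0.5,0.2)$ and $(-0.5,0.1)$. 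Then $C_0$ has vertices $(1,10),(1,-10),(-0.5,0.1)$, and $(0.5,0.2)$ lies strictly inside $C_0$, i.e.\ on $C_1$. The only true pairs of hull vertices are $\{(1,\pm 10),(-0.5,0.1)\}$ (the pair $(1,10),(1,-10)$ shares $l_1$), so your sweep, including its $O(1)$ neighbor substitutions on $C_0$, returns area at most $\frac{1}{2}(5.1)=2.55$; but the true triangle $r,(1,-10),(0.5,0.2)$ has area $\frac{1}{2}(5.2)=2.6$. The optimum uses a vertex of $C_1$, which no constant-size substitution along $C_0$ can reach. The sweeping argument of Observation~\ref{obs:atendpoints} only pushes vertices to segment \emph{endpoints}; it cannot push them onto the first hull layer, because the sweep may exit $C_0$ at an endpoint of $l_r$ or of the other chosen vertex's segment, and Lemma~\ref{lem:chord} itself says no more than that two vertices reach $C_0$ --- in the rooted setting the root effectively uses up one of those two, leaving one free vertex that can genuinely sit on $C_1$ (this also does not disappear if you build the hull of $Z$ minus $l_r$'s endpoints; the example above is unchanged by that modification).

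This is precisely why the paper does not reduce the rooted problem to a single caliper sweep over $C_0$. Its proof splits on the location of $r$: if $r$ is a vertex of $C_0$ it appeals to Corollary~\ref{cor:correctness} (rooted triangles inscribed in $C_0$, found by scanning bases $rv$ with corrections as in Figure~\ref{vertex4}); if $r$ is interior, it places an axis system at $r$, partitions $C_0$ into four convex chains, and for each of the $O(n)$ candidate bases $rv$ locates the area-maximizing third vertex by an $O(\log n)$ search with $O(1)$ corrections when that vertex lands on a forbidden segment --- and the surrounding machinery of Section~\ref{sec:alen} (the binary searches involving $C_1$, the ``starting points'' for diagonal quadrants) exists exactly because one vertex of an optimal true triangle can escape $C_0$. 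To repair your argument you would need a rooted analogue of Lemma~\ref{lem:chord} that controls where the off-hull vertex can lie and a search that includes those candidates, at which point you are rebuilding the paper's proof rather than shortcutting it; as written, your algorithm is $O(n\log n)$ but returns a suboptimal triangle on inputs like the one above.
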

 
 \begin{proof}
 	Let $r$ be the root. In Corollary~\ref{cor:correctness} we considered  the case where $r$ is a vertex on the boundary of $C_0$.
 	If $r$ is a point inside $C_0$, we set $r$ as the origin and compute the   four possible quadrants of $C_0$. It is easy to observe that the largest-area  true triangle rooted at $r$ can be found in $O(n \log n)$ time. 
 	
 \end{proof}
 
 \begin{algorithm}[H]
 	\caption{\mama  }
 	\label {alg:mama}
 	{\bf Legend} Operation {\texttt{next} means the next vertex in counterclockwise order}\\
 	{\bf Input} {$L=\{l_1,\ldots,l_{n}\}$}\\
 		
 	{\bf Output} {: largest-area true triangle}\\
 	$Z$= the set of all the endpoints of elements of $L$\\
 	$C_0$=  $CH(Z)$\\
 	$C_1$= $CH(Z \setminus \partial CH(C_0))$\\
 	$T_{C_0}$= {\sc Largest-Inscribed-Triangle($C_0$)}\\
 		\While{there is an unprocessed vertex $r$ on $C_1$}
 		 	{
 		 		$Q_1,Q_2,Q_3,Q_4$=Partitionioning $C_0$ into four convex chains (in ccw direction) by considering an axis system centered on $r$.\\
 		 		$T_{C_0,C_1}$=Max(Largest-Inscribed-Triangle($Q_1 \cup Q_2 \cup \{r\}$),
 		 		{\sc Largest-Inscribed-Triangle($Q_2 \cup Q_3 \cup \{r\}$)},
 		 		{\sc Largest-Inscribed-Triangle($Q_3 \cup Q_4 \cup \{r\}$)},
 		 			{\sc Largest-Inscribed-Triangle($Q_4 \cup Q_1 \cup \{r\}$)})\\
 		 			$v$= the first vertex of $Q_3$ \\
 		 			\While{there is an unprocessed vertex $v$ $\in$ $Q_3$}
 		 			{
 		 				\If{v $\in l_r$}
 		 				{
 		 					v = \texttt{next}(v)\\
 		 				}	
 		 				$f_1(v),f_2(v)$= the farthest vertices from $vr$ on $Q_1$\\
 		 				
 		 				\If{$f_1(v)$ or $f_2(v)$ $\in \{l_r,l_v\}$}
 		 				{
 		 					update $f_1(v)$ or $f_2(v)$ with its next or previous neighbor which has the farthest vertical distance from $vr$\\
 		 				}
 	 			$T_{C_0,C_1}$=Max($vrf_1(v)$,$vrf_2(v)$,	$T_{C_0,C_1}$)	\\
 		 		$v$=\texttt{next}($v$)\\
 		 	}
 		 		repeat above while loop for $Q_2$ and $Q_4$\\
 		 		$r$= \texttt{next}($r$)
}
 	
 	\Return\textsc Max($T_{C_0}$,$T_{C_0,C_1}$)\\
 	
 \end{algorithm}
 %

\begin{figure}
	\includegraphics{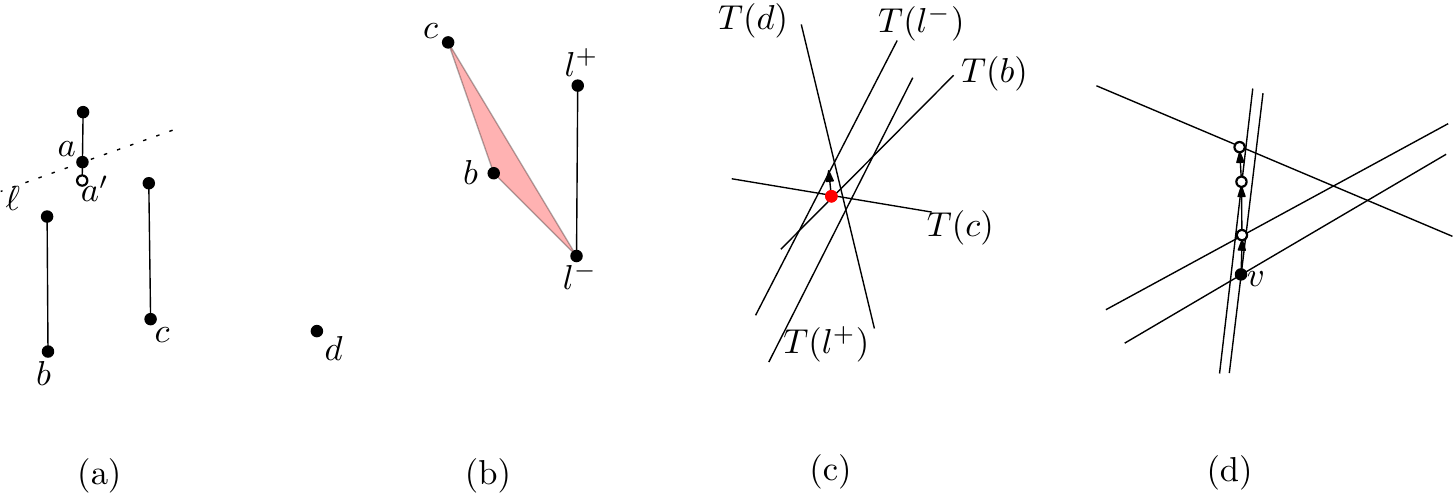}
	\centering
	\caption {
		(a) The smallest-area true triangle selects its vertices at  the endpoints of line segments. (b,c) The smallest-area true triangle on a set consists of one imprecise and three
		single points, and the optimal solution  in the dual space. (d) We  need to continue the movements in the up and down direction,  when we encounter a line   which is parallel to the ones intersecting at $v$.}
	\label{minar}
\end{figure}

\begin{theorem}\label{mamaquad}
 Let $L$ be a set of $n$ imprecise points modeled as a set of parallel line segments with arbitrary length. The solution of the problem \mama  can be found in $O(n^2)$ time. 
\end{theorem}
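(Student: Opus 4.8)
The plan is to assemble the structural and algorithmic pieces already established into a single case analysis, driven entirely by Lemma~\ref{lem:chord}. That lemma tells us the largest-area true triangle must have at least two vertices on $C_0$ and its remaining vertex on $C_0\cup C_1$, so up to combinatorial type it falls into exactly three families: (i) all three vertices on $C_0$; (ii) two \emph{neighboring} vertices of $C_0$ together with one vertex of $C_1$; and (iii) two \emph{non-neighboring} vertices of $C_0$ together with one vertex of $C_1$. I would first do the preprocessing: compute the endpoint set $Z$, then $C_0=CH(Z)$ and $C_1=CH(Z\setminus\partial CH(C_0))$, each in $O(n\log n)$ time, with vertices stored in counterclockwise order along the boundaries. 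The algorithm then computes the best candidate in each family and returns the maximum; this is exactly what Algorithm~\ref{alg:mama} does, so the theorem reduces to certifying the cost of each family.

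For family~(i), I would invoke Corollary~\ref{3c} directly: the best triangle inscribed in $C_0$ is found by Algorithm~\ref{alg:impquad} in $O(n^2)$ time, with correctness given by Lemma~\ref{ltrueinc0}. For family~(ii), I would take each of the $O(n)$ true edges of $C_0$ as a base and binary-search for the farthest vertex on $C_1$, handling the degenerate situation (where the farthest vertex shares a segment with an endpoint of the base) by testing its two neighbors, or, as in Figure~\ref{vertex4}(b), replacing it by the vertex $a'\in C_0$ that yields a larger true triangle. Each base costs $O(\log n)$, so family~(ii) costs $O(n\log n)$ in total and is dominated by family~(i).

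The main obstacle is family~(iii), where the naive strategy of binary-searching $C_1$ over all $O(n^2)$ chords of $C_0$ costs $O(n^2\log n)$. The key idea I would exploit is to root the search at each vertex $r\in C_1$ rather than at a chord of $C_0$: an axis system centered at $r$ partitions $C_0$ into four convex chains $Q_1,\dots,Q_4$, and the two $C_0$-vertices of the optimal triangle either lie in one chain or in two consecutive chains --- handled by the inscribed-triangle routine via Corollary~\ref{cor:correctness} --- or in two diagonal chains, say $Q_1$ and $Q_3$. For the diagonal subcase I would use a monotonicity argument: as the base vertex $v$ advances in counterclockwise order along $Q_3$, the area-maximizing partners $f_1(v),f_2(v)$ on $Q_1$ advance monotonically, so after locating the starting points $f_1(v_1),f_2(v_1)$ in $O(\log n)$ time I only sweep the pointers forward around $C_0$, never backtracking. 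Degeneracies where $r\in l_v$ (discard $v$) or where a partner shares a segment with $r$ or $v$ (shift to its neighbor, which provably lies off $l_v$ and $l_r$) are absorbed into this sweep. Hence each root $r\in C_1$ is processed in $O(n)$ time.

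Summing the diagonal-subcase cost over the $O(n)$ roots on $C_1$ gives $O(n^2)$ for family~(iii), and the consecutive-chain subcases are charged to the $O(n^2)$ inscribed-triangle computations. Combining the three families, every case is bounded by $O(n^2)$, the preprocessing is $O(n\log n)$, and the overall running time is $O(n^2)$. Correctness follows because Lemma~\ref{lem:chord} guarantees the true optimum belongs to one of the three families, and each family is searched exhaustively up to the neighbor-shifting rules that repair non-true candidates, so the reported maximum equals the largest-area true triangle.
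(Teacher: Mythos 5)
Your proposal follows the paper's own argument essentially step for step: the same three-family decomposition driven by Lemma~\ref{lem:chord}, Corollary~\ref{3c} (Algorithm~\ref{alg:impquad}) for the all-on-$C_0$ family, binary search over the true edges of $C_0$ for the neighboring-vertices family, and the per-root quadrant partition of $C_0$ with a forward-only monotone sweep (plus the same neighbor-shifting repairs for non-true candidates) for the family with a vertex on $C_1$, exactly as in Algorithm~\ref{alg:mama}. Even the one soft spot in your write-up---how the consecutive-quadrant subcase is accounted for so that the total stays $O(n^2)$ rather than paying a quadratic inscribed-triangle computation per root of $C_1$---is glossed over in precisely the same way the paper glosses over it, so the proposal matches the paper's proof in both structure and level of rigor.
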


\newpage
\section{\mimi problem} \label{sec:minmin}
 In this section, we will consider the following problem: given a set $L=\{l_1,...,l_n\}$ of parallel line segments, choose a set $P=\{p_1,...,p_n\}$ of points, where  $p_i \in l_i$,  such that the size of the smallest-area triangle with corners at $P$ is as small as possible among all possible choices of $P$ (see Figure~\ref{mnmxmxmn}(b)).
As in the case of \mama, this problem is equivalent to finding three points on distinct elements of $L$ that minimize the area of the resulting triangle.

 The problem of finding the smallest-area triangle in a set of $n$ precise points is 3SUM-hard\footnote{The class of problems which  is unknown to be solvable in $O (n^{2-\epsilon})$ time for some $\epsilon>0$.}, as it requires testing whether any triple of points is collinear.
   In our case, if we find three collinear points on three distinct segments, the smallest-area true triangle would have zero area. 
   \begin{obs}
   	Let $L$ be a set of $n$ vertical line segments. Deciding whether there are three collinear points on three distinct line segments can be done in $O(n^2)$ time.
   \end{obs}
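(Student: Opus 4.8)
The plan is to reduce this decision problem to detecting a \emph{line transversal} that meets three of the segments, and then to solve that by a coverage computation in the dual parameter plane of lines.

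First I would establish the following equivalence. Write $l_i$ as the vertical segment at abscissa $x_i$ spanning $[l_i^-,l_i^+]$. A non-vertical line $y=\alpha x+\beta$ meets $l_i$ exactly when $l_i^-\le \alpha x_i+\beta\le l_i^+$. Hence three points, one on each of three distinct segments, are collinear if and only if either some non-vertical line meets three of the segments, or three segments share a common abscissa. In the first case the three intersection points are automatically distinct, since the segments are disjoint and a non-vertical line crosses each vertical segment at most once; so any such transversal witnesses a collinear triple, and conversely the line through any collinear triple is such a transversal. The second (vertical-transversal) case is detected in $O(n\log n)$ by sorting on $x$, and under the paper's assumption of distinct abscissae it does not occur. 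It therefore remains to decide whether a non-vertical line meets at least three segments.

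Next I would dualize to the parameter plane of lines, mapping $y=\alpha x+\beta$ to the point $(\alpha,\beta)$. The condition $l_i^-\le \alpha x_i+\beta\le l_i^+$ describes a \emph{slab} $T_i$, the region between the two parallel lines $\beta=-x_i\alpha+l_i^-$ and $\beta=-x_i\alpha+l_i^+$ of slope $-x_i$. A parameter point lies in $T_i$ iff the corresponding line meets $l_i$, so a line meeting at least three segments exists iff some point of the plane is covered by at least three of the slabs $T_1,\dots,T_n$. Thus the task becomes: decide whether the maximum slab-coverage is $\ge 3$. To do this in $O(n^2)$ I would build the arrangement of the $2n$ slab-boundary lines in $O(n^2)$ time and space (incremental construction or topological sweep). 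The coverage count is constant on each face, and across an edge — which lies on the boundary of exactly one slab $T_i$ — it changes by $\pm1$ depending on whether the crossing enters or leaves $T_i$. I would initialize the count on one face by testing a representative point against all $n$ slabs in $O(n)$ time, then propagate it to every face by a DFS of the arrangement, updating by $\pm1$ at each traversed edge; the entire traversal is $O(n^2)$. While traversing I would also record, at each arrangement vertex, the number of slabs containing it (which dominates the coverage of the incident faces). The answer is ``yes'' iff some face or vertex attains count $\ge 3$.

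The main obstacle is the degeneracy and bookkeeping in this last step. The two boundary lines of each slab are parallel, so the arrangement is never in general position, and a point covered by three slabs need \emph{not} lie in the interior of a depth-$3$ face: three slabs can meet in a single point where three boundary lines are concurrent, which is exactly the configuration we are hunting. This is why I evaluate coverage at arrangement vertices as well as faces, so that such a ``just-touching'' triple is still certified. The per-edge labeling (which slab, and entering versus leaving) and the handling of equal-abscissa segments (which create several mutually parallel boundary lines) must be implemented carefully so that the count is correct at these degenerate vertices. Granting this, correctness is immediate from the transversal equivalence, and the running time is $O(n^2)$, dominated by constructing and traversing the arrangement.
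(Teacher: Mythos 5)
Your proposal is correct and takes essentially the same route as the paper: both dualize each vertical segment to a slab (strip) in the plane of lines and reduce the question to whether some point of the dual plane is covered by slabs of three distinct segments, which is then decided by an $O(n^2)$ processing of the arrangement of the $2n$ boundary lines. The only difference is implementation detail---the paper invokes a topological sweep of the strip arrangement, whereas you explicitly construct the arrangement and propagate coverage counts by traversal, with welcome extra care for the degenerate cases (concurrent boundary lines and vertical transversals) that the paper's terse argument glosses over.
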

\begin{proof}
	The idea is look at the vertical segments in the dual space, where each vertical segment is transformed to a strip. Then, if there exists a common point in the strips corresponding to  three distinct line segments, this point denotes a line  passing through those segments in the primal space. This can easily be checked by a topological sweep of the arrangement of strips in which a curve $l$ sweeps over the intersection points, e.g., from left to right in the dual space, while $l$ keeps track of the required information about the intersected strips by $l$. Since a topological sweep of an arrangement can be done in $O(n^2)$ time~\cite{EDELSBRUNNER1989165}, we can test whether  there are three collinear points on three distinct segments  in $O(n^2)$ time.
	\end{proof}
   
 In the following we will show that when  the  smallest-area true triangle has non-zero area, it  selects its vertices on the endpoints of the line segments. 
 
 \begin{lemma}
\label{mina}
  Let $L$ be a set of imprecise points modeled as a set of parallel line segments.  Suppose there is no zero-area triangle in $L$. The smallest-area true triangle selects its vertices on the endpoints of the line segments.
\end{lemma}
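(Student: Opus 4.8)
The plan is to show that if the smallest-area true triangle $pqr$ has a vertex, say $p$, strictly in the interior of its segment $l_p$, then we can slide $p$ to an endpoint without increasing the area, reaching a contradiction with minimality (assuming we stay at nonzero area, which the hypothesis of no zero-area triangle guarantees). This mirrors the sweeping argument used in Observation~\ref{obs:atendpoints} for \mama, but the direction of the inequality is reversed: there we pushed the apex \emph{away} from the base to \emph{enlarge} the triangle; here I would push $p$ \emph{toward} the base line $qr$ to \emph{shrink} it.

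Concretely, I would fix the base $qr$ and consider the family of lines parallel to $qr$; the area of $pqr$ is proportional to the (perpendicular) distance from $p$ to the supporting line of $qr$. Since $l_p$ is a vertical segment and $qr$ is a chord with some fixed slope, the distance from a point of $l_p$ to $qr$ is a \emph{linear} (hence monotone) function of the height of the point along $l_p$, as long as $l_p$ is not parallel to $qr$. Therefore the minimum of this distance over the closed segment $l_p$ is attained at one of its two endpoints $l_p^{+}$ or $l_p^{-}$. Moving $p$ to whichever endpoint minimizes the distance yields a true triangle (the three points still lie on distinct segments, since only $p$ moved within $l_p$) of area no larger than that of $pqr$. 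If the area strictly decreased we contradict minimality directly; if it is equal, we may simply take the endpoint as an equally optimal solution, so there is an optimal solution with $p$ at an endpoint. Repeating the argument for each vertex in turn places all three vertices at endpoints.

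The one genuinely delicate point — and the step I expect to be the main obstacle — is the \textbf{degenerate case} where the supporting line of $l_p$ is parallel to the base $qr$, i.e. $l_p$ is vertical and $qr$ is also vertical; more precisely, when $qr$ is parallel to the segments in $L$. In that situation the distance from every point of $l_p$ to $qr$ is constant, so sliding $p$ does not change the area and the monotonicity argument gives no preferred endpoint. Here I would argue that we may still choose either endpoint without changing the area (exactly as in the parenthetical remark of Observation~\ref{obs:atendpoints}), so an optimal solution with $p$ at an endpoint still exists; the equal-area case is harmless because the statement only claims the existence of an optimal solution at endpoints, not uniqueness. I would also need to confirm that pushing $p$ to an endpoint cannot accidentally create a zero-area (collinear) configuration that the hypothesis forbids — but since the area only decreases toward a positive infimum bounded away from zero whenever no three of the \emph{endpoints} are collinear, and the hypothesis "no zero-area triangle in $L$" rules out collinear endpoint triples as well, the endpoint triangle remains nondegenerate. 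This closes the argument.
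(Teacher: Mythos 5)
Your proof is correct and is essentially the paper's own argument: the paper considers a line through the interior vertex parallel to the base and sweeps it toward the base until it leaves $l_p$ at an endpoint, which is exactly your observation that the distance from a point of $l_p$ to the line through $qr$ is linear, hence minimized at an endpoint of $l_p$. Your explicit handling of the degenerate case where $qr$ is parallel to the segments, and of the impossibility of sliding into a zero-area configuration, is somewhat more careful than the paper's terse two-line proof, but it does not constitute a different approach.
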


\begin{proof} Suppose the lemma is false. Let $abc$ be the smallest-area true triangle, and suppose at least one of its vertices e.g., $a$, is not located at the endpoints  of $l_a$ (see Figure~\ref{minar}(a)). Consider a line $\ell$ through $a$ and parallel to $bc$. If we sweep $\ell$ towards $bc$, it will  intersect $l_a$ until it leaves   $l_a$ at a point $a'$.  Thus $a'$ can be substituted for $a$ to give us a smaller area true triangle, a contradiction.
\end{proof}

 We now introduce some notation. Let $L$ be a set of imprecise points modeled as parallel line segments, and let $Z$ be the set of all endpoints of $L$. For a given point $p=(p_x,p_y)$, we consider $T(p)=p_x x-p_y$ as  a transformation of $p$ to a dual space, and $A(Z)$ as the arrangement of the lines that are the transformation of $Z$ in dual space, as  Edelsbrunner \emph{et al}.~\cite{Edel} defined for a given set of points in the plane. 
  Let $T(l_a^{+})$ (resp. $T(l_a^{-})$) denote the transformation of the upper (resp. lower) endpoint of $l_a$ in the dual space.
 $A(Z)$ partitions the plane into a set of convex regions, and  $A(Z)$ has  total complexity of $O(n^2)$.  
 
  \begin{figure}
 	\includegraphics{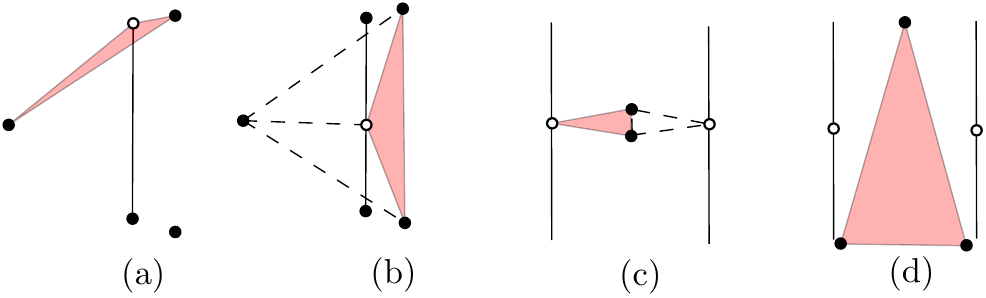}
 	\centering
 	\caption {Some observations on the \mami and the \mima problems. (a) The solution of the  largest smallest-area triangle 
 		on a set consisting of one imprecise point and three
 		single points, where it must be located at the endpoints.  (b) The  largest smallest-area triangle of (a). (c) The solution of smallest largest-area triangle on a set consisting of two imprecise points and two fixed points selects the interior points of the line segments. (d) The smallest largest-area triangle on a set consisting of two imprecise points and three fixed points.}
 	\label{notendp}
 \end{figure}

\subsection{Algorithm}
From  Lemma~\ref{mina} we know that we  only need to consider the endpoints of the line segments. For a given set $L$ of imprecise points with the endpoints in $Z$, we first construct $A(Z)$. Let $abc$ be the smallest-area triangle in the primal space. Each vertex $ v=\{T(l_a^{+,-}) \bigcap T(l_b^{+,-})\}$ in a face  $F$ in $A(Z)$ belongs to two different segments, since the endpoints of a line segment are mapped to two parallel lines in the dual space, (see Figure~\ref{minar}(c)). 
For  $v \in F$,  we first consider the true edges of $F$  (which belong to  distinct segments in the primal space).
 If edge $e \subset T(l_c^{+,-})$ is the first candidate for constructing the smallest-area  triangle on $v$,  it should be immediately located above or below $v$ among all the 
lines, as our duality preserves the vertical distances.
 If  $e$ does not determine a distinct segment in the primal space,  we should continue our movement  in both up and down directions until we reach a line in dual space which determines  a distinct line segment in primal space. In this situation, we may need to cross among two neighboring faces of $F$, as illustrated in Figure~\ref{minar}(d).  Note that since we can use our procedure to determine if 3  points of $n$ distinct line segments in the primal space are collinear, this problem is also 3SUM-hard.

 Thus we can compute the  smallest-area true triangle that can be constructed on each vertex $v \in A(Z)$. If it is smaller than previously computed optimal solution, $T_{min}$, we remember it. If we store $A(Z)$ in a reasonable data structure (e.g. doubly connected edge list),  the time required to find the answer is $O(n^2)$~{\cite[Corollary 2.5]{Edel}}.

\begin{lemma}
Let  $L$  be a set of imprecise points modeled as parallel line segments, and let $A(Z)$ be the arrangement of the transformation of $Z$ (which consists of the endpoints of $L$). There exists a face $F$ in $A(Z)$ such that the smallest-area true triangle uses one of its vertices.
\end{lemma}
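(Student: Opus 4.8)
The plan is to combine the structural fact of Lemma~\ref{mina} with a precise dictionary between triangle area in the primal plane and vertical distance in the dual arrangement $A(Z)$. First I would record the key property of the transformation $T(p)=p_x x-p_y$: for two points $a,b$ with $a_x\neq b_x$ the dual lines $T(a)$ and $T(b)$ are non-parallel and meet in a single point $v=T(a)\cap T(b)$, which is exactly the dual of the primal line through $a$ and $b$. A short computation---equivalently, an area-preserving vertical shear that makes the segment $ab$ horizontal---shows that the area of $abc$ equals $\tfrac12\,|a_x-b_x|$ times the \emph{vertical} distance from $c$ to the line $ab$, and that this vertical distance is precisely the vertical distance in the dual from $v$ to the line $T(c)$. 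This identity is what makes the arrangement the right object to search.

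With this dictionary in hand the existence claim is short. Assume first that no zero-area true triangle exists (the degenerate case is caught by the separate collinearity test). By Lemma~\ref{mina} the smallest-area true triangle $abc$ has all three vertices at endpoints of segments, so $a,b,c\in Z$ and each dualizes to one of the $2n$ lines of $A(Z)$. Since $abc$ is a true triangle, $a$ and $b$ lie on distinct segments, and under the standing general-position assumption that distinct segments have distinct $x$-coordinates we have $a_x\neq b_x$; hence their dual lines intersect in a point $v$ that is a genuine vertex of $A(Z)$. As every arrangement vertex is incident to at least one face, I take $F$ to be any face of $A(Z)$ having $v$ as a vertex. The smallest-area true triangle then \emph{uses} the vertex $v$ of $F$ in the sense that its base $ab$ is dual to $v$, which is precisely the assertion of the statement.

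To confirm that processing $v$ recovers the optimum I would invoke the area identity once more: with the base $ab$ fixed, the factor $|a_x-b_x|$ is constant, so among all admissible third vertices the triangle area is minimized exactly when the vertical distance from $v$ to $T(c)$ is smallest. The dual lines realizing the least vertical distance immediately above and below $v$ are those bounding the faces incident to $v$, so scanning outward from $v$ exposes the optimal third vertex $c$ among the edges of the faces around $v$.

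The step I expect to be the main obstacle is not the clean existence statement but its robust interpretation. The dual line nearest $v$ in the vertical direction may belong to the \emph{same} segment as $a$ or $b$---so the candidate triangle would not be true---or may be one of the two parallel lines forming a strip. Handling this requires continuing the vertical scan past such lines, crossing into neighbouring faces, until a line from a genuinely distinct segment is reached, together with the argument that skipping these lines cannot hide a smaller true triangle; this is exactly the case analysis illustrated around Figure~\ref{minar}(d). Verifying that this bounded scan at each of the $O(n^2)$ arrangement vertices always reaches the area-minimizing third vertex is where the real work lies, whereas the existence of the face $F$ itself follows immediately once the base is seen to dualize to an arrangement vertex.
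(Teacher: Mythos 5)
Your proposal is correct and follows essentially the same route as the paper: it combines Lemma~\ref{mina} (vertices at segment endpoints) with the fact that the duality $T(p)=p_xx-p_y$ preserves vertical distances, so that the base of the optimal triangle dualizes to a vertex $v$ of $A(Z)$ and the third vertex is recovered by a vertical scan around $v$ that skips lines of non-distinct segments. In fact your write-up is more complete than the paper's own proof, which only cites these facts (distinct-segment vertices, order- and vertical-distance-preserving duality, and the scan past non-distinct segments) without spelling out the area identity $\mathrm{area}(abc)=\tfrac12\,|a_x-b_x|\cdot d_{\mathrm{vert}}(c,ab)$ that justifies them.
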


\begin{proof}
The correctness proof of this lemma comes from these facts: all the vertices of $A(Z)$ belong to distinct segments in primal space,  our duality preserves the vertical distances and it is order preserving, and  we will find the smallest-area true triangle on each vertex of $A(Z)$, even when we encounter to non-distinct segments.  
\end{proof}

\begin{theorem}
Let  $L$  be a set of $n$ imprecise points modeled as parallel line segments. The solution of the problem \mimi can
be found in $O(n^2)$ time.
\end{theorem}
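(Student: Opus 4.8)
The plan is to assemble the pieces already established and handle the degenerate case separately. First I would dispose of the zero-area case. By the earlier observation, we can test in $O(n^2)$ time (via a topological sweep of the arrangement of dual strips) whether there exist three collinear points on three distinct segments. If such a triple exists, then placing the points accordingly yields a true triangle of area zero, which is clearly the smallest attainable value of $T_{min}$; we report $0$ and stop, all within the claimed bound.

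Otherwise no zero-area true triangle exists, and I would invoke Lemma~\ref{mina}: every vertex of the smallest-area true triangle lies at an endpoint of its segment. Hence it suffices to search over the endpoint set $Z$. By the preceding lemma, the optimal triangle uses a vertex of the arrangement $A(Z)$, so the whole search can be organized around $A(Z)$. I would construct $A(Z)$ in $O(n^2)$ time and $O(n^2)$ space, storing it in a doubly connected edge list.

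The core step is, for each vertex $v$ of $A(Z)$, to compute the smallest-area true triangle whose base is the pair of endpoints whose dual lines meet at $v$. Because the duality $T(p)=p_x x - p_y$ preserves vertical distances and the vertical order of points, the third vertex that minimizes the area corresponds to the dual line lying immediately above or below $v$ among all lines of $A(Z)$. When that nearest line belongs to one of the two segments already incident to $v$, I continue the vertical search upward and downward---possibly stepping into the faces adjacent to the one containing $v$---until a line on a third, distinct segment is reached. Sweeping all vertices in the vertical order maintained across the arrangement lets each such short scan be charged to arrangement features it passes, so the total cost stays $O(n^2)$~\cite[Corollary 2.5]{Edel}. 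Taking the minimum over all vertices then returns the optimal value for \mimi.

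The main obstacle is the amortized analysis of this last step: done naively, searching for the nearest distinct-segment line above and below every vertex could cost $O(n)$ per vertex and $O(n^3)$ overall. The key is that the topological sweep maintains the full vertical ordering of the dual lines as it advances across $A(Z)$, so the immediate vertical neighbors of each vertex---and the short scans needed when such a neighbor merely repeats an already-used segment---are available without rescanning from scratch, which is exactly what keeps the running time quadratic.
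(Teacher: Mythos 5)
Your proposal is correct and follows essentially the same route as the paper: dispose of the zero-area case by the $O(n^2)$ collinearity test, reduce to endpoints via the lemma on non-zero-area optimal triangles, build the dual arrangement $A(Z)$ under the vertical-distance-preserving duality, and for each arrangement vertex scan vertically (skipping lines of the two segments already used, possibly crossing adjacent faces) for the closest line of a distinct segment, citing the same $O(n^2)$ arrangement bound. The only cosmetic difference is your amortization worry at the end: since a vertex $v$ is formed by dual lines of endpoints of two segments, at most two lines in each vertical direction (the duals of the two remaining endpoints of those segments) can be ``repeats,'' so each scan is in fact of constant length and no sweep-based charging argument is needed.
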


\newpage
\begin{figure}
	\includegraphics{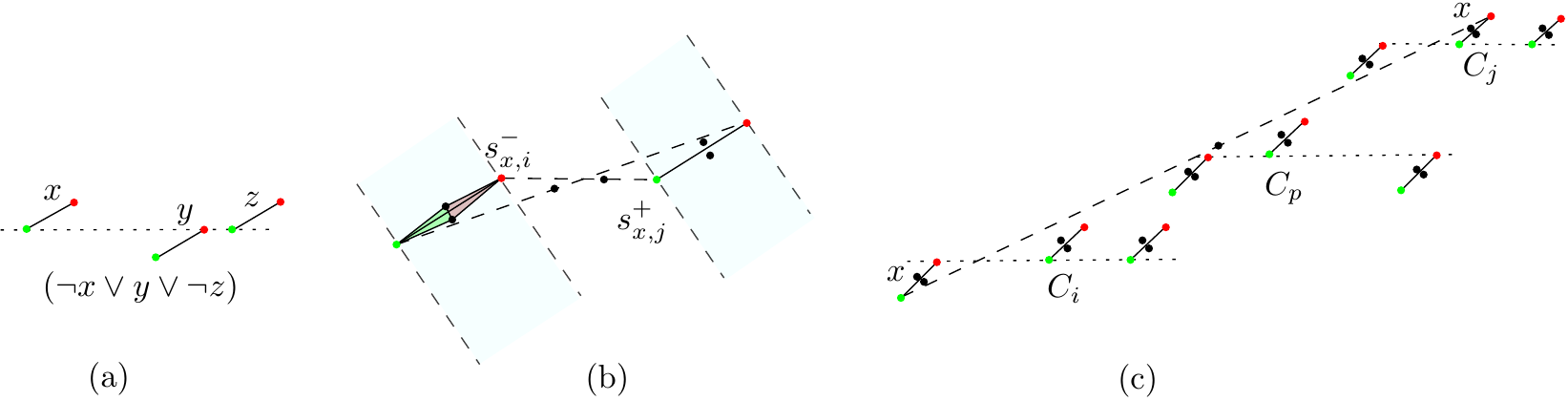}
	\centering
	\caption{(a) A clause gadget. (b) The structure of a common variable $x$ between two clauses $C_i$ and $C_j$.  (c) Three clause gadgets in a SAT instance.}
	\label{clause}
\end{figure}
\section{\mami problem} \label{sec:maxmin}
The problem we study in this section is the following: given a set $L=\{l_1,...,l_n\}$ of parallel line segments, choose a set $P=\{p_1,...,p_n\}$ of points, where  $p_i \in l_i$,  such that the size of the smallest-area triangle with corners at $P$ is as large as possible among all possible choices of $P$ (see Figure~\ref{mnmxmxmn}(c)).
 As we can see in  Figure~\ref{notendp}(a,b), the solution does not necessarily select its vertices on the endpoints of the line  segments.  We show this problem is hard.

\subsection{ \mami problem is NP-hard}
  We  reduce from SAT. 
  Given a SAT instance, we choose a value $\alpha$ and create a set of line segments such that if the SAT instance is satisfiable, there exists a point placement with a smallest-area triangle of area $\alpha$, but if the SAT instance is not satisfiable, every possible point placement will admit a triangle of area smaller than $\alpha$.
  
  We define a  variable gadget for each of the variables in a clause, and a clause gadget, that includes all of its variable gadgets.
A variable gadget  for a variable $x$ in a clause $C_i$ consists of a segment $s_{x,i}$  and two fixed points (degenerate line segments) on the bisector and close to  the center of  $s_{x,i}$ (see Figure~\ref{clause}).\footnote{For ease of construction, we draw the segments diagonal and the clauses horizontal, a simple rotation yields a construction for vertical segments.} These fixed points determine  two small triangles (the green and purple triangles) of area $\alpha$, and one of them  must be part of any point placement. 
We choose $\alpha$ small enough such that in the remainder of the construction, every other possible triangle either has an area larger than $\alpha$, or a zero area.



 Let  $s^+_{x,i}$ represent the value \textit{True} and $s^-_{x,i}$ represent the value \textit{False}, where $s^+_{x,i}$ and $s^-_{x,i}$ are endpoints of $s_{x,i}$.  
 We place these endpoints in such a way that if a clause is not satisfied, the corresponding endpoints will form a triangle with zero  area. For example, if the clause $C_i$ is $(\neg  x \vee y\vee \neg  z)$, the endpoints $s^+_{x,i}$, $s^-_{y,i}$ and $s^+_{z,i}$ will be collinear.
 So, setting $x$ and $z$ to \textit {True} and $y$ to \textit {False} will result in  a zero area triangle, as we can  see in Figure~\ref{clause}(a). 
In order to ensure segments representing the same variable will be assigned the same value, we place a fixed point on every line through the \textit{True} end of one and the \textit{False} end of another such segment.
For instance, if variable $x$ occurs in $C_i$ and $C_j$, we place a fixed point on the segments connecting $s^+_{x,i}$ to ${s^-_{x,j}}$, and one on the segment connecting $s^-_{x,i}$ to ${s^+_{x,j}}$ (see Figure~\ref{clause}(b)).  If $x$ selects both of the values of \textit{True} and \textit{False} simultaneously,  it will construct a zero area triangle. We will construct such structures for all the common variables of the clauses, see Figure~\ref{clause}(c). 


We must take care when placing all the endpoints of the line segments in the variable gadgets and the fixed points, that there are no  three collinear points other than those that are collinear by design. 
We must also make sure to not place any fixed points or  endpoints of segments  in the perpendicular strips which are determined by each line segment in the variable gadgets (see Figure~\ref{clause}(b)), as such a point would form a triangle of area less than $\alpha$ with the two fixed points in the corresponding variable segment.
After placing all the  line segments and  fixed points, the area of the smallest possible  triangle will determine the value $\alpha$. 

 

Let $L$ be the set of imprecise points including all the fixed points  and all the segments in the variable gadgets. 
Now an assignment to the variables to satisfy the SAT instance can be
made if and only if a solution for maximizing the area of the minimum possible area triangle of $L$ of area $\alpha$ exists.  
In the following we show how to ensure that all coordinates in $L$ are polynomial, and we prove that the reduction takes polynomial time.
 
\subsection{Correctness of polynomial-time reduction} \label{construction} 
We use an incremental construction method, such that when inserting a new fixed point or the endpoints of a line segment, we do the placement in such a way that the new point is not collinear with two other previously inserted fixed points or the endpoints of two other line segments.
 Also each new  precise  point or the endpoints of a new line segment should not be located in the strips  which are determined by the endpoints of previously inserted line segments. 

Suppose we have $n$ variables and $m$ clauses in the SAT instance.  Since each variable can occur at most one time in each clause,  we need to place  $O(nm)$ fixed points on the lines connecting the common variables in the clause gadgets.  Also we need to place $O(nm)$ other fixed points close to the center of the segments in  the variable gadgets. Therefore we observe that a value for $\alpha$ of $O(\frac 1 {nm^2})$ suffices. 

It is easy to observe that for a given set of $k-1$ points in the plane, we can insert the $k$th point in $O(k^2)$ time, such that  no zero-area triangle exists in the resulting set, and the  new point is not placed in any of the $O(k)$ strips (of the variable gadgets).
We conclude that our reduction can be performed in $O(m^6n^6)$ time in total. 

\begin{theorem}Given a set of $n$ parallel line segments,
the problem of choosing a point on each line segment such that the area of the smallest possible triangle
of the resulting point set is as large as possible is NP-hard.
\end{theorem}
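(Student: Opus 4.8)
The plan is to complete the reduction from SAT that has been set up above, which amounts to verifying two implications and confirming that the whole construction is polynomial. The crucial structural fact, which I would establish first, is that the variable gadget forces a Boolean choice: the two fixed points placed on the bisector near the center of each segment $s_{x,i}$ define the green and purple triangles of area exactly $\alpha$, and any interior placement of the point on $s_{x,i}$ makes one of these two triangles have area strictly less than $\alpha$. Hence, to keep the smallest triangle from dropping below $\alpha$, the chosen point must sit at an endpoint --- either $s^+_{x,i}$ (\emph{True}) or $s^-_{x,i}$ (\emph{False}). Together with the two placement invariants (no three points collinear except by design, and no fixed point or endpoint inside the perpendicular strip of any variable segment), this reduces every near-optimal placement to a Boolean assignment.

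For the forward direction I would assume the SAT instance is satisfiable and fix a satisfying assignment. Placing each segment's point at the endpoint matching its truth value yields at every variable gadget a triangle of area exactly $\alpha$, and the task is to check that no smaller triangle appears elsewhere. The planted collinear triples were arranged so that a clause's encoding triple is collinear (zero area) exactly when that clause is \emph{falsified}; since the assignment satisfies every clause, no such triple collapses, and by the only-by-design degeneracy property every remaining triple has area at least $\alpha$. Likewise, the consistency points inserted between two occurrences of the same variable become collinear only when that variable is simultaneously \emph{True} and \emph{False}, which a genuine assignment never does. Thus the smallest-area triangle of this placement is exactly $\alpha$.

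For the backward direction I would take any point placement whose smallest triangle has area at least $\alpha$ and extract a consistent satisfying assignment. By the forcing argument above, every segment's point lies at an endpoint, which assigns a truth value to each occurrence of each variable. Since the consistency triples would have zero area (hence below $\alpha$) if any variable took both values, the placement must assign the same value to all occurrences of each variable, so the assignment is well defined; and since each clause's encoding triple would have zero area if that clause were falsified while the placement keeps every triangle at least $\alpha>0$, every clause is satisfied. Hence the SAT instance is satisfiable.

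The main obstacle is the geometric bookkeeping needed to guarantee that \emph{no unintended} triangle ever falls below $\alpha$: one must verify that the planted collinearities are the only degeneracies, that the strip condition rules out the one remaining family of too-small triangles (a stray point lying close to a variable segment together with its two central fixed points), and that the whole configuration is realizable with polynomially bounded rational coordinates. The incremental placement of Section~\ref{construction} is exactly what handles this, inserting each of the $O(nm)$ points so as to avoid every prior collinearity and every strip, which makes $\alpha=\Theta(1/(nm^2))$ achievable and runs in $O(m^6n^6)$ time. Combining the two implications with this polynomial-time, polynomial-coordinate construction establishes the NP-hardness.
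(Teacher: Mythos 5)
Your proposal is correct and follows essentially the same route as the paper: it completes the paper's own SAT reduction, using the variable-gadget forcing argument (endpoint placement or some triangle drops below $\alpha$), the planted collinearities for falsified clauses and inconsistent assignments, the strip and no-unintended-collinearity invariants, and the polynomial-coordinate incremental construction with $\alpha=\Theta(1/(nm^2))$. The only nitpick is phrasing: an interior placement does not shrink one of the green/purple triangles themselves but rather creates a \emph{third} triangle (the chosen point with the two central fixed points) of area strictly less than $\alpha$; the forcing conclusion is unaffected.
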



\newpage
\section{\mima problem} \label{sec:minmax}
In this section, we will consider the following problem: given a set $L=\{l_1,...,l_n\}$ of parallel line segments, choose a set $P=\{p_1,...,p_n\}$ of points, where  $p_i \in l_i$,  such that the size of the largest-area triangle with corners at $P$ is as small as possible among all possible choices of $P$ (see Figure~\ref{mnmxmxmn}(d)).

In the following we  first give some definitions, then we state our results.  Let the \textit{top chain}  denote  the lowest convex chain connecting the lower  endpoint of the leftmost line segment to the lower 
endpoint of the rightmost line segment, and which passes over or through the lower  endpoints of the other line segments. 
In other words, it is the top half of the convex hull of the lower endpoints of the segments.
 Similarly, we define the \textit{bottom chain} as the  highest convex chain connecting the upper  endpoint of the leftmost line segment to the upper 
 endpoint of the rightmost line segment, and which passes under or through the upper  endpoints of the other line segments. 
These two convex chains can either be disjoint or intersect and enclose  a  \textit{convex body region} (see Figure~\ref{fig:regionsoftopchain}(a)).
The convex body region can be found  in $O(n \log n)$ time.
 If this region is empty, a single line segment will pass through all the segments, and the smallest largest-area triangle will have  zero area by placing all points collinear.
 This case can be distinguished  in linear time~\cite{edelsb}.
From now on, we assume the convex body region  has non-zero area. Note that one of the top chain or bottom chain can also  be a single  line segment. By \emph{extreme} line segments we mean the leftmost and rightmost line segments.
\begin{figure}
	\includegraphics{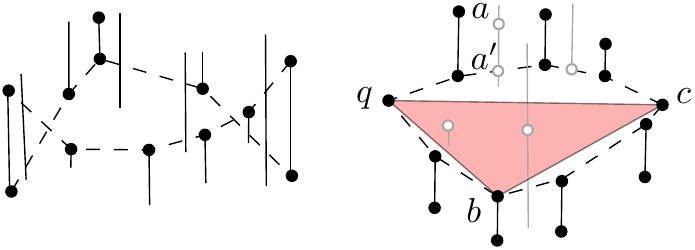}
	\centering
	\caption {(a) Top  chain and bottom chain. (b) For any line segment in $L\setminus L'$ (gray line segment),  any arbitrary point interior to the convex hull is chosen as the candidate point.} 
	\label{fig:regionsoftopchain}
\end{figure}

We first make some observations in Section~\ref{sec:GO}. Then in Section ~\ref{sec:fixed} we solve the special case where the extreme segments are single points. Finally we extend the solution to the general case in Section~\ref{sec:GC}.

\subsection{General Observations} \label{sec:GO}
We start  with some observations. Suppose we have a set of line segments and we know what is the candidate point\footnote{For each $l_i$, the point $p_i \in l_i$ in set $P$ is the candidate point of $l_i$.} on each segment. It is easy to observe that if we add more line segments, the changes on the size of the area of the smallest largest  triangle  is non-decreasing.  

\begin{lemma} \label{lem:intersect}
	Suppose we are given a set $L$ of line segments and we know the set $P$ which consists of the candidate points on $L$ to make the largest-area triangle as small as possible. Then  we insert another line segment $l$, such  that $l$  has a point $x$ which is located on a line segment  passing through  two  points of $P$. Then, $x$ will be chosen as the candidate point of $l$ and this selection will not increase the area of the smallest largest-area triangle.
\end{lemma}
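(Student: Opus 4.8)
The plan is to show that the point placement which keeps all the old candidate points of $P$ and puts the new point of $l$ at $x$ already achieves the old optimal value; combined with monotonicity this forces the new optimum to equal the old one and certifies that $x$ is a valid optimal choice. Let $\mu$ be the area of the largest-area triangle determined by $P$, i.e.\ the old optimum. By the preceding monotonicity observation, inserting a segment cannot decrease the minimum achievable largest-area triangle, so the optimum for $L\cup\{l\}$ is at least $\mu$. It therefore suffices to exhibit one placement on $L\cup\{l\}$ whose largest triangle has area at most $\mu$, namely the one assigning $P$ to $L$ and $x$ to $l$.

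For this placement, the triangles are exactly the old ones (all spanned by points of $P$, hence of area at most $\mu$) together with the new triangles having $x$ as a vertex and two points $p_k,p_l\in P$ as the other vertices, so I only need to bound these new triangles by $\mu$. The key observation is that the signed area of a triangle is an \emph{affine} function of any single vertex when the other two are held fixed: writing the signed area of $x p_k p_l$ as $\tfrac12\,(p_k-x)\times(p_l-x)$, the quadratic terms in the coordinates of $x$ cancel, leaving an affine expression in $x$. Since $x$ lies on the chord $\overline{p_i p_j}$ joining two points of $P$, we may write $x=\lambda p_i+(1-\lambda)p_j$ with $\lambda\in[0,1]$.

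Affinity then gives, for the signed areas, $\mathrm{area}_\pm(x p_k p_l)=\lambda\,\mathrm{area}_\pm(p_i p_k p_l)+(1-\lambda)\,\mathrm{area}_\pm(p_j p_k p_l)$, and taking absolute values with the triangle inequality (using $\lambda\in[0,1]$) yields
\[
\mathrm{area}(x p_k p_l)\ \le\ \max\bigl(\mathrm{area}(p_i p_k p_l),\ \mathrm{area}(p_j p_k p_l)\bigr).
\]
Both triangles on the right are spanned by points of $P$, so each has area at most $\mu$; hence every new triangle also has area at most $\mu$. The degenerate cases where $p_k$ or $p_l$ coincides with $p_i$ or $p_j$ are covered automatically, since a repeated vertex contributes area zero in the convex combination. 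Consequently the largest-area triangle of the exhibited placement is at most $\mu$.

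Putting the two bounds together, the optimum for $L\cup\{l\}$ is at most $\mu$ and at least $\mu$, so it equals $\mu$; the placement that puts the new point at $x$ is therefore optimal and the value of the smallest largest-area triangle does not increase. The main point to get right is the affinity of triangle area in a single vertex together with the convexity requirement $\lambda\in[0,1]$: this is precisely where we use that $x$ lies on the chord $\overline{p_i p_j}$ itself rather than merely on the supporting line through $p_i$ and $p_j$, outside of which the convex-combination bound—and hence the lemma—would fail.
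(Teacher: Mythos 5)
Your proof is correct, and it takes a genuinely different route from the paper's. The paper argues by contradiction: it assumes the largest triangle in the placement $P\cup\{x\}$ is some $xbc$ with area exceeding the old optimum $\mu$, then sweeps a line through $x$ parallel to $bc$ away from $bc$ until it exits $CH(P)$ at a hull vertex $x'$ (necessarily a point of $P$), concluding $\mathrm{area}(x'bc)\ge \mathrm{area}(xbc)$ while $x'bc$, being spanned by points of $P$, has area at most $\mu$ --- a contradiction. This forces a case split ($x$ on an interior chord versus on a boundary edge of $CH(P)$) and leans on the geometric sweep tool the paper reuses throughout. You instead give a direct two-sided argument: monotonicity (the paper's preceding observation) gives the lower bound $\mu$ on the new optimum, and you bound every new triangle $xp_kp_l$ by writing $x=\lambda p_i+(1-\lambda)p_j$ and exploiting that signed area is affine in a single vertex, so $\mathrm{area}(xp_kp_l)\le\max\bigl(\mathrm{area}(p_ip_kp_l),\mathrm{area}(p_jp_kp_l)\bigr)\le\mu$. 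What your approach buys: it compares to the chord endpoints $p_i,p_j$ directly rather than to a swept hull vertex, so it needs no case analysis, handles degeneracies cleanly, and makes explicit that the optimum value is actually unchanged (not merely that this particular choice is harmless). What the paper's approach buys: the sweep-to-hull-vertex picture connects the lemma to the convex-hull structure of $CH(P)$ that the subsequent Corollary \ref{rol} and the rest of Section \ref{sec:minmax} build on. One small caveat in both arguments: the hypothesis must be read as $x$ lying on the closed segment between two points of $P$, and you correctly flag that your convex-combination step (with $\lambda\in[0,1]$) is exactly where this is used.
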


\begin{proof}
Suppose the lemma is false. Then $x$ is increased the size of the smallest largest-area triangle. It follows that $x$ is a vertex  of  the smallest largest-area triangle (Also notice that the largest-area triangle selects all its vertices on the convex hull; no matter we want to minimize its area). Suppose $b$ and $c$ are the other vertices of the smallest largest-area triangle.
Since  $x$ is located on  a line segment  passing through two points of $P$, it is located on a chord on an edge of $CH(P)$. 
 First suppose $x$ is located on a chord of $CH(P)$. We consider a line $\ell$ through $x$ and parallel to $bc$. If we sweep $\ell$ away from $bc$, it will intersect the convex body region until it leaves it at a vertex $x'$. But then $x'$ can be substituted for $x$ to give us a larger smallest largest-area triangle. Contradiction with the previous selection of the smallest largest-area  triangle. 
 
 Now suppose $x$ is  located on the boundary of   $CH(P)$. 
 Again consider the line $\ell$ through $a$ and parallel to $bc$. If we sweep $\ell$ away from $bc$, it will intersect the convex body region until it leaves it at a vertex $x'$. But then again $x'$ can be substituted for $x$ to give us a triangle which its area is either  greater-than or equals to the area of $xbc$. It is again a contradiction with the previous selection of the smallest largest-area triangle. 
  Thus $x$ cannot increase the size of the smallest largest-area triangle.
   
\end{proof}
We will observe later that  the smallest largest-area triangle is always a true triangle.

\begin{corollary} \label{rol}
	Suppose we are given a set $L$ of line segments and we know the set $P$ which consists of the candidate points on $L$ to make the largest-area triangle as small as possible. If a new line segment $l$ intersects the convex hull of  $P$  without sharing a vertex on it, then $l$ does not have a role in the construction of the smallest largest-area triangle.
\end{corollary}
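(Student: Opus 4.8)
The plan is to derive this directly from Lemma~\ref{lem:intersect} together with the monotonicity observation stated just before it (adding a segment never decreases the smallest largest-area triangle). First I would exploit the hypothesis that $l$ meets the region $CH(P)$: there is a point $x \in l \cap CH(P)$, and moreover $x$ can be taken on the boundary $\partial CH(P)$ whenever $l$ crosses it (if $l$ lies entirely in the interior, any point of $l$ serves). A point of $\partial CH(P)$ lies on an edge of the hull, and every edge joins two adjacent points of $P$, so such an $x$ lies on a segment passing through two points of $P$ --- precisely the hypothesis required by Lemma~\ref{lem:intersect}.

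Next I would invoke Lemma~\ref{lem:intersect} with this $x$: selecting $x$ as the candidate point $p_l \in l$ does not increase the area of the smallest largest-area triangle. Combined with the monotonicity observation, which says adding $l$ cannot decrease it, this pins the optimum to be exactly unchanged, so it suffices to exhibit a single optimal placement in which $l$ is inert, namely the one that uses this $x$.

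Finally I would observe that $x \in CH(P)$ while, by the hypothesis that $l$ shares no vertex with $CH(P)$, the point $x$ is not a vertex of $CH(P)$; hence $CH(P \cup \{x\}) = CH(P)$, since adjoining an edge-interior or interior point leaves the vertex set of the hull unchanged. Because the largest-area triangle always selects its three vertices among the vertices of the convex hull --- the same fact used in the proof of Lemma~\ref{lem:intersect} --- the point $x$ cannot be a vertex of the smallest largest-area triangle. Therefore $l$ contributes no vertex and plays no role in its construction.

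The step I expect to be the main obstacle is the degenerate case where $l$ lies entirely in the interior of $CH(P)$ and so never meets an edge, meaning the literal hypothesis of Lemma~\ref{lem:intersect} (a point on a segment through two points of $P$) is not obviously attained. In that sub-case I would fall back on the direct argument: any interior candidate point is not a hull vertex and is therefore ignored by the largest-area triangle, and I would use the monotonicity observation only to conclude that the optimal area is unchanged. Keeping the case analysis clean --- edge crossing versus pure interior versus the excluded vertex-sharing situation --- is where most of the care is needed.
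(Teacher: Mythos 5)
Your proposal is correct and follows essentially the same route as the paper, which states this corollary without a separate proof precisely because it is the immediate consequence of Lemma~\ref{lem:intersect} (via a boundary-crossing point on a hull edge), the monotonicity observation preceding it, and the fact that a largest-area triangle only uses hull vertices. Your explicit treatment of the degenerate case where $l$ lies entirely in the interior of $CH(P)$ is a useful bit of care that the paper leaves implicit, but it does not change the argument's structure.
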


\subsection{Points as the extreme regions} \label{sec:fixed}
In the case where the leftmost and  rightmost line segments are fixed points,  these fixed points  always appear on the convex hull of  $P$. Also, for some line segments that intersect the convex hull of $P$ \footnote{Note that $CH(P)$ consists of interior and boundary.} without sharing a vertex on it, they  share their candidate  points somewhere on or within the convex hull  (see Figure~\ref{fig:regionsoftopchain}(b)).  Let $L' \subseteq   L$  be the set of  line segments from $L$ that share a vertex on the convex body region.  
In the following, we  will prove that the smallest largest-area triangle of $L'$ equals the smallest largest-area triangle of $L$.
 
\begin{obs} \label{obs:intersect}
Any $l_i \in L \setminus  L'$ intersects the convex body region.
\end{obs}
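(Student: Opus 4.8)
The plan is to reduce the claim to a one‑dimensional interval‑overlap argument on the vertical line through $l_i$. Write $x_i$ for the common abscissa of the (vertical) segment $l_i$, and let $T(x)$ and $B(x)$ denote the heights of the top chain and the bottom chain at abscissa $x$. Since every segment of $L$ lies between the two extreme segments, we have $x_l \le x_i \le x_r$, where $x_l$ and $x_r$ are the abscissae of $l_l$ and $l_r$. The convex body region is bounded above by the top chain and below by the bottom chain, so its intersection with the vertical line $x=x_i$ is exactly the interval $[B(x_i),T(x_i)]$ (whenever this is non‑empty), while $l_i$ occupies the interval from $l_i^{-}$ to $l_i^{+}$ on that same line. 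Hence it suffices to show that these two vertical intervals overlap.

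First I would record the two inequalities that follow directly from the definitions of the chains. Because the top chain is the upper hull of the lower endpoints and passes over or through each of them, its lower endpoint satisfies $T(x_i)\ge l_i^{-}$ (identifying an endpoint with its height); because the bottom chain is the lower hull of the upper endpoints and passes under or through each of them, $B(x_i)\le l_i^{+}$. These two facts are precisely what rules out the intervals being disjoint: if $[l_i^{-},l_i^{+}]$ and $[B(x_i),T(x_i)]$ were disjoint, one would lie strictly above the other, contradicting either $l_i^{-}\le T(x_i)$ or $l_i^{+}\ge B(x_i)$. So the overlap is immediate \emph{once we know the region slice is non‑empty}.

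The remaining, and main, point is to verify that $B(x_i)\le T(x_i)$, i.e.\ that the slice $[B(x_i),T(x_i)]$ is non‑empty. Here the hypothesis of this section—that the extreme segments are single points—is essential, since it forces both chains to share their endpoints, giving $T(x_l)=B(x_l)$ and $T(x_r)=B(x_r)$. I would then consider $g(x)=T(x)-B(x)$, which is concave (a concave upper hull plus the negated convex lower hull) and vanishes at $x_l$ and $x_r$; a concave function that is zero at both ends of an interval is non‑negative throughout, so $g(x_i)\ge 0$, as needed. Equivalently, the convex body region is a convex set containing the two extreme points, hence it contains the whole segment joining them and every vertical line between $x_l$ and $x_r$ meets it. I expect this concavity verification to be the only delicate step: it simultaneously certifies that the top chain really is the upper boundary and the bottom chain the lower boundary of the region (the orientation issue) and that the slice is non‑empty at $x_i$. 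Combining $g(x_i)\ge 0$ with the two chain inequalities shows $[l_i^{-},l_i^{+}]$ meets $[B(x_i),T(x_i)]$, so $l_i$ intersects the convex body region.
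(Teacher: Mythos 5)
Your proof is correct, and its core rests on the same pair of facts as the paper's own two-line argument: the top chain passes over or through every lower endpoint and the bottom chain passes under or through every upper endpoint, so a vertical segment can neither lie strictly above the region's upper boundary nor strictly below its lower boundary. The paper phrases this as a contradiction---if $l_i$ missed the region it would lie entirely above (or below) it, forcing $l_i$ to share a vertex on the top (or bottom) chain and hence to belong to $L'$---whereas you phrase it as interval overlap on the line $x=x_i$; these are the same argument in contrapositive form. What you genuinely add is the justification of the step the paper takes for granted, namely that a vertical segment disjoint from the region must indeed be above or below it, i.e.\ that the slice $[B(x_i),T(x_i)]$ is non-empty at every abscissa in $[x_l,x_r]$. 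Your concavity argument for $g=T-B$ (or, equivalently, convexity of the region together with the fact that it contains both extreme points) is exactly where this section's hypothesis that $l_l$ and $l_r$ are single points enters, and this is not a vacuous concern: in the general setting of Section~\ref{sec:GC}, where the extremes are genuine segments, the region does not span $[x_l,x_r]$, segments over the tail regions can miss it, and the paper must redefine $L'$ accordingly. So your route buys rigor and makes the role of the fixed-endpoint hypothesis explicit, at the cost of a slightly longer argument than the paper's.
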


\begin{proof}  Suppose this is not the case. First suppose the convex body region has non-zero area. Then there exists a segment $l_i $  that does not intersect the convex body region. The segment $l_i$  has to  be located completely  above (or below) the convex body region. But then $l_i$ has to share a vertex on the top chain (or the bottom chain), contradiction.

\end{proof} 

%


 As an immediate consequence, we can throw away all the regions in  $L' \subseteq   L$.  In the following, we  show that the largest-area triangle which is inscribed in the convex body region is the smallest largest-area triangle of $L$. In other words, we will prove that in the  case  where the leftmost and rightmost line segments are fixed points, the convex body region is the convex hull of the candidate points.

\begin{lemma} \label{lem:endp}
	Suppose the leftmost and rightmost line segments are  points (degenerate segments), and the convex body region has  non-zero area. Only the upper endpoints on the bottom chain or the lower endpoints  on the top chain  are  candidates for including  the  vertices of the smallest largest-area triangle of $L'$.
\end{lemma}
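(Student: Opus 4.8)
The plan is to show that an optimal placement can be taken to have $CH(P)$ equal to the convex body region, after which the claim follows from the standard fact (already invoked in Lemma~\ref{lem:intersect}) that the largest-area triangle inscribed in a convex polygon has all of its vertices among the polygon's vertices. First I would collect the two ingredients already available: every segment meets the convex body region (Observation~\ref{obs:intersect}), and any segment that crosses the current hull without contributing a vertex plays no role (Corollary~\ref{rol}). Together these let us restrict attention to $L'$ and place the points of the remaining segments anywhere inside the region.

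The heart of the argument is an exchange/sweep argument in the spirit of the earlier lemmas. Let $\ell_0$ be the line through the two extreme fixed points and fix any optimal placement, with smallest largest-area triangle $T$. I would slide every chosen point toward $\ell_0$: a point on a segment lying above $\ell_0$ is pushed down to its lower endpoint, a point on a segment lying below $\ell_0$ is pushed up to its upper endpoint, and a point on a segment crossing $\ell_0$ is pushed onto $\ell_0$. Since $\ell_0$ is a chord of $CH(P)$ (it joins two hull vertices), each such move sends its point toward the interior, so the hull can only shrink, i.e. the new hull is contained in the old one; hence no triangle can grow and the largest-area triangle cannot increase. As we started from an optimum, it is unchanged, so we obtain an optimal placement in which every point sits at an endpoint of its segment (or on $\ell_0$).

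Next I would identify the resulting hull with the convex body region. After the pushes, the points above $\ell_0$ sit at lower endpoints and those below sit at upper endpoints, while the lower endpoints of segments crossing or lying below $\ell_0$ are at or below $\ell_0$ and so never reach the upper boundary; consequently the upper hull of $P$ is exactly the upper hull of the lower endpoints, namely the top chain. Symmetrically the lower hull of $P$ is the bottom chain, and the points placed on $\ell_0$ lie on the base chord inside the hull. Therefore $CH(P)$ coincides with the convex body region, whose vertices are precisely the lower endpoints appearing on the top chain together with the upper endpoints appearing on the bottom chain. Since the smallest largest-area triangle of $L'$ is the largest-area triangle of this $P$, its vertices lie on $\partial CH(P)$, and by the inscribed-triangle fact they are vertices of the convex body region, i.e. lower endpoints on the top chain or upper endpoints on the bottom chain, as claimed.

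I expect the main obstacle to be the hull-shrinking claim underlying the exchange: one must verify that sliding a point toward $\ell_0$ never pushes the hull outward, separating the case where the point is an interior non-vertex point (the hull is unchanged) from the case where it is a hull vertex (the hull strictly shrinks), and one must handle segments crossing $\ell_0$ and degenerate ties with care so that the final hull is exactly the convex body region rather than a proper subset of it. A secondary point to check is that distinctness of the three vertices is guaranteed, which is where the forthcoming observation that the smallest largest-area triangle is a true triangle is used.
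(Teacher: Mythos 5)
Your proposal is correct, but it takes a genuinely different route from the paper. The paper proves the lemma by a local exchange argument on the optimal triangle itself: assuming a vertex $a$ of the smallest largest-area triangle $abc$ violates the claim, it sweeps a line through $a$ parallel to $bc$ toward $bc$, lets $q$ be the first top-chain vertex hit, and then runs a case analysis (both of $b,c$ on the bottom chain, one on each chain, both on the top chain, and the awkward subcase $q \in \{l_b, l_c\}$) to contradict optimality. You instead argue globally: push every chosen point vertically toward the chord $\ell_0$ joining the two fixed extreme points, observe that each target (lower endpoint, upper endpoint, or the intersection with $\ell_0$) lies on the vertical segment between the current point and $\ell_0$ and hence inside the current hull (the extreme points never move, so $\ell_0$ stays a chord of every intermediate hull), conclude that the hulls are nested, and then use monotonicity of the inscribed largest-triangle area under inclusion together with the vertices-at-vertices fact to see that the optimum is preserved; finally you identify the resulting hull with the convex body region since points pushed below $\ell_0$ can never appear on the upper hull and vice versa. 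Your approach buys something real: it directly establishes the stronger statement that an optimal placement can be assumed to have $CH(P)$ equal to the convex body region, which essentially subsumes Lemma~\ref{landdlp} as well, and it avoids the paper's delicate case analysis entirely. The trade-off is that your argument leans crucially on both extreme segments being single fixed points, so that $\ell_0$ is a fixed chord contained in every hull; the paper's local exchange style, by contrast, is the template that gets reused in Lemma~\ref{minma} for the general case of Section~\ref{sec:GC}, where no such fixed chord exists. The details you flag as obstacles (sequential hull-shrinking, ties of endpoints lying exactly on $\ell_0$, and trueness of the triangle, which is automatic here since each segment contributes exactly one point of $P$) all check out.
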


\begin{proof}
	Suppose the lemma is false. Let $abc$ be the smallest largest-area triangle of $L'$, that selects one of its vertices, $a$,  at a point on one of the line segments of $L'$, so that $a$ is not a lower endpoint on the top chain, or an upper endpoint on the bottom chain, or a fixed point.
	W.l.o.g, assume $l_a$ shares a vertex on top chain. Let $a'$ be the lower endpoint of $l_a$ (see Figure~\ref{fig:regionsoftopchain}(b)). We will prove that we can substitute $a'$ for $a$ without increasing the size of the smallest largest-area triangle.

	We consider a line $\ell$  through $a$ and parallel to $bc$. If we sweep $\ell$  toward $bc$, it will intersect the  top chain. Let $q$ be the first vertex that $\ell$ visits on the top chain.   Suppose  $q \ne a'$, because otherwise we are done.
	Suppose $q \notin \{l_b,l_c\}$. Now we will prove that if we substituted   $a'$ for $a$, the area of  no other  triangles with a vertex at $a$ would be increased.  
	
	If exactly one  of $b$ or $c$ belongs to the bottom chain and  the other one belongs to the top chain, or both  $b$ and $c$ belong  to bottom chain, obviously  moving $a$ to $a'$ will reduce the area of $abc$, and $qbc$ is smaller than $abc$, which contradicts the optimality of $abc$.
	
	Now let  both of $b$ and $c$ belong to the  top chain, then the largest-area triangle  that we want to minimize its area  will never select its  third vertex $a$ on top chain, unless one of  $a$,  $b$ or $c$  is an intersection point of the top chain and bottom chain. 
	Note that point  $a$ cannot  be a  fixed point, because otherwise we have a contradiction. It follows that $a'bc \leq abc$, and also $qbc \leq abc$, which contradicts the optimality of $abc$. Note  that  the smallest largest-area triangle has to be a true triangle (we will discuss it later). 
	
	Now suppose $q \in \{l_b,l_c\}$. Since $a$ is located above the convex body region,  $q$ is the first visited point during  sweeping $\ell$ toward $bc$.  Thus  $l_a$ must intersected an edge of the convex body, contradicting the fact that $l_a$ shares  a vertex on the convex body.
\end{proof}

\begin{corollary} \label{lem:regions}
Suppose the leftmost and rightmost line segments are  fixed points and the convex body region has  non-zero area. Only the fixed points, the lower endpoints on top chain and the upper endpoints on bottom chain  are  candidates  for including the  smallest largest-area triangle's vertices.
\end{corollary}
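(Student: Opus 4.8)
The plan is to deduce this corollary by combining Lemma~\ref{lem:endp} with the reduction from $L$ to $L'$ and with the special role of the extreme segments. First I would argue that the smallest largest-area triangle of $L$ coincides with that of $L'$. By Observation~\ref{obs:intersect}, every segment in $L\setminus L'$ intersects the convex body region, and since a segment of $L\setminus L'$ by definition shares no vertex on the convex body region, it meets $CH(P)$ without contributing a vertex to it. Corollary~\ref{rol} then guarantees that such a segment plays no role in the construction of the smallest largest-area triangle, so we may discard all of $L\setminus L'$ without altering the optimum. This reduces the problem on $L$ to the problem on $L'$.

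Next I would invoke Lemma~\ref{lem:endp} directly on $L'$: since the extreme segments are fixed points and the convex body region has non-zero area, the vertices of the smallest largest-area triangle of $L'$ are selected only among the lower endpoints lying on the top chain and the upper endpoints lying on the bottom chain. Finally, I would reintroduce the extreme regions: because the leftmost and rightmost segments are degenerate points, they always lie on the convex hull of $P$ and are therefore themselves candidate vertices. Collecting these three sources---the fixed points, the lower endpoints on the top chain, and the upper endpoints on the bottom chain---yields exactly the claimed candidate set, and the corollary follows.

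The delicate point I would watch is the justification that discarding $L\setminus L'$ is lossless, since Corollary~\ref{rol} presupposes that the candidate points of the retained segments already realize the convex body region as $CH(P)$. To avoid circularity I would first establish, as the surrounding discussion does for the extreme-points case, that the convex body region is precisely the convex hull of the optimal candidate placement, and only then apply Corollary~\ref{rol} to rule out any influence of the discarded segments. With that ordering in place, no further geometric argument is needed beyond Lemma~\ref{lem:endp}, and the corollary is essentially a bookkeeping consequence of the preceding results.
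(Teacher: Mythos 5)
Your proposal is correct, and it reaches the statement by a genuinely different route than the paper. The paper treats this as a near-immediate extension of Lemma~\ref{lem:endp}, proved by the same kind of parallel-sweep exchange argument: if the optimal triangle had a vertex $a$ that is neither a fixed point nor a chain endpoint, then $l_a$ is crossed by the convex body region (Observation~\ref{obs:intersect}), and sweeping a line through $a$ parallel to the opposite side lets one substitute for $a$ either a point of $l_a$ on the convex body region or a chain vertex, without increasing any other triangle, contradicting optimality. You instead argue by composition: discard $L\setminus L'$ using Observation~\ref{obs:intersect} together with Corollary~\ref{rol}, quote Lemma~\ref{lem:endp} on $L'$, and re-admit the two fixed extreme points, which always lie on the hull. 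This is more modular and avoids repeating the sweeping machinery, but it inverts the paper's logical order: the paper derives the $L$-to-$L'$ equivalence (Lemma~\ref{landdlp}) \emph{from} this corollary, so your reduction cannot lean on that lemma. You correctly flag this circularity risk and propose to first establish that the convex body region is the convex hull of an optimal placement, after which Corollary~\ref{rol} applies; that fact does hold independently of the corollary (with fixed extremes, every placement $P$ satisfies that $CH(P)$ contains the convex body region, since the upper hull of $P$ lies pointwise above the top chain and the lower hull pointwise below the bottom chain, and placing each candidate of $L'$ at its shared chain vertex attains equality, so that placement is optimal). In your write-up, however, this key fact remains a promissory note rather than a proof; supplying that short argument is what your version needs to be complete, and with it your derivation is sound and arguably cleaner than the paper's.
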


\begin{lemma} \label{landdlp} 
The smallest largest-area triangle of $L'$ is equal to the smallest largest-area triangle of $L$.
\end{lemma}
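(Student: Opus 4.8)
The plan is to establish the equality by proving the two inequalities between the optimal values separately. Write $\Delta(L')$ and $\Delta(L)$ for the areas of the smallest largest-area triangles of $L'$ and $L$, respectively. Since $L' \subseteq L$, every placement of points on $L$ restricts to a placement on $L'$ whose largest-area triangle has area at most that of the full placement; applying this to the placement that is optimal for $L$ shows $\Delta(L') \le \Delta(L)$. This is precisely the monotonicity observation recorded at the start of Section~\ref{sec:GO}, namely that adding segments cannot decrease the smallest largest-area triangle. Hence the real content of the lemma is the reverse inequality $\Delta(L) \le \Delta(L')$.

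For the reverse inequality, I would start from an optimal placement $P'$ of the points of $L'$, so that the largest-area triangle on $P'$ has area $\Delta(L')$. By the analysis preceding this lemma (Lemma~\ref{lem:endp} and Corollary~\ref{lem:regions}), when the extreme segments are fixed points the convex hull $CH(P')$ of this optimal placement coincides with the convex body region. I would then extend $P'$ to a placement $P$ of all of $L$ by choosing, for every $l_i \in L \setminus L'$, a candidate point $p_i \in l_i \cap CH(P')$; such a point exists because Observation~\ref{obs:intersect} guarantees that every $l_i \in L \setminus L'$ intersects the convex body region, which equals $CH(P')$. Each such $p_i$ lies in the closed hull $CH(P')$, so adding it leaves the hull unchanged, i.e.\ $CH(P) = CH(P')$.

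The key step is then to argue that this extension does not enlarge the largest-area triangle. Since the largest-area triangle of any point set uses only vertices of its convex hull (the fact invoked in the proof of Lemma~\ref{lem:intersect}), and $CH(P) = CH(P')$ contains no new vertices contributed by the added interior points, the largest-area triangle on $P$ is identical to the one on $P'$ and still has area $\Delta(L')$. This yields $\Delta(L) \le \Delta(L')$, and combined with the monotonicity inequality it gives $\Delta(L) = \Delta(L')$. Formally this is exactly the situation of Lemma~\ref{lem:intersect} and Corollary~\ref{rol} applied one segment at a time: each inserted $l_i$ meets a chord of $CH(P')$, its candidate point can be taken on that chord, and by Lemma~\ref{lem:intersect} this choice does not increase the area of the smallest largest-area triangle.

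The main obstacle I anticipate is the borderline case in which some $l_i \in L \setminus L'$ meets $CH(P')$ only along its boundary rather than its interior, for instance touching an edge of the convex body region. I would handle this by still selecting $p_i$ at that boundary intersection: it remains in the closed hull and does not become a new hull vertex, since $l_i \notin L'$ means $l_i$ shares no vertex with the convex body region by the definition of $L'$, and therefore it cannot create a larger triangle, which is also what Corollary~\ref{rol} asserts. A secondary point to verify is that the resulting optimal triangle is a \emph{true} triangle; this holds because its three vertices are chosen on three distinct segments of $L'$, matching the remark that the smallest largest-area triangle is always a true triangle.
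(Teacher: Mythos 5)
Your proposal is correct and takes essentially the same route as the paper's proof: it uses the same ingredients (Observation~\ref{obs:intersect}, Lemma~\ref{lem:intersect} and Corollary~\ref{rol}, Lemma~\ref{lem:endp} and Corollary~\ref{lem:regions}, and the fact that a largest-area triangle uses only convex-hull vertices), merely organizing them explicitly as the two inequalities $\Delta(L') \le \Delta(L)$ and $\Delta(L) \le \Delta(L')$, which the paper argues implicitly. The only refinement worth making is to take $P'$ to be an optimal placement whose hull equals the convex body region by construction (e.g., the placement at the shared vertices), rather than asserting this of an arbitrary optimal placement --- though the paper's own proof states that fact in the same unguarded way.
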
 

\begin{proof}
From Lemma~\ref{lem:endp} and Observation \ref{lem:regions} we know the smallest largest-area triangle  selects its vertices from the endpoints of  $L'$, and the convex body region is the convex hull of the candidate points on $L$ to minimize the size of the largest-area triangle.  
As we saw in Observation~\ref{obs:intersect},   the  line segments in $ L \setminus  L'$ always intersect the convex body region and they do not have a role in the construction of the smallest largest-area triangle. Thus, from Lemma~\ref{lem:intersect} and Corollary~\ref{rol}  we can throw away the set of line segments in $ L \setminus  L'$.
  If the convex body region has non-zero area, obviously we cannot reduce the size of the largest-area triangle that is inscribed in it. Also from \cite[Theorem 1.1]{48} we know the largest-area  triangle will select its vertices on the convex hull, and it does not the matter that we want to minimize its area. Consequently the smallest largest-area triangle of $L$ equals to the smallest largest-area triangle of $L'$. 

\end{proof}

As an immediate consequence of Lemma~\ref{landdlp}, since all the vertices of $L'$ are selected from distinct segments, the smallest largest-area triangle is always a true triangle. 
 
\subsubsection{Algorithm}
Suppose we are given set $L$ of line segments. After sorting $L$, we can compute the top and bottom chains in linear time. Also, we can compute   $L' \subseteq L$, the convex body region and also its vertices, $P'$, in linear time.   For each  $l_i \in L\setminus L'$, we can choose $p_i \in l_i$ to be an arbitrary point within the convex body region (or on its boundary). Obviously all such points can   be found in linear time. Then we can apply any existing algorithm to compute the largest-area inscribed triangle of $CH(P')$. 
This procedure is outlined in Algorithm~\ref{alg:fixedlr}.

\begin{algorithm}[H]
\caption{\mima  where the extreme line segments are single points}
	\label {alg:fixedlr}
	{\bf Input} {$L=\{l_1,\ldots,l_{n}\}$: a set of vertical segments}\\
	{\bf Output} {$MinMax$:smallest largest-area true triangle}\\
$I$=convex body of $L$ \\
$P'$=vertices of $I$\\
	\Return\textsc {Largest-Triangle($P'$)}\\
	
\end{algorithm}
 
The procedure  \textsc {Largest-Triangle($p'$)} is the  algorithm presented in~\cite{chandran}  which can compute the largest inscribed triangle in linear-time.

\begin{theorem}
Let $L$ be a set of $n$ parallel line segments with two fixed points as the leftmost and rightmost line segments. The solution of \mima problem on $L$ can be found in $O(n \log n)$ time.
\end{theorem}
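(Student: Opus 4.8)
The plan is to show that Algorithm~\ref{alg:fixedlr} both solves the \mima problem and runs in the claimed time; since the preceding lemmas already do the geometric heavy lifting, the proof is largely an accounting argument. I would first assemble correctness. By Lemma~\ref{lem:endp} and Corollary~\ref{lem:regions}, when the extreme segments are fixed points and the convex body region has non-zero area, every vertex of the smallest largest-area triangle is drawn from the vertex set $P'$ of that region. Lemma~\ref{landdlp} then states that the smallest largest-area triangle of $L'$ equals that of $L$, so the discarded segments in $L\setminus L'$ (which by Observation~\ref{obs:intersect} always lie in the convex body region and, by Lemma~\ref{lem:intersect} and Corollary~\ref{rol}, play no role) can be ignored. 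Hence the convex body region is exactly the convex hull of an optimal candidate placement, and minimizing the largest-area triangle reduces to computing the largest-area triangle inscribed in $CH(P')$; invoking \cite[Theorem 1.1]{48} once more confirms that this inscribed triangle uses only hull vertices, so nothing is lost by restricting to $P'$.

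Next I would tally the running time by walking through Algorithm~\ref{alg:fixedlr}. Sorting the segments by $x$-coordinate costs $O(n\log n)$, and this is the only superlinear step. Once sorted, the top chain (the upper hull of the lower endpoints) and the bottom chain (the lower hull of the upper endpoints) are each produced by a single Graham-type scan in $O(n)$ time. Because both chains are $x$-monotone, a simultaneous left-to-right merge locates every crossing of the two chains and reads off the boundary and vertex list $P'$ of the convex body region in $O(n)$ time. Assigning an arbitrary interior point to each $l_i\in L\setminus L'$ is immaterial to the optimum, as those segments are thrown away. The final call to the linear-time routine of~\cite{chandran} on the convex polygon $CH(P')$ takes $O(|P'|)=O(n)$ time. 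Summing these contributions, the total is dominated by the initial sort and is therefore $O(n\log n)$.

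The only delicate points, and the ones I would verify carefully, are that the two chains and their intersection can genuinely be obtained in linear time after the sort, and that the degeneracy hypothesis (extreme segments being single points) is precisely what forces, via Observation~\ref{obs:intersect} and Lemma~\ref{landdlp}, every discarded segment to sit inside the convex body region so that it can never later enlarge the optimal triangle. I would also handle the boundary case in which one of the two chains collapses to a single segment while the convex body region still has non-zero area; the earlier lemmas already cover this, but I would check that the linear-time merge treats it cleanly. Beyond this bookkeeping, no new geometric insight is needed, since the structural lemmas established before the theorem supply everything required.
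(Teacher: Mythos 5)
Your proposal is correct and follows essentially the same route as the paper: correctness is assembled from Lemma~\ref{lem:endp}, Corollary~\ref{lem:regions}, Observation~\ref{obs:intersect}, Lemma~\ref{lem:intersect}, Corollary~\ref{rol} and Lemma~\ref{landdlp}, and the running time is the initial $O(n\log n)$ sort followed by linear-time computation of the top and bottom chains, the convex body region and its vertex set $P'$, and a final call to the linear-time inscribed-triangle algorithm of~\cite{chandran}. Your added bookkeeping (the $x$-monotone merge of the chains and the degenerate single-segment chain case) only makes explicit what the paper leaves implicit.
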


\subsection {Line segments as the extreme regions} \label{sec:GC}

In this section we study the case where the leftmost and rightmost line segments are not  fixed points. Note that the extreme line segments do not necessarily share a vertex  at the endpoints or even on the optimal solution (see Figure~\ref{notendp}(c,d)). 

Two  polygons that are  constructed by the leftmost and rightmost line segments and  the parts of the top chain and bottom chain which do not contribute to the boundary of the  convex body region  (hatched regions in Figure~\ref{topbot}(a)) 
are called the  \textit{tail  regions}. In this case, $L'$ also includes the line segments which share vertices on the tail regions.  In a similar fashion,   
$L \setminus L'$  consists of the line segments that have two intersection points with the tail regions without sharing a vertex on it, and also the line segments that are intersected by the convex body region without sharing a vertex on it.
\begin{lemma}
Let $L$ be a given set of imprecise points. Let $P$ be the  set of points that minimizes the size of the largest-area triangle.  Then $CH(P)$ always intersects  the line segments in $L \setminus L'$. 
\end{lemma}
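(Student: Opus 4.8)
The plan is to argue by contradiction, reusing the sweeping technique of Observation~\ref{obs:intersect} together with the incremental reasoning of Lemma~\ref{lem:intersect} and Corollary~\ref{rol}. First I would fix the meaning of the statement: I read ``$CH(P)$ intersects $l_i$'' as ``$l_i$ is properly crossed by $CH(P)$'', i.e.\ the relative interior of $l_i$ meets the interior of $CH(P)$ rather than merely touching it at the single placed point $p_i$ (which always lies in $l_i \cap CH(P)$ and would make the claim vacuous). This is exactly the hypothesis that Corollary~\ref{rol} needs in order to discard $l_i$ afterwards.

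The backbone of the proof is a description of the optimal hull. Because every $p_i \ge l_i^{-}$ and the top chain is the upper hull of the lower endpoints, the upper boundary of any $CH(P)$ lies on or above the top chain; symmetrically, since every $p_i \le l_i^{+}$ and the bottom chain is the lower hull of the upper endpoints, its lower boundary lies on or below the bottom chain. Minimising the largest inscribed triangle (which is monotone under inclusion of the hull) forces the hull inward as far as the constraints allow, so at the optimum its upper and lower boundaries coincide with the top and bottom chains, whence $CH(P)$ contains the convex body region. For any $l_i \in L \setminus L'$ of the second type---crossed by the convex body region without a shared vertex---this already settles the claim, since the interior crossing then lies in the interior of $CH(P)$.

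It remains to treat the tail regions. Here I would take $l_i \in L \setminus L'$ with two intersection points in a tail and choose, among the optimal placements, one that puts $p_i$ on a chord of $CH(P \setminus \{p_i\})$. This is possible because $l_i$ shares a vertex on no region, so removing $p_i$ leaves the relevant region inside $CH(P \setminus \{p_i\})$, which $l_i$ crosses by definition; the relocation is legal by Lemma~\ref{lem:intersect}, which guarantees that substituting such a chord point does not enlarge the largest triangle. Then $p_i$ lies in the interior of $CH(P)$ and $l_i$ passes through it, so $CH(P)$ crosses $l_i$, and Corollary~\ref{rol} certifies that $l_i$ plays no role. If instead one insists that \emph{every} optimal $P$ has this property, I would run the sweep: assuming $p_i$ is an extreme vertex with the rest of $l_i$ outside, sweeping a line through $p_i$ parallel to the opposite edge toward the interior crossing point relocates $p_i$ strictly inward, a contradiction to optimality.

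The step I expect to be the main obstacle is the tail regions, combined with the fact that---unlike the fixed-endpoint case of Lemma~\ref{landdlp}---the extreme segments need not contribute an endpoint, or even a vertex, to the optimum (Figure~\ref{notendp}(c,d)). When the optimal point on an extreme segment is interior, $CH(P)$ can fail to contain the \emph{whole} tail region, so I cannot simply assert that the tails lie inside $CH(P)$; instead I must show that every $l_i$ crossing a tail still meets $CH(P)$ in its interior, which is precisely where the convexity and sweeping arguments have to be pushed through with care.
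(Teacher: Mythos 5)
Your reading of the statement (a proper crossing, not the vacuous fact that $p_i\in CH(P)\cap l_i$) is the right one, and your pointwise argument for the convex body region is correct: since every valid placement has $p_i$ above $l_i^{-}$ and below $l_i^{+}$, the upper boundary of $CH(P)$ lies on or above the top chain and the lower boundary on or below the bottom chain, so $CH(P)$ contains the convex body region, and every segment of $L\setminus L'$ that crosses that region is properly crossed. However, your intermediate claim that at the optimum the hull boundaries \emph{coincide} with the two chains is false, and in fact impossible when the extreme segments are nondegenerate: coincidence with the top chain would force $p_l=l_l^{-}$ while coincidence with the bottom chain would force $p_l=l_l^{+}$. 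Fortunately you never need it---containment of the body follows from the inequalities alone, for every placement, optimal or not.

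The genuine gap is the tail case, and you flag it yourself. Your chord-placement step rests on the claim that ``removing $p_i$ leaves the relevant region inside $CH(P\setminus\{p_i\})$, which $l_i$ crosses by definition,'' but this conflates the two regions: the convex body lies inside the hull yet is not crossed by $l_i$, while the tail is crossed by $l_i$ yet is in general \emph{not} inside the hull, precisely because the optimal point on an extreme segment may be interior (Figure~\ref{notendp}(c,d)). Your closing paragraph concedes exactly this, and the fallback ``sweep'' is never carried out, so the step that carries all the difficulty is missing. The paper closes this case differently: it uses that $CH(P)$ contains both the body and a vertex $r$ on the rightmost segment, and splits on whether $r$ is an endpoint of $l_r$ or an interior point. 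In fact your own tool already suffices, and more cleanly: since $P$ contains a point on $l_r$, the $x$-range of $CH(P)$ covers the tail, so at the abscissa $x_p$ of a tail-crossing segment the hull's upper boundary is still on or above the top chain and its lower boundary on or below the bottom chain; but such a segment has, by definition of $L\setminus L'$, its lower endpoint strictly below the top chain and its upper endpoint strictly above the bottom chain at $x_p$, so the hull's vertical slab at $x_p$ and the segment overlap in a subsegment of positive length (the slab has positive height there because the hull contains the convex hull of the body together with $r$). Applying your inequalities at $x_p$, instead of trying to trap the tail inside the hull, would complete the proof.
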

\begin{proof}
Suppose the lemma is false. Then there exists a line segment $l_p \in L \setminus L'$, such that $l_p$ is not intersected by $CH(P)$. 
First notice that $CH(P)$ includes the convex body, and consequently it is intersected by all the line segments which are intersecting by the convex body without sharing a vertex on it. 
Thus $l_p$ must be  a line segment which is intersecting a tail region, say the right tail,  while it does not share any vertex on it.  
 Notice that $CH(P)$ selects also some vertices on the leftmost and rightmost  line segments. Two cases can happen. First, suppose $CH(P)$  selects an endpoint of $l_r$.
 W.l.o.g., suppose it is selected the upper endpoint of $l_r$, but then  it must also includes all the vertices of the upper concave chain of the right tail (we discuss it in next lemma), 
  and in which case it must intersect $l_p$ which gives a  contradiction.
 
 Now consider the case where $CH(P)$ selects a point somewhere on the middle of $l_r$. In this case again it has to be intersected by $l_p$. Contradiction. 
\end{proof}
Consequently, with the same argument we had  in Section~\ref{sec:fixed}    the candidate point of  any line segment with two intersection points with a tail region (without sharing any vertex on it) can be chosen to be any point within or on the boundary of the convex hull of candidate points.
For this, it suffices to find the candidate points of the leftmost and rightmost segments. 
Notice that with the same argument that  we have proven  in Lemma~\ref{lem:endp}, all the vertices of the convex body and tail regions should be involved in the convex hull of the candidate points.  
In Lemma~\ref{minma} we  will prove that for the line segments in $L'\setminus \{l_l,l_r\}$ which share some vertex on the tail regions, there is no other interesting point to be  candidate for a vertex of the smallest largest-area triangle.

\begin{figure}
	\includegraphics{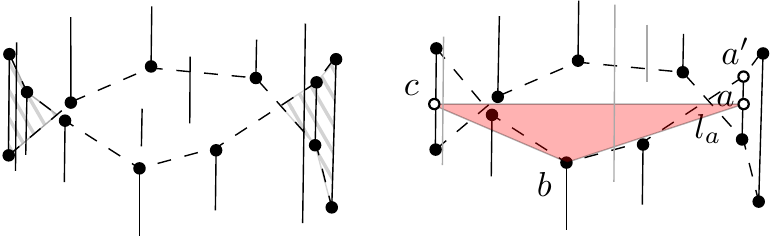}
	\centering
	\caption {(a) Hatched regions illustrate the tail regions. (b) For any segment in $L'$, only the shared vertices are the candidates of including the vertices of the smallest largest-area triangle.
		 }
	\label{topbot}
\end{figure} 

\begin{lemma}
\label{minma}
Suppose the convex body region  has non-zero area. For any  line segment in $L'\setminus \{l_l,l_r\}$ only the shared vertices  are  candidates for the vertices of the smallest largest-area triangle. 
\end{lemma}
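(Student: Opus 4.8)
The plan is to mirror the structure of Lemma~\ref{lem:endp}, extending its sweeping argument from the convex body region to the tail regions. I would argue by contradiction: suppose $abc$ is the smallest largest-area triangle of $L'$, and that one of its vertices, say $a$, lies on a segment $l_a \in L' \setminus \{l_l, l_r\}$ at a point that is \emph{not} one of the vertices $l_a$ shares with the boundary of a tail region (or the convex body region). Since $l_a \in L'$, it shares at least one such vertex; without loss of generality assume it shares a vertex on the upper concave chain of (say) the right tail region, and call that shared vertex $a'$. The goal is to show that $a'$ can be substituted for $a$ without increasing the area of any triangle on the current placement, so that an interior (non-shared) point of $l_a$ is never needed.

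First I would set up the sweep exactly as before: consider the line $\ell$ through $a$ parallel to $bc$ and sweep it toward $bc$. Because $a$ lies above the chain bounding the tail region, $\ell$ first meets that chain at some vertex $q$, and I would split into the cases $q \notin \{l_b, l_c\}$ and $q \in \{l_b, l_c\}$. In the former case, moving $a$ to $a'$ weakly decreases the area of $abc$ and, by convexity of $CH(P)$ together with the fact that the largest-area triangle always takes its vertices on the convex hull~\cite{48}, cannot increase the area of any other triangle through $a$; hence $a'bc \le abc$, contradicting optimality. The subcases here depend on whether $b$ and $c$ lie on the top chain, the bottom chain, or the opposite tail region, paralleling the configuration analysis already carried out for the convex body region in Lemma~\ref{lem:endp}. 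In the latter case $q \in \{l_b,l_c\}$, the fact that $q$ is the first vertex encountered forces $l_a$ to cross an edge of the tail region, contradicting that $l_a$ merely shares a vertex on it. I would then note that the symmetric possibilities (sharing a vertex on the lower chain, or on the left tail) follow by reflecting the same argument.

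Combining this with Lemma~\ref{lem:endp}, every non-extreme segment in $L'$ supplies a vertex of the optimal triangle only at one of its shared vertices; together with the earlier claim that segments in $L \setminus L'$ play no role (Lemma~\ref{lem:intersect}, Corollary~\ref{rol}), this discretizes the point choices on all segments except the two extreme ones.

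The main obstacle I anticipate is the tail-region geometry. Unlike the convex body region, a tail region is bounded partly by an extreme segment whose candidate point is not yet fixed, so the region into which $\ell$ sweeps is not simply the convex body. I would need to verify carefully that the first vertex $q$ met during the sweep is genuinely a vertex of $CH(P)$, so that replacing $a$ by it cannot manufacture a larger triangle, and that the concavity of the tail chain relative to the body does not produce a configuration in which moving $a$ to $a'$ enlarges some other triangle. This case bookkeeping---enumerating where $b$, $c$, and the extreme-segment vertices sit relative to the two tails and the body---is where the real care is required, while each individual case reduces to the same base-and-height comparison used throughout the preceding lemmas.
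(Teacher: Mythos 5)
Your proposal transplants the sweep-and-substitute argument of Lemma~\ref{lem:endp} (replace the offending vertex $a$ by the shared vertex $a'$ of $l_a$ and argue that no triangle grows), but the step your whole plan hinges on is missing, and it is exactly the step that does not survive the transfer to the tail regions. Your justification that moving $a$ to $a'$ ``cannot increase the area of any other triangle through $a$'' by ``convexity of $CH(P)$'' is not a valid principle: $a$ and $a'$ lie on the same vertical segment, so the substitution is a vertical move, and a vertical move strictly increases the area of every triangle $axy$ for which $a$ already lies on the side of the line through $x$ and $y$ toward which it moves. In Lemma~\ref{lem:endp} the paper has to confront precisely this (the case where $b$ and $c$ both lie on the top chain), and its escape exploits the fixed extreme points; in a tail region the two chains have crossed, the region is bounded by an extreme segment whose candidate point is itself a free parameter, and a chain vertex $a'$ can be farther from a prospective base than $a$ is. You flag this yourself as ``where the real care is required,'' but you supply no argument for it, so the core of the proof is absent rather than merely deferred.

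The paper's own proof avoids per-triangle comparisons entirely and uses a different mechanism. Instead of moving $a$ to $a'$, it deletes $a$ as a vertex of the hull of the candidate points and re-places the candidate point of $l_a$ (and of any other affected segment) inside the shrunk hull. Two observations then close the argument in one stroke: first, the objective value is the area of the largest triangle inscribed in $CH(P)$, and removing a vertex of a convex hull can only decrease the area of its largest inscribed triangle; second, the ear cut off by deleting $a$ lies entirely within the tail region, so $l_a$ and every segment that met the ear still intersect the remaining hull, i.e., feasibility of the placement is preserved. This produces a placement at least as good whose largest triangle avoids the non-shared point, the desired contradiction. It is this removal-plus-monotonicity idea (or an equivalent substitute for it) that your proposal lacks; without it, the ``case bookkeeping'' you postpone cannot be completed along the lines of Lemma~\ref{lem:endp}.
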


\begin{proof}
Suppose the lemma is false, then the smallest largest-area triangle $abc$  selects a vertex $a$ on $l_a$, such  that $l_a$ shares some vertex on e.g., the right tail region. 
First suppose  $a$  is located interior to the right tail region. W.l.o.g, suppose $l_a$ shares  some vertex $l_a^-$ on the bottom chain. Let $l_a^+$ be the upper endpoint of $l_a$ and let $a$ be any point of $l_a$  with  a lower $y$-coordinate than $l_a^+$ and interior to the right tail region (as illustrated in Figure~\ref{topbot}(b)).
We know that the convex hull of the candidate points has a vertex $r$ on $l_r$, and  $abc$ is the largest-area  true triangle among other possible triangles. Consider a line $\ell$ through $a$ and parallel to $bc$.  If we sweep $\ell$ toward $bc$ we will visit $r$. It is easy to observe that removing vertices from the convex hull of a set of points may only decrease the area  of the largest-area triangle that is inscribed in the convex hull (notice  that the smallest largest-area triangle is the largest triangle which is inscribed in the convex hull of the candidate points). By the optimality of the area of $abc$ we know  that all the other triangles which are rooted at $a$ are not larger than $abc$. Thus we just omit the vertex $a$ of the convex hull,  while the convex hull  is still  intersected by $l_a$ and by all the other segments which were intersected by the removed  ear of the hull, since the removed ear is completely located within the right tail region. Thus $abc$ could not be the smallest largest-area triangle. Contradiction.



Now suppose $a$ is located  outside the  tail region and w.l.o.g above it. Suppose $l_a$ shares a vertex $l_a^-$ on the bottom chain (obviously if it also shares $l_a^+$, $l_a^+$ is a better choice with respect to $a$ and we are done). In which case again $l_a$ is intersected by the convex hull of the candidate points. Once again by removing the vertex $a$ of the convex hull,   the convex hull  is still  intersected by $l_a$ and by all the other segments which were intersected by the removed  ear of the hull.  Thus again $abc$ could not be the smallest largest-area triangle. Contradiction.
\end{proof}

\begin{figure}
	\includegraphics{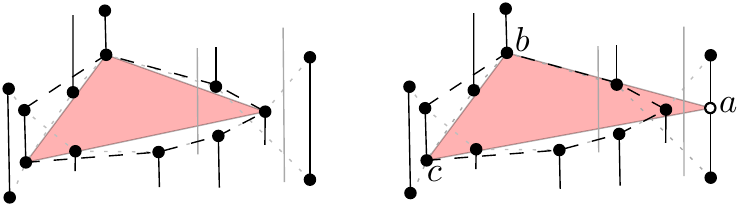}
	\centering
	\caption {(a) The smallest largest-area triangle which is inscribed in the convex hull (dashed polygon) of the top and bottom chain excluding ${l_l,l_r}$.
		(b) Triangle $abc$ is the solution of \mima  which selects one vertex on the rightmost  segment. }
	\label{lefrig}
\end{figure}

\subsubsection{Algorithm}
Suppose we are given a set $L$ of line segments. After sorting $L$ we can compute the top and bottom chain in linear time. Also, we can compute the set $L'$, the convex body region, the tail regions and also their vertices $P'$  in linear time. 
 In the following, we consider the procedure of finding the solution to the \mima problem of $L'$ in several possible configurations.  

We proceed by parameterizing the problem in the location of the points on the leftmost and rightmost line segment. The area of each potential triangle is linear in these parameters, yielding a set of planes in a $3$-dimensional space. We are looking for the lowest point on the upper envelope of these planes; see Figure~\ref {arran}(b).
  In the case where only one of the leftmost or rightmost line segments share a vertex on the optimal solution, $O(n^2)$ triangles  may have a common third vertex on the leftmost or rightmost line segments; see Figure~\ref {arran}(a).
  Since  the lowest vertex of the upper envelope can be found in  $O(n^2 \log n)$ time, the solution of the \mima problem in this case can be computed in $O(n^2 \log n)$ time. 

From Lemma~\ref{minma} we know that for each line segment $l_i$ $\in$  $L\setminus \{L' \}$, we can choose an arbitrary point  $p_i \in l_i$ within the convex hull of the convex body and the determined candidate points of the extreme segments.

If the smallest largest-area triangle does not use the leftmost and rightmost line segments, by the same argument as in Lemma~\ref{landdlp}, the smallest largest-area triangle is inscribed in the convex hull of the top and bottom chain excluding the extreme segments (as illustrated in Figure~\ref{lefrig}). But since  here the convex hull of the candidate points is not necessarily a true convex hull (see Figure~\ref{lefrig}(a)),  we apply Algorithm~\ref{alg:impquad}. This procedure is outlined in Algorithm~\ref{alg:fixedlr}.

\begin{algorithm}[H]
	\caption{\mima with one vertex at extreme segments}
	{\bf Procedure} {\sc One-ExtrSeg($l_e$)} \\
	{\bf Input} {$L=\{l_1,\ldots,l_{n}\}$: A set of vertical segments}\\
	{\bf Output} {: smallest largest-area true triangle with one vertex at $l_e$}\\
	$I$=convex body of $L$\\
	$T_l,T_r$=tails of $L$\\
	$P'$=vertices of  $CH(I \cup T_l \cup T_r \setminus \{ \{l_l,l_r\}\setminus l_e\})$\\
	$\textbf{for}$ $\textbf{each}$ pair $x,y \in P'$\\
	\hspace{0.5cm} compute the area function $f(x,y)$ with third vertex at $l_e$\\
	\hspace{0.5cm} add $f(x,y)$ to an arrangement $A$\\
	$\textbf{end}$\\
	$v$= the lowest vertex of the lower envelop of $A$\\
	$Area(v)$=the determined area on the arrangement $A$ by $v$\\
	\Return  $Area(v)$ \\
\end{algorithm}

\begin{algorithm}[H]
	\caption{\mima with two vertices at extreme segments}
	{\bf Procedure} {\sc Two-ExtrSeg($l_l,l_r$)} \\
	{\bf Input} {$L=\{l_1,\ldots,l_{n}\}$: a set of vertical segments}\\
	{\bf Output} {: smallest largest-area true triangle with two vertices at $l_l$ and $l_r$}\\
	$I$=convex body region of $L$\\
	$P'$=vertices of  $CH(I)$\\
	$\textbf{for}$ $\textbf{each}$ $x \in P'$\\
	\hspace{0.5cm} compute the area function $f(x)$ with two vertices at  $l_l,l_r$\\
	\hspace{0.5cm} add $f(x)$ to an arrangement $A$\\
	$\textbf{end}$\\
	$v$=lowest vertex of the lower envelop of $A$\\
	$Area(v)$=the determined area on the arrangement $A$ by $v$\\
	\Return  $Area(v)$ \\
\end{algorithm}

\begin{algorithm}[H]
	\caption{\mima  where the extreme line segments are not single points}
	\label {alg:fixedlr}
	{\bf Input} {$L=\{l_1,\ldots,l_{n}\}$: a set of vertical segments}\\
	{\bf Output} {: smallest largest-area true triangle}\\
	$U$=top chain of $L$\\
	$L$=bottom chain of $L$\\
	$I$=convex body of $U$ and $L$\\
	$T_l,T_r$=tails  of $L$\\
	$P'$=vertices of  $CH(I \cup T_l \cup T_r \setminus \{l_l,l_r\}$)\\
	\Return\textsc Max({\sc Largest-Inscribed-Triangle($P'$)},	{\sc One-ExtrSeg($l_l$)},	{\sc One-ExtrSeg($l_r$)},	{\sc Two-ExtrSeg($l_l,l_r$)})\\
	
\end{algorithm}

\subsection{Correctness Proof}
 The correctness of the algorithm comes from the correctness  of Lemma~\ref{minma},  which states that we can find the optimal position of the candidate points on the line segments which are located between the leftmost and  rightmost line segments, and  on the leftmost and the rightmost line segments as well.
 
 If the smallest largest-area triangle does not use  the leftmost and  rightmost line segments, changing the position of  its vertices on their segments may only increase its area. 
If   the smallest largest-area triangle does use the leftmost and/or rightmost line segments, as it is the largest possible triangle among all the other possible triangles, its area determines the optimal position of the vertices on the leftmost and/or rightmost line segments.  Thus, changing the position of its vertices on the leftmost and/or rightmost segments can just increase its area. Such points are in balance between the area of at least two triangles, they could move in one direction to decrease the area of one triangle but only by increasing the area of another triangle. Thus, the algorithm finds the optimal solution.

\begin{figure}
 	\centering
\includegraphics{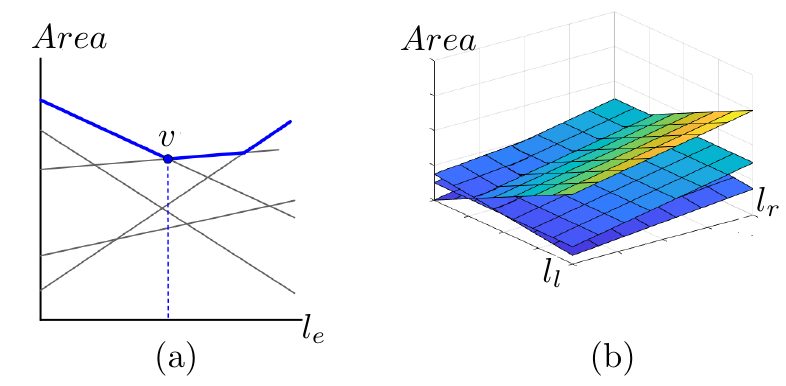}
\caption{ (a) Image of  $v$ on $l_e$ determines the point that minimizes the area of all possible triangles with one vertex on $l_e$, where $l_e \in \{l_l,l_r\}$. (b) A vertex  on the upper envelope of  the  half planes determines the point that minimizes the area of all possible triangles with two vertices on $l_l$ and $l_r$.}
\label{arran}
\end{figure}

\begin{theorem} Let  $L$  be a set of $n$ imprecise points modeled as parallel line segments. The solution of the problem \mimi can 	be found in $O(n^2 \log n)$ time.
\end{theorem}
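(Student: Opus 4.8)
The plan is to observe that, as literally stated, this theorem follows immediately from a result already proved earlier in the paper. In Section~\ref{sec:minmin} we showed that \mimi can be solved in $O(n^2)$ time, and since $O(n^2) \subseteq O(n^2 \log n)$ the bound claimed here is a weaker consequence of that very algorithm. Hence no new machinery is needed: it suffices to invoke the $O(n^2)$ procedure and note that its running time already satisfies the asserted bound with room to spare. For completeness, the remaining paragraphs recap why that procedure is correct and why it runs in the stated time.

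By Lemma~\ref{mina}, whenever the smallest-area true triangle has positive area all three of its vertices lie at endpoints of the segments, so the search reduces to the endpoint set $Z$. First I would test, in $O(n^2)$ time by the topological sweep of the dual strip arrangement described in Section~\ref{sec:minmin}, whether three collinear points lie on three distinct segments; if so the optimum of \mimi is zero and we stop. Otherwise I would build the dual arrangement $A(Z)$ of the $2n$ lines obtained from the endpoints under the transformation $T(p)=p_x x-p_y$. Each vertex of $A(Z)$ is the crossing of two dual lines coming from two distinct segments; because this duality preserves vertical distance, the third vertex minimizing the triangle area on that base corresponds to the dual line lying immediately above or below the vertex that belongs to a genuinely distinct third segment.

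Walking $A(Z)$, stored as a doubly connected edge list, visits all $O(n^2)$ vertices together with their local neighbourhoods in $O(n^2)$ total time, and taking the minimum of the triangle areas found at the vertices yields the smallest-area true triangle, hence the \mimi value. The only subtlety, already noted in Section~\ref{sec:minmin}, is that the dual line immediately above or below a vertex may belong to one of the two segments already in use at that vertex; one then continues moving up or down, possibly crossing into an adjacent face of $A(Z)$, until reaching a line of a distinct third segment. I expect the main point to get right to be the charging argument that keeps this ``continue past repeated segments'' step from inflating the cost: each such continuation is charged to edges of the arrangement that are traversed only a constant number of times overall, so the amortized running time stays $O(n^2)$---comfortably within the $O(n^2 \log n)$ asserted by the theorem.
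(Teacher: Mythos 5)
Your literal reading is technically defensible but misses what the theorem is actually asserting. The statement contains a typo: although it says \mimi, it appears at the end of Section~\ref{sec:GC}, immediately after the correctness discussion of the algorithm for minimizing the largest-area triangle when the extreme regions are genuine segments, and the paper's results summary claims precisely this bound for \mima (``for arbitrary length parallel line segments \mima can be solved in $O(n^2 \log n)$ time''). If the theorem really concerned \mimi it would be a pointless weakening of the $O(n^2)$ theorem already proved in Section~\ref{sec:minmin}, which is exactly the observation your proposal is built on. The intended content is the \mima result, and your proof supplies none of the machinery needed to establish it.

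Concretely, the paper's argument for the intended statement runs as follows: after computing the top and bottom chains, the convex body region and the tail regions, Lemma~\ref{minma} restricts the candidate vertices for all segments other than the extremes $l_l, l_r$, and the algorithm then distinguishes whether the optimal triangle uses zero, one, or two vertices on the extreme segments. In the zero case the optimum is the largest triangle inscribed in the (not necessarily true) convex hull of the chains excluding the extremes, found by Algorithm~\ref{alg:impquad} in $O(n^2)$ time; in the remaining cases the problem is parameterized by the one or two unknown positions on $l_l$ and $l_r$, the area of each candidate triangle is a linear function of these parameters, and the answer is the lowest point on the upper envelope of $O(n^2)$ such linear functions (planes in the two-parameter case; see Figure~\ref{arran}), computable in $O(n^2 \log n)$ time---this envelope computation is the source of the $\log n$ factor in the bound. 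None of this appears in your proposal. Your recap of the \mimi algorithm from Section~\ref{sec:minmin} is accurate, including the handling of repeated segments when walking the dual arrangement $A(Z)$, but it proves a claim the paper already established with a strictly better bound, not the \mima theorem this statement is intended to record.
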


\newpage
\section{ Extension to  $k>3$}  \label{sec:generalk}
 It is natural to ask the same questions posed in Section~\ref{sec:problemdef} for polygons with $k>3$ sides, in which case we look for tight lower bounds and upper bounds on the area of the smallest or largest convex $k$-gon  with corners at distinct line segments. 
\begin{figure}
	\includegraphics{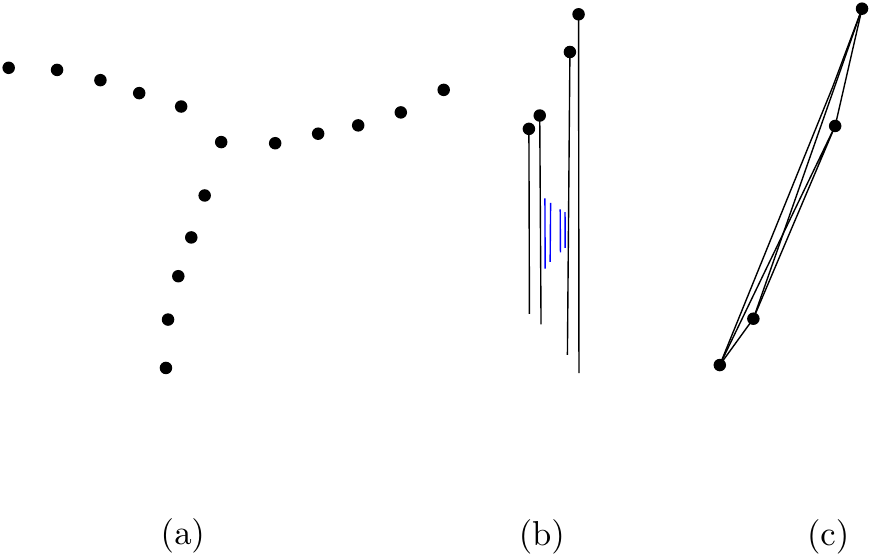}
	\centering
	\caption {(a)  A set $P$ of $n=3k$  points, in  which there is no convex  $k$-gon determined by a subset  of $P$ with any value of $k \ge 5$. (b) An example of a  set  of parallel line segments, in which at most $k=4$ points  can be selected at the endpoints to make a convex $k$-gon. Note that the blue short line segments at the middle of the long line  segments are rescaled copies of the long  segments. This configuration can be repeated arbitrarily. 
		(c) The close-up of   upper endpoints of  long line segments.  In any configuration of the same color four line segments, at most two endpoints  at the lower or upper endpoints can contribute to the boundary of a  convex $k$-gon.   }
	\label{lesskgon}
\end{figure}
Note that even in the precise version of finding an optimal area $k$-gon, not every point set will contain a convex $k$-gon for all values of $k$, as illustrated in Figure~\ref{lesskgon}(a).
Clearly, the same is true for the imprecise version, since the points in  Figure~\ref{lesskgon}(a) can be a set of very short line segments. 



In addition, note that by insisting on using exactly $k$ vertices in our convex polygon, we may introduce unexpected behavior, forcing some vertices to lie in the interiors of their segments just to ensure there are sufficiently many vertices in convex position,  as illustrated in   Figure~\ref{lesskgon}(b). 


The question of whether  are there $k$ points in convex position or not is studied in~\cite{kgonsolved}, where the question was ``given a set $L=\{l_1,...,l_n\}$ of parallel line segments,
 decide whether or
not there exist $k$ points on distinct segments, such that all the points are in convex position''. Similarly, this problem is also studied for many other geometric planar objects in the same paper. 
The authors showed that for a set of parallel line segments, the above decision question can be answered in polynomial time, but for general regions, the problem becomes NP-hard. 

As an immediate consequence of our present results, e.g.,   Section~\ref{sec:alen},  the above decision question for some small values of $k$, (for example $k=3$) can be answered in polynomial time;
however, in general, the problem appears to quickly become difficult.

Nonetheless, we take the apparent difficulty of the problem and its unexpected results as motivation to relax the problem definition; in Sections~\ref {sec:kmama} and~\ref {sec:kmima} we study the largest convex polygon with {\em at most} $k$ vertices. Note that asking for the {\em smallest} convex polygon on at most $k$ vertices is not a sensible question; for any point set, the smallest $m$-gon   ($m<k$) has area $0$, since we chose $m=2$. So the maximum over all possible sets is still $0$.

 \subsection{\mama problem for polygons with $k>3$ sides} \label {sec:kmama}
 In this section we  study the following problem.

\emph{\mama}:  given a set $L=\{l_1,...,l_n\}$ of parallel line segments, and an integer $k$ ($k\le n$), choose a set $P=\{p_1,...,p_n\}$ of points, where  $p_i \in l_i$,  such that the size of the largest-area polygon $Q$ with corners at $P$ is as large as possible among all  choices of $P$.
See Figure~\ref{instancemaxmax} for an example. 

\begin{lemma} 
There is an optimal solution to  \mama for a convex polygon with at most $k$ vertices, such that all vertices are chosen at the endpoints of  line segments. 
\end{lemma}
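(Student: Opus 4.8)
The plan is to mimic the exchange argument of Observation~\ref{obs:atendpoints} (the triangle case $k=3$), with the extra care that moving a single vertex of a $k$-gon can destroy convexity. Suppose for contradiction that an optimal choice of $P$ yields an optimal polygon $Q$ with vertices $v_1,\dots,v_m$ ($m\le k$) in convex position on distinct segments, and that some vertex $v_i$ lies in the interior of its segment $l_{v_i}$. Write $V=\{v_1,\dots,v_m\}$ and let $v_{i-1},v_{i+1}$ be the neighbours of $v_i$ on $\partial Q$, joined by the chord $v_{i-1}v_{i+1}$.

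First I would record the key additivity fact: removing $v_i$ from a convex polygon cuts off exactly the ear $\triangle v_{i-1}v_iv_{i+1}$, so
\[
\operatorname{area}(Q)=\operatorname{area}\bigl(\operatorname{conv}(V\setminus\{v_i\})\bigr)+\operatorname{area}(\triangle v_{i-1}v_iv_{i+1}),
\]
the two pieces meeting only along the chord $v_{i-1}v_{i+1}$. Since $v_{i-1}$ and $v_{i+1}$ lie on segments with distinct $x$-coordinates, the chord is non-vertical, so the two endpoints of the vertical segment $l_{v_i}$ have different distances to the supporting line of the chord. Because $v_i$ is a strict vertex of $Q$, it lies strictly on one side of that line; let $v_i'$ be the endpoint of $l_{v_i}$ lying farthest from the line on that same side (as $v_i$ is interior, $v_i'$ is strictly farther than $v_i$).

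The heart of the argument is the area comparison. Let $V'=(V\setminus\{v_i\})\cup\{v_i'\}$ and $H=\operatorname{conv}(V')$. Then $H$ is a convex polygon whose vertices lie on distinct segments and number at most $m\le k$. Since $\operatorname{conv}(V\setminus\{v_i\})\subseteq H$ and $v_i'\in H$, the triangle $\triangle v_{i-1}v_i'v_{i+1}$ is also contained in $H$, and it lies on the opposite side of the chord from $\operatorname{conv}(V\setminus\{v_i\})$, the two regions sharing only that chord. Hence
\[
\operatorname{area}(H)\ \ge\ \operatorname{area}\bigl(\operatorname{conv}(V\setminus\{v_i\})\bigr)+\operatorname{area}(\triangle v_{i-1}v_i'v_{i+1}).
\]
Since $v_i'$ is strictly farther from the base $v_{i-1}v_{i+1}$ than $v_i$, we get $\operatorname{area}(\triangle v_{i-1}v_i'v_{i+1})>\operatorname{area}(\triangle v_{i-1}v_iv_{i+1})$, and combining with the additivity identity yields $\operatorname{area}(H)>\operatorname{area}(Q)$. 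Replacing the point of $P$ on $l_{v_i}$ by $v_i'$ and leaving all other points of $P$ unchanged therefore produces a convex polygon with at most $k$ vertices on distinct segments of strictly larger area, contradicting the optimality of $P$. Thus no optimal solution has an interior vertex, which proves the lemma.

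I expect the only real obstacle to be the convexity issue just handled: unlike the triangle case, pushing $v_i$ to $v_i'$ may turn $Q$ into a non-convex figure or one with fewer vertices, so one cannot simply substitute $v_i'$ for $v_i$ inside the same polygon. Passing to $H=\operatorname{conv}(V')$ and bounding its area from below by the ear-replacement inequality is what makes the substitution go through; the fact that we allow \emph{at most} $k$ vertices (rather than exactly $k$) is precisely what lets $H$ have fewer vertices than $Q$ without leaving the feasible class.
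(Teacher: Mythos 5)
Your proof is correct and follows essentially the same route as the paper's: replace an interior vertex by the segment endpoint lying farther from the chord joining its two neighbours, and absorb any resulting loss of convexity (or drop in vertex count) by passing to the convex hull of the new vertex set, which the ``at most $k$'' relaxation permits. The paper phrases the convexity repair as removing the offending vertices and hedges ties with an extremal choice (minimal number of non-endpoint vertices), whereas you obtain a strict area increase directly via the ear decomposition (valid under the paper's standing assumption of distinct $x$-coordinates); these are presentational differences, and your write-up is if anything the more rigorous rendering of the same argument.
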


\begin{proof}
Suppose the lemma is false. Then there exists an    $m$-gon ($m \leq k$) $Q$  with maximum possible area, and a minimal number of vertices that are not at the endpoint of their line segments.    Any vertex $p$ of $Q$  which is not located at an endpoint of its segment can be moved to one of its endpoints and increase the area of $Q$. It is possible that   $Q$ is no longer convex or some vertices of $Q$ lie  no longer  on the boundary of $Q$. But correcting these by removing these points from $Q$ can only increase the area of $Q$, contradicting the  choice of $Q$.
\end{proof}


\begin{figure}
	\centering
	\includegraphics{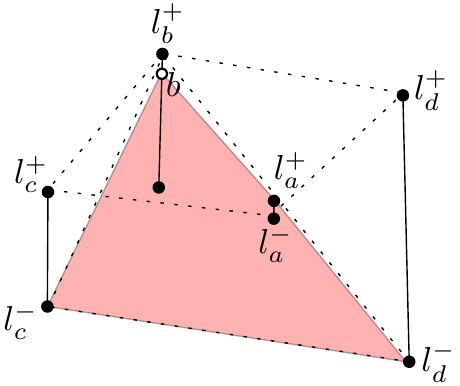}
	\caption { For $k=4$, $l_a^+bl_c^-l_d^-$ is the largest-area $k$-gon, where the  inner angle at $l_a^+$ approaches $\pi$, and $Q=l_b^+l_c^-l_d^-$ is the largest-area convex polygon with at most $k$ vertices. Note that there are some  $k$-gons which can be constructed at the endpoints (e.g., $l_b^+l_c^+l_a^-l_d^+$), but they were not reported since they are smaller than $Q$. }
	\label{instancemaxmax}
\end{figure}

\subsubsection{Algorithm} \label{mamakgon}
In the following, we provide a  dynamic programming algorithm for \mama. 
For $p \neq q$,  we define an array $A[l_p^-,l_q^{+},m]$ which denotes the area of the largest  true $m$-gon   with the bottommost vertex $l_p^-$, such that $l_q^{+}$ is the next vertex of $l_p^-$  on the counterclockwise ordering of the boundary of $m$-gon (see Figure~\ref{dp}).  This $m$-gon is constructed on the  largest-area $m-1$-gon $A[l_p^-,l_r^{+},m-1]$  with the bottommost vertex $l_p^-$,  where the the  triangle $l_p^-l_q^{+}l_r^+$ is the largest-area triangle among all possible choices for $l_q$.


 For finding the optimal solution, we should compute all such $m$-gons, where $m \leq k$, and the largest-area polygon for some value of $m$ will determine the largest-area convex polygon with at most $k$ vertices.   

  Note that  our dynamic programming  can also report whether there exists a convex polygon with exactly $k$ vertices to be constructed at the endpoints of the line segments~\footnote{If there is no  candidate for $l_q$ to increase the area of the current $m$-gons for some values of $m<k$, then there is no larger $m+1$-gon to be constructed at the endpoints.
  	}. As said above,  in the original setting of our dynamic programming, we will report the largest convex polygon with at most $k$ vertices.

Let  $H_{l_p^-,l_q^+}$ be the half plane to the left of supporting line of $\overrightarrow{l_p^-l_q^+}$. We sort for each point $l_r^+$, all the other endpoints in clockwise order around $l_r^+$ and store these orderings. Computing the clockwise order of all points around any point in a set  can be done in quadratic time totally~\cite{Edel}.
For any $l_p^-$ and $l_r^+$ we  consider all the line segments $l_q$  which have an endpoint that is sorted clockwise around the point $l_r^+$. If $l_r^+$ is located to the right of the supporting line of  $\overrightarrow{l_p^-l_q^+}$, then no new point can be used.  
The decision about the candidate endpoint of $l_q$ will be made according to the  direction of the edge $l_p^-l_r^+$  (see Figure~\ref{dp}). 

Note that $A[l_p^-,l_r^-,m]$  is  also  considered. We suppose (w.l.o.g) the optimal $m$-gon selects the endpoint  $l_r^+$.  
 We will start by initializing all the values of the array by zero, as the area for  any $m=2$ is zero. Then we have


$$A[l_p^-,l_q^+,m]=max_{l_r^{+} \in H_{l_p^-,l_q^+}}(A[l_p^-,l_r^{+},m-1]+area(l_p^-l_q^+l_r^+))$$

Moreover, for any computation $A[l_p^-,l_q^+,m]$ for $m>3$, we check  for convexity, that may reduce the value of $m$. 
Note that  if there exists a non-convex at most $m$-gon with area $A$, then there also exists a convex polygon with area more than $A$ and still at most $m$ vertices. 

 It is obvious that the above recursion can be evaluated in linear time for each value of $l_p^-$, $l_q^+$ and $l_r^+$. Thus it will cost $O(kn^3)$ time and $O(n^2)$ space totally. 

\begin{theorem} Let  $L$  be a set of $n$ imprecise points modeled as parallel line segments. The solution to the   \mama problem for a convex polygon with at most $k$ vertices can be found in $O(kn^3)$ time.
\end{theorem}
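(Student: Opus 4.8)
The plan is to reduce the continuous placement problem to a purely combinatorial optimization over the $2n$ endpoints $Z$, and then to solve that by the fan-based dynamic program announced above. By the preceding lemma there is an optimal solution all of whose vertices lie at endpoints of segments, so it suffices to find, among all convex polygons with at most $k$ vertices chosen from $Z$ and using at most one endpoint of each segment (a \emph{true} polygon), the one of maximum area. First I would fix the bottommost vertex, which under the general-position assumption that distinct segments have distinct $x$-coordinates is unique and which we denote $l_p^-$; there are only $O(n)$ candidates to try.

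With the apex $l_p^-$ fixed, every convex polygon through it decomposes into a fan of triangles sharing $l_p^-$, and its area is the sum of the triangle areas. This is exactly what the array $A[l_p^-,l_q^+,m]$ records, and the recursion
$$A[l_p^-,l_q^+,m]=\max_{l_r^{+}\in H_{l_p^-,l_q^+}}\bigl(A[l_p^-,l_r^{+},m-1]+\mathrm{area}(l_p^-l_q^+l_r^+)\bigr)$$
grows the fan one triangle at a time, inserting the new neighbor $l_q^+$ of the apex while the half-plane condition $l_r^+\in H_{l_p^-,l_q^+}$ forces a convex (left) turn at $l_q^+$. Correctness rests on an optimal-substructure argument: deleting from a maximum-area true $m$-gon the vertex $l_q^+$ adjacent to the apex leaves a true $(m-1)$-gon on the same apex that must itself be of maximum area, since otherwise substituting a better one would enlarge the $m$-gon. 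To obtain the "at most $k$" version I would report the best value over all $m\le k$ and all apices, and, relying on the remark that a non-convex at-most-$m$-gon is always dominated in area by a convex polygon with at most $m$ vertices, prune any transition that destroys convexity, reducing $m$ when an intermediate vertex falls inside.

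For the running time there are $O(kn^2)$ states (an apex, an adjacent endpoint, and a size $m\le k$), each evaluated by a maximum over $O(n)$ predecessors $l_r^+$. Using the clockwise orders of all points around each point, precomputed in $O(n^2)$ time in total, every half-plane test and area evaluation costs $O(1)$, so each state is filled in $O(n)$ time and the whole table in $O(kn^3)$ time and $O(n^2)$ space, matching the bound; a final appeal to \cite[Theorem 1.1]{48} confirms that the optimum is attained with all vertices on the convex hull.

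The step I expect to be the real obstacle is enforcing true-ness \emph{globally}, since the state $A[l_p^-,l_q^+,m]$ does not record which segments have already been used. Because distinct segments have distinct $x$-coordinates, two vertices lying on a common segment are vertically aligned and are therefore the lowest and highest boundary points of the polygon at that abscissa; such a pair is thus either a vertical edge or a (generally non-adjacent) bottom--top pair at an interior $x$. Vertical edges, and the reuse of the apex segment $l_p$ (whose partner $l_p^+$ would surface as the topmost vertex), can be ruled out by $O(1)$ local checks on each inserted edge. The genuinely delicate case is a non-apex segment that contributes both of its endpoints as a non-adjacent bottom--top pair; I would close this with a short exchange argument showing that a maximum-area true polygon never needs such a configuration, so that the unconstrained table value is always realised by a true polygon. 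Making this exchange argument airtight, rather than the routine bookkeeping inside the recursion, is where I anticipate the difficulty to lie.
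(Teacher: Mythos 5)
Your proposal follows essentially the same route as the paper: the same lemma reducing to segment endpoints, the same fan-based table $A[l_p^-,l_q^+,m]$ with the identical half-plane recursion $A[l_p^-,l_q^+,m]=\max_{l_r^{+}\in H_{l_p^-,l_q^+}}\bigl(A[l_p^-,l_r^{+},m-1]+\mathrm{area}(l_p^-l_q^+l_r^+)\bigr)$, the same convexity-pruning remark for the at-most-$k$ relaxation, and the same $O(kn^3)$ time and $O(n^2)$ space accounting via precomputed angular orders. The global true-ness issue you flag at the end is genuine, but the paper resolves it no more rigorously than you do---it merely prescribes selecting, for each candidate segment $l_q$, the single endpoint dictated by the direction of the current edge $l_p^-l_r^+$---so your proposed exchange argument is, if anything, more careful than the published proof.
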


 \begin{figure}
	
	\includegraphics{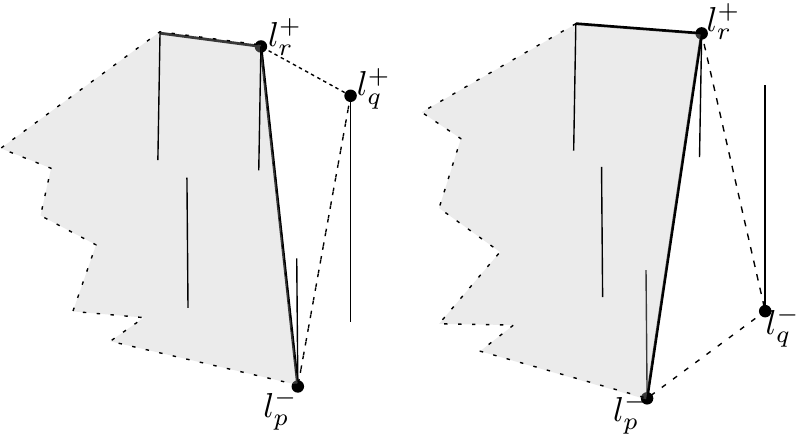}
 	\centering
 	\caption {The directed segment  ${l_p^-l_r^+}$  indicates  which of the upper or  lower endpoints of $l_q$ should be selected.}
 	\label{dp}
 \end{figure}

\subsection{\mima problem for polygons with $k>3$ sides} \label {sec:kmima}
In this section we study the following problem:

	\emph{\mima}:  given a set $L=\{l_1,...,l_n\}$ of parallel line segments, and an integer $k$ ($k \le n$), choose a set $P=\{p_1,...,p_n\}$ of points, where  $p_i \in l_i$,  such that the size of the largest-area polygon $Q$ with corners at $P$ is as small as possible among all  choices of $P$, and we extend the approach from Section~\ref{sec:minmax}. Again, we first solve the problem for the special case where $l_l$ and $l_r$ are single points. Then in Section~\ref{sec:GCk-gon} we extend the solution to the general case.

\subsubsection{Single points as the extreme regions}

For the \mima problem,  when the leftmost and rightmost line segments are fixed points, the solution of \mima with at most $k$ vertices cannot  have an area smaller than the area of the largest  at most $k$-gon which is inscribed in the convex body region, since moving the vertices of the solution among  their line segments can only increase the area.

 Similar to the case $k=3$, in this case also the solution is always a true convex polygon.\footnote{Note that  a zero area solution can only occur when the convex body region has zero area. As said before, this case can be distinguished in linear time~\cite{edelsb}.}.

 Thus,   the  algorithm presented by Boyce \etal~\cite{48} can be applied to find the largest inscribed $k$-gon in the convex body region. This algorithm runs in $O(kn+n \log n)$ time. 
If the convex body region has  $m < k$ vertices, clearly  it is not possible to minimize the size of this $m$-gon any more, and thus the  convex body region is the smallest largest-area polygon with at most $k$ vertices.

\begin{theorem} Let  $L$  be a set of $n$ imprecise points modeled as parallel line segments, where the leftmost and  rightmost line segments are  points. The solution of the problem \mima for a convex polygon with at most $k$ vertices can 	be found in $O(kn+n \log n)$ time.
\end{theorem}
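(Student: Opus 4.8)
The plan is to reduce this at-most-$k$-gon problem to a single largest-area inscribed-polygon computation, exactly as was done for triangles in Section~\ref{sec:fixed}. First I would argue that the convex body region $I$ plays here the same role it played for $k=3$. Since the leftmost and rightmost segments are fixed points, they are forced onto the boundary of $CH(P)$ for every placement $P$, and the sweeping argument behind Lemma~\ref{landdlp} (move a vertex along its segment on a line parallel to the opposite edge) applies unchanged to any inscribed convex polygon: relocating a chosen point within its segment can never decrease the area of the largest inscribed polygon. Consequently the minimum over all placements is attained when $CH(P)$ is as small as it can be, and that minimal hull is precisely $I$.

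Next I would establish achievability. By Observation~\ref{obs:intersect} every segment in $L \setminus L'$ meets $I$, so its candidate point can be dropped anywhere inside $I$ without enlarging $CH(P)$, while the segments in $L'$ contribute their shared vertices to the boundary of $I$. Hence there is a placement whose convex hull is exactly $I$, and for that placement the largest-area polygon with at most $k$ vertices coincides with the largest-area convex polygon with at most $k$ vertices inscribed in $I$. Because every vertex of $I$ lies on a distinct segment, this optimal polygon is automatically a \emph{true} polygon, matching the $k=3$ case.

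The algorithmic step is then to compute the largest-area convex polygon with at most $k$ vertices inscribed in $I$. Let $m$ be the number of vertices of $I$. Since the largest inscribed $j$-gon is monotone in $j$ (inserting any unused vertex of $I$ between two consecutive chosen ones strictly increases area), the at-most-$k$ answer is simply the largest $\min(k,m)$-gon. When $m \ge k$ I would run the dynamic-programming algorithm of Boyce \emph{et al.}~\cite{48}, in its form improved by Aggarwal \emph{et al.}~\cite{msearch}, which finds the largest-area inscribed $k$-gon on the $m$ vertices of $I$ in $O(kn + n\log n)$ time. When $m < k$ no convex $k$-gon is inscribed in $I$ at all, and since $I$ cannot be shrunk further, the convex body region $I$ itself is the smallest possible largest-area at-most-$k$-gon. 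Computing the top and bottom chains and $I$ costs only $O(n\log n)$, so the total bound is $O(kn + n\log n)$.

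The subtlety I expect to check most carefully is that the vertex-movement monotonicity underlying Lemma~\ref{landdlp} really does extend verbatim from triangles to general at-most-$k$-gons, and in particular that the ``the largest inscribed polygon uses only hull vertices'' property still justifies ignoring the interior segments of $L \setminus L'$. Since the area change produced by sliding a single vertex depends only on that vertex and its two neighbours on the hull, and since adding interior segments only adds candidate points strictly inside $CH(P)$, I expect this to go through without new ideas; the remainder is the routine bookkeeping of feeding the vertices of $I$ to the inscribed-polygon subroutine.
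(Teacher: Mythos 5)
Your proposal is correct and takes essentially the same route as the paper: reduce \mima to computing the largest at-most-$k$-gon inscribed in the convex body region (lower bound because sliding vertices along their segments can only increase area, achievability via Observation~\ref{obs:intersect}), then invoke the $O(kn+n\log n)$ inscribed-$k$-gon algorithm, returning the convex body region itself when it has $m<k$ vertices. If anything, your write-up is slightly more careful than the paper's: it makes the achievability of the convex body region as an actual hull, the trueness of the solution, and the monotonicity in $j$ explicit, and it correctly credits the $O(kn+n\log n)$ bound to the improvement of Aggarwal et al.~\cite{msearch} rather than to Boyce et al.~\cite{48} alone.
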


 \newpage
\subsubsection{Line segments as the extreme regions} \label{sec:GCk-gon}
As we saw in Figure~\ref{lefrig}(a), if the leftmost and rightmost line segments are not single points, then the  convex hull of the top chain and bottom chain (without using the endpoints of the extreme line segments) is not necessarily a true convex hull. We called this convex polygon $\delta$. 
  Assume the convex body region has non-zero area.
First suppose  the optimal solution  is inscribed in  $\delta$. If there are no $k$ different   vertices on $\delta$, then the largest-area true convex hull which is constructed on all the  vertices is the smallest largest convex polygon with at most $k$ vertices, as we cannot shrink its area anymore. In this case, the  dynamic program algorithm which is presented in Section~\ref{mamakgon}  will find the optimal solution in $O(kn^3)$ time.

Now we consider the case where the extreme line segments can share a vertex on the optimal solution. It is easy to observe that the method presented for the case  $k=3$   costs $O(n^{k-1} \log n)$ time for general values of $k$. We design a dynamic programming algorithm that can solve the problem in $O(n^8 \log n)$ time. 

 We start the description of the algorithm by a definition. We say a convex chain $C$  is a {\em supporting chain} of a convex   polygon $P$, where $C$ is an ordered subset of the vertices of $P$ in counterclockwise direction, and the   line passes through the endpoints of $C$  supporting the polygon $P \setminus  C$ on  the convex side  of $C$ (see Figure~\ref{sconvexchain}).  
\begin{figure}
	\centering
	\includegraphics{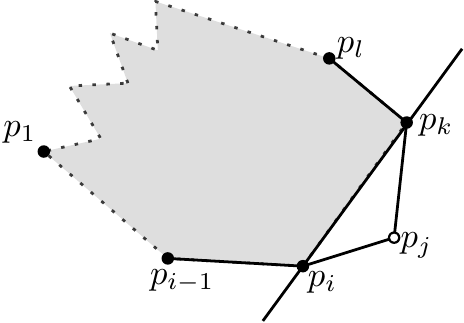}
	\caption {Chain $C=p_ip_jp_k$ supports polygon $P=p_1,p_2,...,p_n$, where the line $p_ip_k$ supports the convex polygon $p_1,...,p_{i-1}p_{i}p_kp_l,...,p_n$ on the convex side of $C$.}
	\label{sconvexchain}
\end{figure}

First suppose the optimal solution selects only one vertex at an extreme line segment, e.g., the  rightmost line segment; the other case is similar.
   Then it suffices to find the largest triangle to the right  of a clockwise convex chain $C$  consisting of three line segments, such that $C$ supports the largest-area convex polygon with at most $k-1$ vertices at the convex side of $C$ (see Figure~\ref{sconvexchain}).  

Indeed for any $k$ we look for a counterclockwise chain  $C$ with  three line segments  \footnote{In our dynamic program, we  also consider the case where only one or two line segments exist on the convex chain, since the solution might includes only one triangle.} which supports the largest-area polygon with at most $k-1$ vertices to the left of $C$. 
For each chain $C$,  the optimal position of the at most  $k$-th vertex of the optimal convex polygon which is supported by $C$ should be computed. Thus we  write our dynamic programming according to this observation. Let ${p_j}$ be the optimal position of the candidate point at the rightmost line segment. For a counterclockwise ordered set of vertices 
$p_{i-1}, p_i, p_k$  and $p_l$, where $p_{i-1} \neq p_l$  we define the array $A[p_{i-1},p_i,p_k,p_l,m-1]$ which denotes the largest-area convex polygon with at most $m-1$ vertices, which is supported by the convex chain $p_{i-1} p_i p_k p_l$, where $p_k$ is the rightmost vertex of the chain. 

$$A[p_{i-1},p_i,p_j,p_k,m']=max(A[p_{i-1},p_i,p_k,p_l,m-1]+area (p_ip_jp_k), area(p_{i-1}p_jp_l))$$
Note that for any $p_{i-1}, p_i, p_k$ and $ p_l$ we know that they are at the lower endpoints or upper endpoint of their corresponding line segments, since they are the vertices of the top and bottom chain. 
When the convex body region has non-zero area,  the above equation  considers the correct solution  which consists of at least a triangle.
 
Obviously the above equation can be evaluated in $O(mn^4)$ time. The vertex $p_j$ can   be computed in constant time for  fixed $p_i$ and $p_k$.
The triangle $p_{i-1}p_jp_l$ should also be evaluated separately   while  the triangle $p_{i-1}p_jp_l$ may cover the convex chain. In this case $m'=3$, otherwise $m'=m$.

 \begin{figure}
\centering
 	\includegraphics{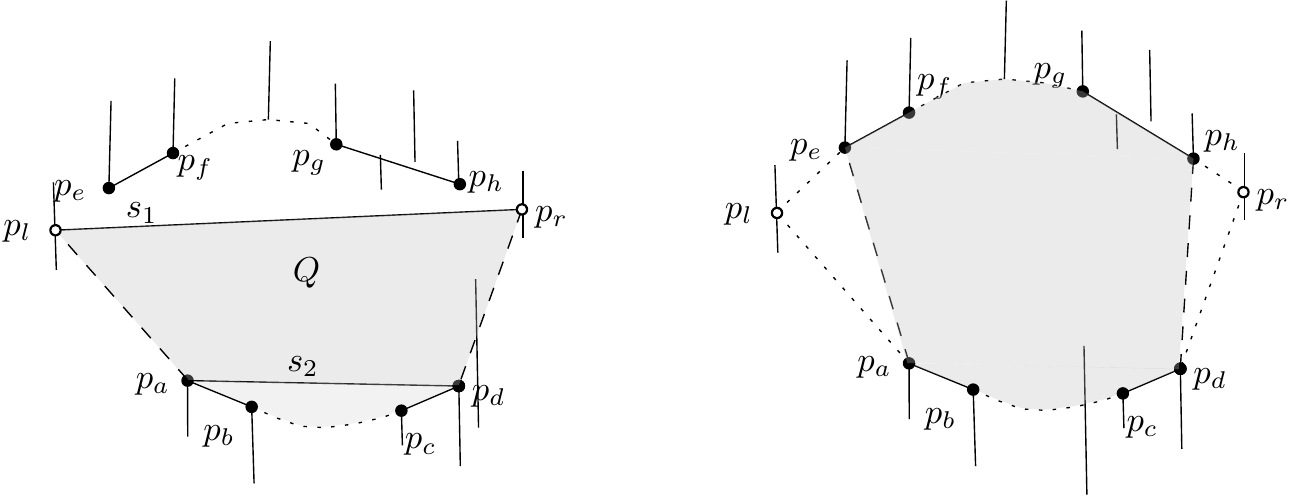}
 	\caption {(a) The smallest largest-area at most $k$-gon  is located below the line segment connecting the extreme  segments. The quadrilateral $p_ap_dp_rp_l$ has a common edge $p_ap_d$ with convex chain $p_cp_dp_ap_b$. (b) Two convex chains $p_cp_dp_hp_g$ and $p_fp_ep_ap_b$ are supporting the smallest largest-area convex polygon with at most $k-2$ vertices.  }
 	\label{dplr}
 \end{figure}

In the case where the optimal solution  can select its vertices at both of the leftmost and the rightmost line segments,  two different cases should be considered:
\begin{itemize}
\item The case where the candidate points of the leftmost and the rightmost line segments are directly connected together via a line segment $s_1$, in which case the optimal solution is located above or below $s_1$. An example is illustrated in Figure~\ref{dplr}(a). Consider a quadrilateral $Q$ with edge $s_1$. Let $s_2$ be the opposite side of $s_1$.
In this case, we should consider all the convex chains  which support a largest-area convex polygon with at most $k-2$ vertices, and have an edge coinciding with $s_2$.
We selected two vertices of $Q$ on the leftmost and rightmost line segments so that we minimize the  area of all possible quadrilaterals  $Q$  on different convex chains.

\item The case where the candidate points at the leftmost and  rightmost line segment are not directly connected together. Thus we should consider  two convex chains which are supporting a largest-area convex polygon with at most $k-2$ vertices from two sides top and bottom, and find the two other vertices at the leftmost and  rightmost line segments so that this selection minimizes the area of the convex polygon with at most $k$ vertices.  An example is illustrated in Figure~\ref{dplr}(b).

\end{itemize}
We  handle these  cases separately. First suppose  the optimal solution is located below the line segment $p_lp_r$ which is connecting the leftmost and the rightmost line segment, as illustrated in Figure~\ref{dplr}(a). Note that the optimal position of the vertices at the extreme line segments should be computed by considering all  the largest-area at most $k-2$-gons that might  be constructed on both sides of $p_lp_r$. 

We start by computing a set of surfaces, such that each surface is constructed on a line segment $p_dp_a$, which determines by connecting the endpoints of  the  convex chain $p_ap_b...p_cp_d$, which supports the  smallest largest-area polygon with at most $k-2$ vertices, as illustrated in  Figure~\ref{dplr}(a). 
This convex chain starts at  the line segments $p_ap_b$ and ends at another line segment $p_cp_d$ in counterclockwise direction.  After computing all such surfaces, we draw all of them as an area function of the positions of the vertices of $p_l$ and $p_r$ (leftmost and rightmost vertices, respectively), such that the lowest vertex in the lower envelope of the surfaces will determine the optimal position of the vertices on the leftmost and the rightmost line segments.

As said above, we also  consider all the possible surfaces which can be constructed on some convex chains above $p_lp_r$, e.g., $p_hp_g...p_fp_e$. In this case, the lowest vertex of the upper envelope can be computed in $O(n^4\log n)$ time.   
It is easy to observe that the computed solution is always true, since all the considered vertices below the line segment $p_lp_r$ are located at distinct line segments.

In the case where the extreme line segments are not connected directly, the optimal position of the vertices at the  extreme line segments can be determined by computing all  surfaces corresponding to convex polygons with at most $k-2$ vertices, which are supported  by two convex chains from left and right,  as illustrated in Figure~\ref{dplr}(b). 
   It is easy to observe that in this case the lowest vertex of the upper envelope of the surfaces can be computed in $O(kn^8\log n)$ time. Note that in the extra $O(k)$ time cost we  check whether the vertices of any potential solution are located at different line segments, and thus the computed solution is always a true convex polygon. 

\begin{theorem} Let  $L$  be a set of $n$ imprecise points modeled as parallel line segments. The solution to the problem \mima for a convex polygon with at most $k$ vertices can 	be found in $O(kn^8 \log n)$ time.
\end{theorem}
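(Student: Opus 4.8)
The plan is to establish the theorem by the same case analysis sketched in the discussion preceding it, verifying correctness and the claimed running time case by case, with the dominant cost coming from the configuration in which the optimal polygon uses points on \emph{both} extreme segments that are not adjacent on its boundary. First I would reduce the search to a finite candidate set. By the extension of Lemma~\ref{minma} to at-most-$k$-gons, every vertex of the optimal polygon other than those possibly lying on $l_l$ and $l_r$ must be a shared endpoint of the top or bottom chain, i.e.\ a vertex of $\delta$; and by the reasoning behind Lemma~\ref{landdlp} the segments of $L\setminus L'$ can be ignored, since $CH(P)$ always meets them. Hence it remains to place at most one point on $l_l$ and one on $l_r$ and to choose the remaining vertices among the $O(n)$ vertices of $\delta$, and I would split into three cases: the optimum is inscribed in $\delta$; it uses exactly one of $l_l,l_r$; or it uses both.

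The first case is handled directly by running the at-most-$k$ dynamic program of Section~\ref{mamakgon} on the vertices of $\delta$ in $O(kn^3)$ time, its correctness being inherited from that of \mama (and, if $\delta$ has fewer than $k$ vertices, $\delta$ itself is the answer). The one-extreme case is the first new ingredient, and here the key object is the \emph{supporting chain}: I would argue that fixing the combinatorial type of the optimal polygon is the same as fixing a supporting chain of three segments together with its adjacent vertices, so that the array $A[p_{i-1},p_i,p_k,p_l,m-1]$ faithfully records the largest at-most-$(m-1)$-gon supported by that chain, and the recursion stated before the theorem extends it by one triangle, while the separate term $area(p_{i-1}p_jp_l)$ accounts for the degenerate situation in which the new triangle already covers the chain. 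Convexity and the at-most-$k$ bound are enforced exactly as in the \mama program, by rejecting reflex extensions and allowing $m$ to drop, contributing only the outer factor $O(k)$.

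The heart of the matter, and the step I expect to be the main obstacle, is the two-extreme-segment case. Parameterizing the point on $l_l$ (resp.\ $l_r$) by a single real number, the signed shoelace area of any polygon of fixed combinatorial type is \emph{affine} in the pair of parameters (each parameter is a $y$-coordinate, and the shoelace sum never multiplies two $y$-coordinates together), so every candidate polygon contributes a plane, and the quantity we seek---the minimum over placements of the maximum over types---is exactly the lowest vertex of the \emph{upper envelope} of these planes. The real difficulty is to pin down \emph{which} planes must be enumerated, and to show that the enumeration is both complete and sound with respect to the true (distinct-segment) constraint. I would argue that when the two extreme vertices are adjacent, joined by a single edge $s_1$, the relevant polygons are quadrilaterals with edge $s_1$ glued along the opposite edge $s_2$ to a supporting at-most-$(k-2)$-chain, of which there are only $O(n^4)$; whereas when they are non-adjacent the polygon is pinned between a \emph{top} and a \emph{bottom} supporting at-most-$(k-2)$-chain, giving $O(n^4)\cdot O(n^4)=O(n^8)$ planes.

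Finally I would assemble the running time. Granting that each plane is produced in amortized constant time and that the lowest point of the upper envelope of $N$ planes is computable in $O(N\log N)$ time (the envelope being a convex piecewise-linear function of the two parameters), the non-adjacent sub-case dominates at $O(n^8\log n)$; the extra $O(k)$ factor needed to check convexity and to verify that all vertices lie on distinct segments yields $O(kn^8\log n)$, and the adjacent sub-case---only $O(n^4)$ planes---is subsumed. Taking the best value over all three cases then proves correctness and the bound $O(kn^8\log n)$. I expect the completeness of the supporting-chain enumeration (that the optimal combinatorial type is always captured by a pair of such chains) to be the delicate point requiring the most care.
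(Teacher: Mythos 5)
Your proposal follows essentially the same route as the paper: the identical three-way case split (inscribed in $\delta$, one extreme segment, both extreme segments), the same supporting-chain dynamic program with the array $A[p_{i-1},p_i,p_k,p_l,m-1]$ for the one-extreme case, and the same reduction of the two-extreme case to the lowest vertex of an upper envelope of $O(n^4)$ (adjacent) or $O(n^8)$ (non-adjacent) affine area surfaces, assembled into the $O(kn^8\log n)$ bound exactly as the paper does. Your explicit justification that the shoelace area is affine in the two free $y$-coordinates, and your flagging of the completeness of the supporting-chain enumeration as the delicate point, are faithful elaborations of (not departures from) the paper's argument.
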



\section{Conclusions and open problems}
In this paper we studied smallest and largest triangles on a set of points, a classical problem in computational geometry, in the presence of uncertainty.  
Many open problems  still remain, even aside from the obvious question of whether our algorithms have optimal running times.
The choice of parallel line segments as the imprecision regions already leads to a rich theory which will serve as a first step towards solving the problem for more general models of uncertainty. It is unclear to what extent our results generalize; It is conceivable that some technique (e.g. based on convex hulls) do while others (e.g. based on dynamic programming) do not.
One intriguing open question concerns the \mami problem,  where the question is to find $k$ points on distinct regions, such that the selected vertices maximizes the size of the area of the $k$-gon. We conjecture that this problem is NP-hard even in a discrete model of imprecision (in which each region is a finite set of points). 

\subparagraph*{Acknowledgments}
This work was partially supported by the Netherlands Organization for Scientific Research (NWO) under project no. 614.001.504.

\bibliographystyle{abbrv}


\bibliography{sample}








\end{document}